\newtheorem{theorem}{Theorem}[section]
\newtheorem{assumption}{Assumption}[section]
\newtheorem{lemma}{Lemma}[section]
\def\bkE{{\rm I\kern-.17em E}}
\def\bk1{{\rm 1\kern-.17em l}}
\def\bkD{{\rm I\kern-.17em D}}
\def\bkR{{\rm I\kern-.17em R}}
\def\bkP{{\rm I\kern-.17em P}}
\def\bkZ{{\bf{Z}}}
\def\bkE{{\rm I\kern-.17em E}}
\def\bk1{{\rm 1\kern-.17em l}}
\def\bkD{{\rm I\kern-.17em D}}
\def\bkR{{\rm I\kern-.17em R}}
\def\bkP{{\rm I\kern-.17em P}}
\newcommand{\pushright}[1]{\ifmeasuring@#1\else\omit\hfill$\displaystyle#1$\fi\ignorespaces}
\newcommand{\pushleft}[1]{\ifmeasuring@#1\else\omit$\displaystyle#1$\hfill\fi\ignorespaces}
\def\bkZ{{\bf{Z}}}
\def\b12{(\beta_1,\beta_2)}
\newcounter{example}
\renewcommand{\theexample}{\thesection.\arabic{example}}
\newcounter{remark}
\renewcommand{\theremark}{\thesection.\arabic{remark}}
\def\Ebb{\mathbb{E}}
\newlength{\noteWidth}
\long\def\notes#1{\ifinner
{\tiny #1}
\else
\marginpar{\parbox[t]{\noteWidth}{\raggedright\tiny #1}}
\fi\typeout{#1}}
 \def\notes#1{\typeout{read notes: #1}} %uncomment for final version
\newcommand{\ie}{i.e.\@\xspace} %%% i.e.,
\newcommand{\etal}{et al.\@\xspace} %%% e.g., Gill \etal (1986)
\newcommand{\Real}{\ensuremath{\mathbb{R}}}
\def\Ebb{\mathbb{E}}
\def\Ibb{{\mathbb{I}}}
\def\exp{\mathop{\hbox{\rm exp}}}
\def\sgn{\mathop{\hbox{\rm sgn}}}
\def\spose#1{\hbox to 0pt{#1\hss}}
\def\text #1{\hbox{\quad#1\quad}}
\def\Escr{\mathcal{E}}
\def\nthinsp{\mskip -2   mu}
\def\superstar{^{\raise 0.5pt\hbox{$\nthinsp *$}}}
\def\SUPERSTAR{^{\raise 0.5pt\hbox{$*$}}}
\def\lamstarT {\lambda^{\raise 0.5pt\hbox{$\nthinsp *$}T}}
\def\sbar{\skew3\bar s}
\def\Dscr{{\cal D}}
\def\Lscr{{\cal L}}
\def\Oscr{{\cal O}}
\def\Pscr{{\cal P}}
\def\Uscr{{\cal U}}
\def\Gscr{{\cal G}}
\def\sbar{\bar s}
\def\non{\nonumber}
\let\forallnew\forall
\renewcommand{\forall}{\forallnew\ }
\let\forall\forallnew
		\def\bkE{{\rm I\kern-.17em E}}
		\def\bk1{{\rm 1\kern-.17em l}}
		\def\bkD{{\rm I\kern-.17em D}}
		\def\bkR{{\rm I\kern-.17em R}}
		\def\bkP{{\rm I\kern-.17em P}}
		\def\bkY{{\bf \kern-.17em Y}}
		\def\bkZ{{\bf \kern-.17em Z}}
		\def\bkC{{\bf  \kern-.17em C}}
		\def\bsp{\begin{split}}
		\def\beq{\begin{eqnarray}}
		\def\bal{\begin{align*}}
		\def\bc{\begin{center}}
		\def\be{\begin{enumerate}}
		\def\bi{\begin{itemize}}
		\def\bs{\begin{small}}
		\def\bS{\begin{slide}}
		\def\ec{\end{center}}
		\def\ee{\end{enumerate}}
		\def\ei{\end{itemize}}
		\def\es{\end{small}}
		\def\eS{\end{slide}}
		\def\eeq{\end{eqnarray}}
		\def\eal{\end{align*}}
		\def\esp{\end{split}}
		\def\qed{ \vrule height7.5pt width7.5pt depth0pt}  %width4.17pt depth0pt} 
	\def\cp2problem#1#2#3#4{\fbox
		 {\begin{tabular*}{0.9\textwidth}
			{@{}l@{\extracolsep{\fill}}l@{\extracolsep{6pt}}l@{\extracolsep{\fill}}c@{}}
				#1 & & $#4 $ 
			\end{tabular*}}}
		\def\bkE{{\rm I\kern-.17em E}}
		\def\bk1{{\rm 1\kern-.17em l}}
		\def\bkD{{\rm I\kern-.17em D}}
		\def\bkR{{\rm I\kern-.17em R}}
		\def\bkP{{\rm I\kern-.17em P}}
		\def\bkZ{{\bf{Z}}}
\newcommand {\beeq}[1]{\begin{equation}\label{#1}}
\newcommand {\eeeq}{\end{equation}}
\newcommand {\bea}{\begin{eqnarray}}
\newcommand {\eea}{\end{eqnarray}}
\def\texitem#1{\par\smallskip\noindent\hangindent 25pt
               \hbox to 25pt {\hss #1 ~}\ignorespaces}
\def\bsp{\begin{split}}
		\def\beq{\begin{eqnarray}}
		\def\bal{\begin{align*}}
		\def\bc{\begin{center}}
		\def\be{\begin{enumerate}}
		\def\bi{\begin{itemize}}
		\def\bs{\begin{small}}
		\def\bS{\begin{slide}}
		\def\ec{\end{center}}
		\def\ee{\end{enumerate}}
		\def\ei{\end{itemize}}
		\def\es{\end{small}}
		\def\eS{\end{slide}}
		\def\eeq{\end{eqnarray}}
		\def\eal{\end{align*}}
		\def\esp{\end{split}}
		\def\qed{ \vrule height7.5pt width7.5pt depth0pt}  %width4.17pt depth0pt} 
\newenvironment{proof}[1][]{{\noindent \textit{ Proof}: }}{\hfill \qed \vspace{3pt}\\ }
\def\BibTeX{{\rm B\kern-.05em{\sc i\kern-.025em b}\kern-.08em
    T\kern-.1667em\lower.7ex\hbox{E}\kern-.125emX}}
\algnewcommand\INPUT{\item[\textbf{Input:}]}%
\algnewcommand\OUTPUT{\item[\textbf{Output:}]}%
\def \thetabf{{\theta}}
\def \Circ{\mathrm{circ}}
\def \xbf {\mathbf{x}}
\def \var {\mathrm{var}}
\def \QEM {\mathrm{QEM}}
\def \Tr {\mathrm{Tr}}
\def \sgn {\mathrm{sgn}}
\def \sbar{s}
\def \sbfbar{s^D}
\def \Tr{\mathrm{Tr}}
\def\BibTeX{{\rm B\kern-.05em{\sc i\kern-.025em b}\kern-.08em
    T\kern-.1667em\lower.7ex\hbox{E}\kern-.125emX}}
\begin{document}
%\history{Date of publication xxxx 00, 0000, date of current version xxxx 00, 0000.}
%\doi{10.1109/TQE.2020.DOI}

\title{Error Mitigation-Aided Optimization of Parameterized Quantum Circuits: Convergence Analysis}
\author{\IEEEauthorblockN { Sharu Theresa Jose,  Osvaldo Simeone}
%\IEEEauthorblockA { $^1$ Chalmers University of Technology, $^2$  Kings College London\\
%{\small arezour@chalmers.se,sharu.jose@kcl.ac.uk,
%durisi@chalmers.se,osvaldo.simeone@kcl.ac.uk
%}
%}
\thanks{ST Jose (sharutheresa@gmail.com) is with the Department of Computer Science, University of Birmingham and  OS (osvaldo.simeone@kcl.ac.uk) is with the Department of Engineering of King's College London. This work was partly done when ST Jose was a postdoctoral researcher at King's College London. The  work has been funded by  the European Research Council (ERC) under the European Union’s Horizon 2020 Research and Innovation Programme (Grant Agreement No. 725731).
}
}
\maketitle
\begin{abstract}
 Variational quantum algorithms (VQAs) offer the most promising path to obtaining quantum advantages via noisy intermediate-scale  quantum (NISQ) processors. Such systems leverage classical optimization to tune the parameters of a  parameterized quantum circuit (PQC). The goal is minimizing a cost function that depends on measurement outputs obtained from the PQC. Optimization is typically implemented via stochastic gradient descent (SGD). On NISQ computers, gate noise due to imperfections and decoherence affects the stochastic gradient estimates by introducing a bias. Quantum error mitigation (QEM) techniques can reduce the estimation bias without requiring any increase in the number of qubits, but they in turn cause an increase in the variance of the gradient estimates. This work studies the impact of quantum gate noise on the convergence of SGD for the variational eigensolver (VQE), a fundamental instance of VQAs. The main goal is ascertaining conditions under which QEM can enhance the performance of SGD for VQEs. It is shown that quantum gate noise induces a non-zero error-floor on the convergence error of SGD (evaluated with respect to a reference noiseless PQC), which depends on the number of noisy gates, the strength of the noise, as well as the eigenspectrum of the observable being measured and minimized. In contrast, with QEM, any arbitrarily small error can be obtained. Furthermore, for error levels attainable with or without QEM, QEM can reduce the number of required iterations, but only as long as the quantum noise level is sufficiently small, and a sufficiently large number of measurements is allowed at each SGD iteration. Numerical examples for a max-cut problem corroborate the main theoretical findings.

\end{abstract}
\section{Introduction}
\label{sec:introduction}

\subsection{Motivation}
Current noisy intermediate-scale quantum (NISQ)  processors are severely limited by the availability of a small number of qubits and by the unavoidable errors due to \textit{quantum gate noise}. Quantum gate noise refers to unwanted interactions of qubits with the environment, leading to \textit{decoherence}; as well as to the  imperfect execution of user-specified quantum gates on the physical device, leading to \textit{gate infidelity}.  While fault-tolerant computation is not feasible on NISQ processors due to the small number of available qubits \cite{aharonov2008fault}, forms of \textit{quantum error mitigation} (QEM), which do not require the addition of further qubits, are possible  \cite{temme2017error,endo2018practical}. Many promising applications of NISQ devices involve  \textit{parameterized quantum circuits (PQCs)}, whose architecture -- or \textit{ansatz} -- is  compatible with  NISQ processors. As illustrated in Fig.~\ref{fig:VQA}, variational quantum algorithms (VQAs) tune the parameters $\theta$ defining the operation of a PQC via a classical optimizer that relies on measurements from the PQC \cite{schuld2015introduction, simeone2022introduction}. This paper studies the impact of quantum gate noise on the performance of VQAs by focusing on the question of whether QEM can be effective in enhancing the performance of an optimized PQC and/or in reducing the optimization complexity.

%By keeping the quantum computer in the loop, the classical optimizer can potentially account for the limitations and imperfections of the PQC. This raises the question of whether quantum error mitigation strategies can be beneficial in improving the performance of PQCs. This is the subject of this paper.

\begin{figure*}
    %\begin{minipage}{0.45 \textwidth}
    \centering
    \includegraphics[scale=0.46,clip=true, trim= 3.5in 3.1in 2.6in 0.9in]{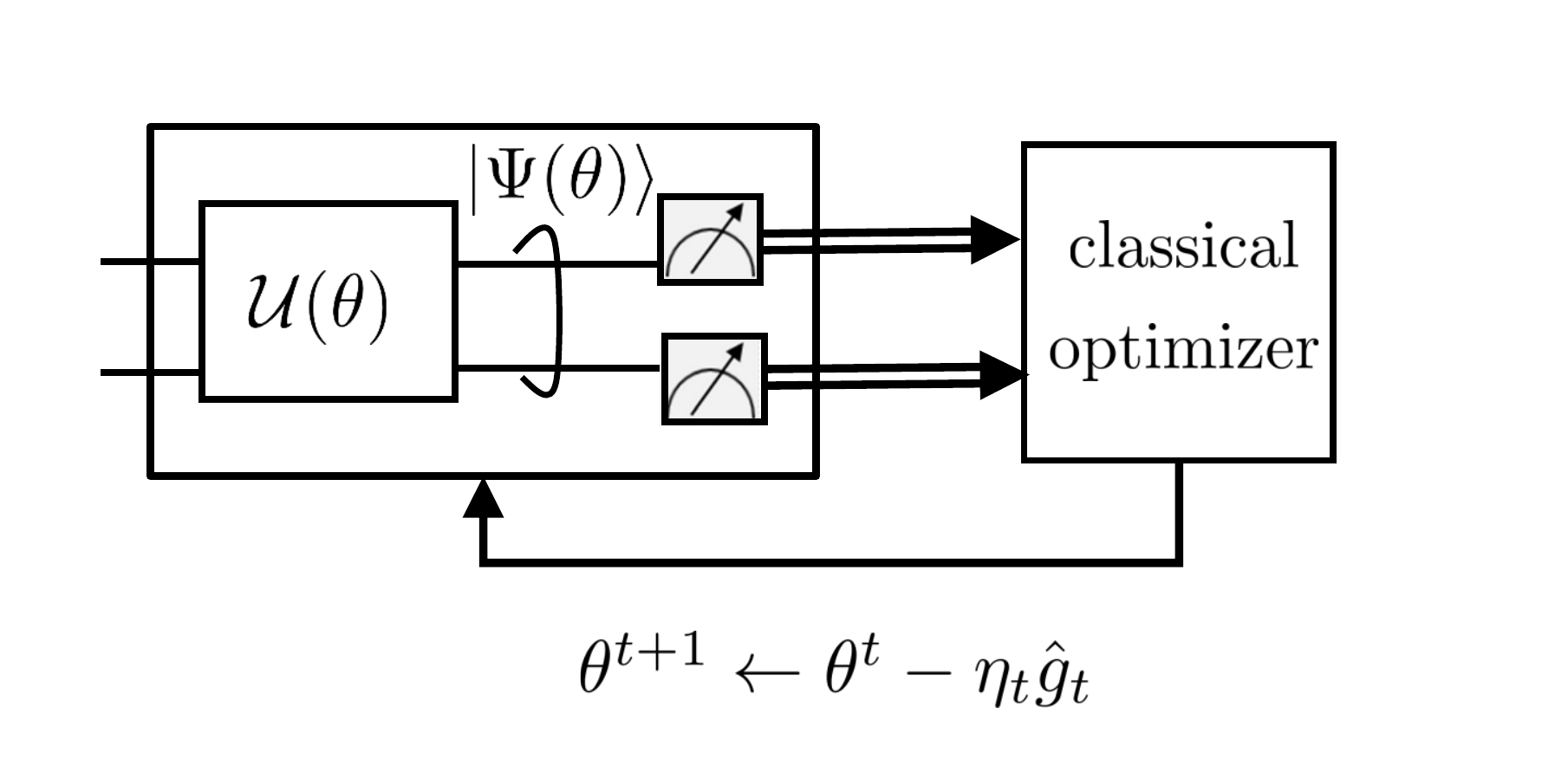}
    \caption{Illustration of the variational quantum algorithm (VQA) framework: The model parameter vector $\thetabf$, specifying the operation of a, generally noisy, parameterized quantum circuit (PQC) $\Uscr(\thetabf)$, is optimized via stochastic gradient descent (SGD) based on measurement outputs. The measurement outputs are used to produce an estimate $\hat{g}_t$ of the gradient. The estimate $\hat{g}_t$ is affected by the  inherent randomness of quantum measurements, as well as by the noise introduced by the imperfections of the quantum gates within the PQC.}\label{fig:VQA}
   % \end{minipage}%
   % \begin{minipage}{ 0.45\textwidth}
    \centering
    \includegraphics[scale=0.52,clip=true, trim= 0in 0in 0in 0in]{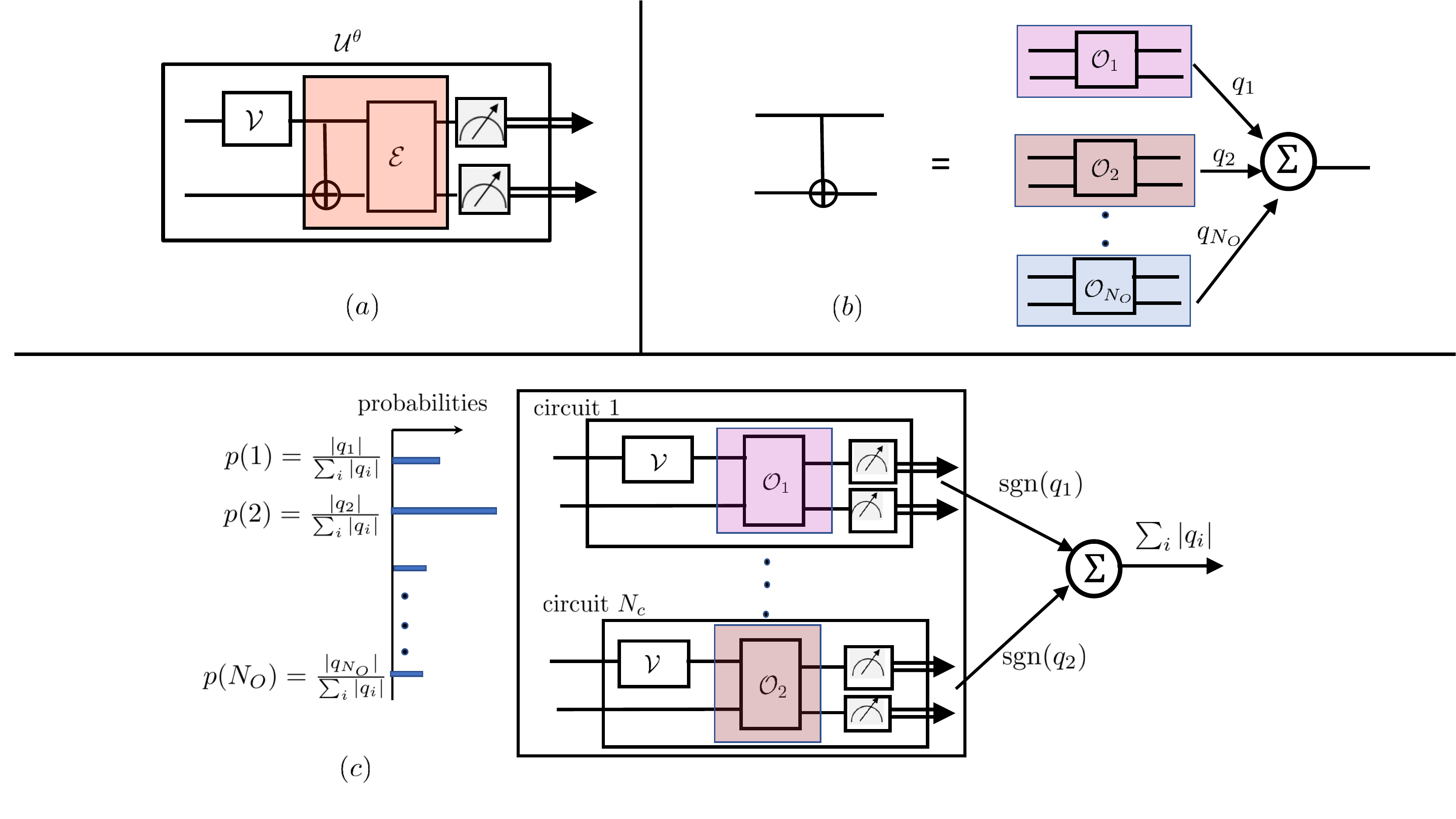}
   % \end{minipage}
    \caption{A VQA aided by quasiprobabilistic quantum error mitigation (QEM): $(a)$ a noisy PQC  where gate noise channel $\Escr$ acts on a CNOT gate; $(b)$ quasi-probabilistic representation of a CNOT gate as a linear combination of a set of operations $\{\Oscr_i\}$ implementable on the quantum computer; $(c)$ QEM approach, whereby multiple circuits are sampled, and their outputs combined, in order to approximate the operation of the  PQC in the absence of quantum gate noise.  }
    \label{fig:QEM}
\end{figure*}

To address this question, we consider the most fundamental instance of VQAs, namely the \textit{variational quantum eigensolver (VQE)} \cite{peruzzo2014variational}. The goal of the VQE is to ensure that the output state of the PQC, denoted as $|\Psi(\thetabf)\rangle$, provides a good approximation of the ground state of a given observable $H$. Mathematically, we wish to approximately solve the problem  \begin{equation}
    \thetabf^{*} = \arg\min_{\thetabf} \langle H \rangle _{\vert \Psi(\thetabf) \rangle} \label{eq:VQE}
\end{equation} of minimizing  the expected value of observable $H$ when evaluated for the output state $|\Psi(\thetabf)\rangle$. Assuming that the ansatz of the PQC is sufficiently expressive,  the solution $|\Psi(\thetabf^*)\rangle$ obtained from problem \eqref{eq:VQE} is a close approximation of the desired eigenstate. Applications of VQEs include the solution of quadratic unconstrained binary optimization (QUBO) \cite{garcia2018addressing,amaro2022filtering}, as well as problems in quantum chemistry \cite{moll2018quantum}. 

In a VQE, the optimization (\ref{eq:VQE}) of the parameters $\thetabf$ of the PQC is typically carried out in an iterative manner by means of \textit{stochastic gradient descent (SGD)}. SGD estimates the gradient $\nabla_{\thetabf} \langle H \rangle_{|\Psi(\thetabf)\rangle}$ of the expected value $\langle H \rangle_{\vert \Psi(\thetabf) \rangle}$ via measurements from the output of the PQC  \cite{schuld2015introduction,simeone2022introduction}. Such estimates are affected by the inherent randomness of quantum measurements, as well as by the noise introduced by the imperfections  of the quantum gates within the PQC and by decoherence. As a result of such quantum gate noise, as illustrated in Fig. \ref{fig:effectofQEM}-(a), the estimates of the gradient $\nabla_{\thetabf} \langle H \rangle_{|\Psi(\thetabf)\rangle}$ are \textit{biased} \cite{gentini2020noise}.

 QEM techniques provide an \textit{algorithmic} approach to mitigate quantum gate noise in NISQ devices. Unlike quantum error correction codes \cite{shor1995scheme, steane1996error,aharonov2008fault}, quantum error mitigation requires no additional qubit resources. Examples of quantum error mitigation techniques include \textit{quasiprobabilistic QEM}  \cite{temme2017error}, zero noise extrapolation \cite{temme2017error,li2017efficient},  randomized compiling \cite{wallman2016noise}, and Pauli-frame randomization \cite{knill2005quantum}. In this work, we will focus on quasiprobabilistic QEM, which will be referred as QEM for short. 
 
 As illustrated in Fig. \ref{fig:QEM}, QEM provides protection against quantum gate noise by running multiple, $N_c$, noisy quantum circuits, which are sampled from a set of circuits implementable on the NISQ device, and by post-processing the measurement outputs. As sketched in Fig.~\ref{fig:effectofQEM}-(b),  QEM can reduce the mentioned quantum noise-induced bias in the measurement of a quantum observable, while generally increasing the corresponding variance \cite{endo2021hybrid}. 

Using QEM, one can hence obtain a \textit{less biased} estimate of the gradient $\nabla_{\thetabf} \langle H \rangle_{|\Psi(\thetabf)\rangle}$. This can potentially improve the convergence of SGD when applied to the VQE problem \eqref{eq:VQE}. However, the bias reduction should be weighted against the variance increase as a function of the number of quantum measurements that one can afford at any SGD iteration. Under which conditions is it advantageous to trade an increased variance for a decrease in the bias in VQEs?

% through iteratively optimizing the parameters of the PQC. Specifically, let $H$ denote the Hermitian quantum observable whose ground state is to be approximated. At $t$th iteration, the PQC prepares a quantum state $\vert \Psi(\thetabf^t) \rangle$, whose average energy is described by the expectation $\langle H \rangle _{\vert \Psi(\thetabf^t) \rangle} = \langle  \Psi(\thetabf^t) \vert H \vert \Psi(\thetabf^t) \rangle $. The goal is to iterate the parameter vector so as to minimize this energy. In this work, we focus on a stochastic gradient descent update policy, which updates the parameter vector as 
% \begin{align}
%     \thetabf^{t+1} =\thetabf^t - \eta_t g_t
% \end{align} where $\eta_t$ denotes the learning rate at $t$th iteration and $g_t$ is a gradient estimator. 
 \begin{figure}
     \centering
      \includegraphics[scale=0.5,clip=true, trim= 3.8in 1.2in 4.5in 0.8in]{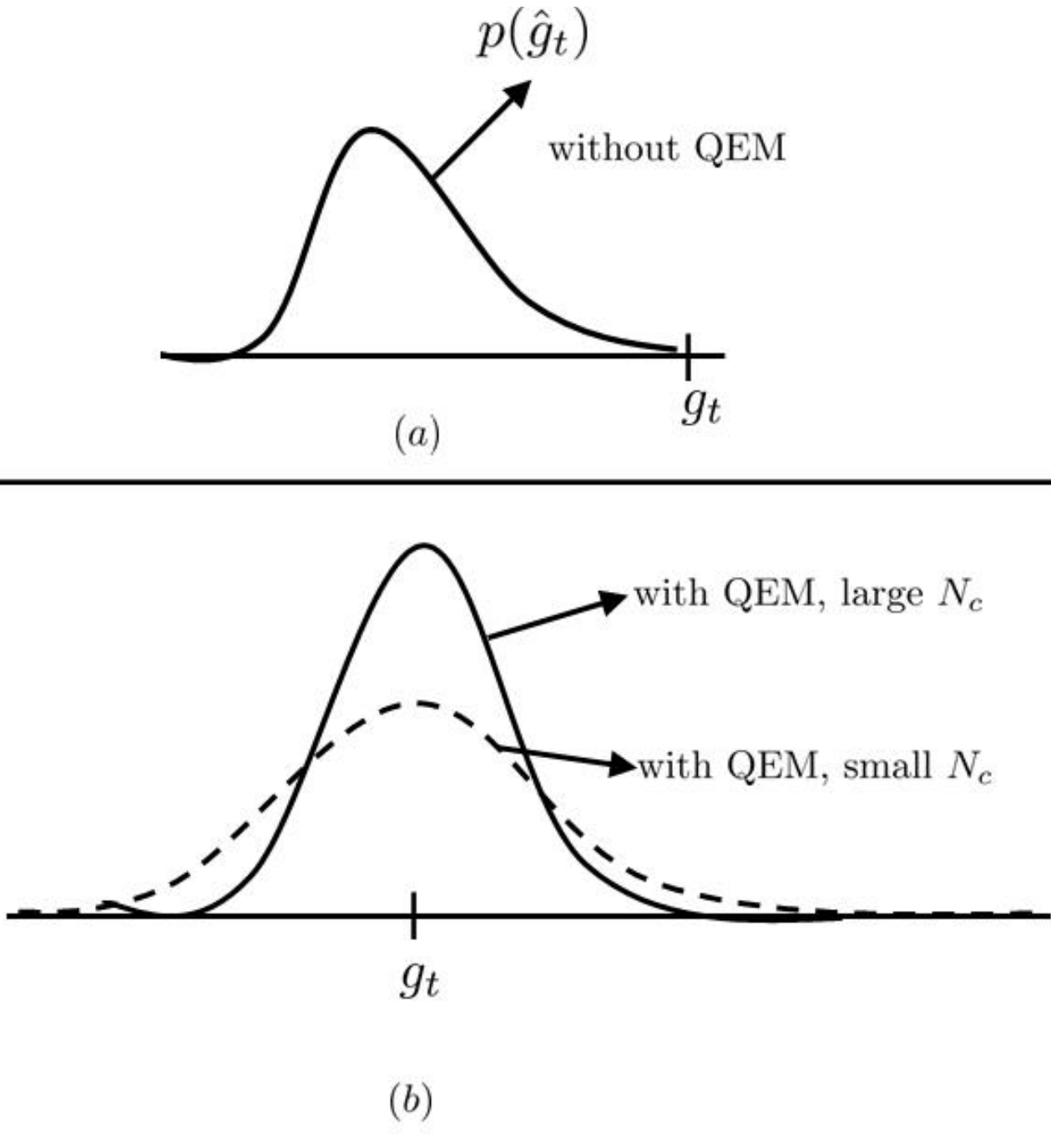}
     \caption{Illustration of the effect of QEM: (top) without QEM, the presence of quantum gate noise results in a biased gradient estimator; (bottom) employing QEM by sampling $N_c$ circuits can decrease the bias, at the cost of increase in the variance, unless $N_c$ is sufficiently large.
     }
     \label{fig:effectofQEM}
 \end{figure}
 \subsection{Main Contributions}
 \begin{figure}
 \begin{center}
 \begin{tabular}{|c|c|}
 \hline
  \textbf{ Description}  &  \textbf{Notation}\\
  \hline 
  Number of parameters of PQC & $D$\\ \hline 
  Number of measurements per iteration & $2DN_m$\\\hline 
  Number of sampled noisy circuits & $N_c$ \\ \hline 
  Noise level & $\gamma \in [0,1]$\\
  \hline 
 \end{tabular}
 \captionof{table}{Main notations used in the paper}
 \end{center}
 \end{figure}
\begin{figure*}[h!]
\begin{minipage}[c]{0.35 \linewidth}
     \centering
\begin{tabular}{|c|c|}
\hline
  \textbf{ Schemes}  &  \textbf{Iteration Complexity} \\
     \hline 
  shot-noise   only   & $\widetilde{\Oscr} \Bigl(\log \frac{1}{\delta } + \frac{V}{\mu \delta}\Bigr)$ \\
 \hline 
  shot and gate noise  & $\widetilde{\Oscr}\Bigl(\log \frac{1}{\delta -B^{\Escr}\mu} + \frac{V^{\Escr}}{\mu \delta}\Bigr)$\\
 \hline 
 shot and gate noise with QEM  & $\widetilde{\Oscr}\Bigl(\log \frac{1}{\delta } + \frac{V^{\QEM}}{\mu \delta}\Bigr)$ \\
 \hline
\end{tabular}
\end{minipage} \hfill
\begin{minipage}[c]{0.45 \linewidth}
     \centering
     \begin{tabular}{|c|c|}
\hline
   \textbf{Parameters}  &  \textbf{Scaling} \\
     \hline 
 variance $V$   & $\Oscr(D/N_m)$ \\
 \hline 
 bias $B^{\Escr}$  & $\Oscr(D \gamma)$ \\
 \hline 
 variance $V^{\Escr}$  &  $\Oscr(D c(\gamma)/N_m)$ \\
 \hline
 variance $V^{\QEM}$  & $\Oscr(c_1(\gamma)V^{\Escr})$ +$\Oscr(c_2(\gamma) D/N_c)$\\
 \hline
\end{tabular}
     \end{minipage} 
   %  \begin{minipage}[c]{0.3 \linewidth}
     % \centering
     % \begin{tabular}{|c|c|}
     % \hline
     %    \textbf{Scalars}  &  \textbf{Properties for }\\
     %    & \textbf{Depolarizing Noise}\\
     %    \hline $c(\gamma)$ & $\in [0,0.25]$\\ 
     %    \hline
     %    $c_1(\gamma)$ & $\geq 1$, non-decreasing in $\gamma$, $c_1(0)=1$ \\
     %    \hline
     %    $c_2(\gamma)$ & $\geq 1$, non-decreasing in $\gamma$, $c_2(0)=1$\\ \hline
     % \end{tabular}
    %\end{minipage}
\captionof{table}{(Left) Upper bounds on the number of iterations $T$ required to ensure a target error floor, $\Ebb[L(\thetabf^T)]-L(\thetabf^{*})$, of order $\Oscr(\delta)$ for any $\delta>0$  when only shot noise is present, as well as in the presence of both shot and gate noise with and without QEM.
%The number of model parameters is $D$. 
Parameters $V, V^{\Escr}$ and $V^{\QEM}$ quantify the respective variances of the gradient estimates;
%the learning rate is fixed as $\eta_t =\eta \leq \{\frac{1}{\Lscr}, \frac{\mu \delta}{\Lscr V}, \frac{\mu \delta}{\Lscr V^{\Escr}}\}$; 
 $\mu>0$ is a positive constant dependent on the ansatz and on the observable $H$; $B^{\Escr}>0$ accounts for the bias due to quantum gate noise; and functions $c(\gamma)$, $c_1(\gamma)$, and $c_2(\gamma)$ are detailed in Sec.~\ref{sec:properties_noisyestimator} and Sec~\ref{sec:QEM_properties}. 
 %The number of measurements done per iteration is $2DN_m$, while $N_c$ is the number of circuits sampled for QEM (see Fig.~\ref{fig:QEM} and Fig.\ref{fig:effectofQEM}). The level of quantum gate noise is quantified by parameter $\gamma \in [0,1]$ which increases with gate noise.
 } \label{tab:results}
\end{figure*}
This paper aims at providing some theoretical insights on the potential advantages of QEM for VQE by studying the convergence of SGD for problem \eqref{eq:VQE} with and without QEM. Assuming that gradients are estimated via the \textit{parameter-shift rule} \cite{schuld2015introduction}, the specific contributions are summarized in Table~\ref{tab:results}, where we report the derived upper bounds on the number of SGD iterations required to ensure convergence to a target error floor of order $\Oscr(\delta)$\footnote{In this work , we use the standard big-O notation $\Oscr(\cdot)$, with notation $\widetilde{\Oscr}(\cdot)$ further hiding the poly-logarithmic factors \cite{ajalloeian2020convergence}.} for any $\delta>0$. We compare the performance of SGD with shot noise only; as well as with shot and gate noise with and without QEM. For all schemes, as per the  notation detailed in Table 1, we fix the number of model parameters $\theta$ to integer $D$; the number of measurements per-iteration as $2DN_m$ for some integer $N_m$; and we quantify the noise level with parameter $\gamma \in [0,1]$. Noise is modelled as an Pauli quantum channel that can account for spatial correlations across the qubits \cite{braccia2022quantum}. The ansatz and the observable $H$ determine a constant $\mu$ (see Sec.~\ref{sec:convergence_assumptions} for details). The main results are described as follows.
\begin{itemize}
    \item We first quantify the impact of the inherent randomness of quantum measurements, also known as \textit{shot noise}, as well as of the bias induced by \textit{quantum gate noise}, on the convergence of SGD.   We show that, while the impact of shot noise -- quantified by the variance term $V$ in Table~\ref{tab:results} -- can be arbitrarily decreased by increasing the number, $N_m$, of quantum measurements, the presence of quantum gate noise induces a \textit{non-zero error floor} on the convergence of SGD. The error floor is caused by the bias in the estimate of the gradient, as quantified by the term $B^{\Escr}$ in Table~\ref{tab:results}, which is proved to depend on the strength of the quantum gate noise $\gamma$, on the number $D$ of free parameters of the PQC, and on the eigenspectrum of the observable $H$. Accordingly, the derived upper bound on the number of SGD iterations in the presence of gate noise diverges  for any error floor $\delta$ smaller than some level, $B^{\Escr}\mu$, dependent on the bias.
    \item  To mitigate the error floor induced by quantum gate noise, we then study the impact of  QEM on the convergence of SGD. 
  %  Table-\ref{tab:results} demonstrates the trade-off between the achievable error floor and the required number of iterations when employing QEM with fixed measurement budget and fixed number $N_c$ of noisy circuits.
   As seen in Table~\ref{tab:results}, QEM can obtain any error $\delta>0$ in a finite number of iterations thanks to the elimination of the bias on the gradient estimate caused by quantum gate noise. Therefore, in order to obtain an error smaller than $B^{\Escr}\mu$, QEM is necessary. For levels of the error larger than $B^{\Escr}\mu$, QEM can help reduce the number of required SGD iterations. According to the bounds derived in this paper, this is the case if the noise level $\gamma$ is sufficiently small and if the number $N_c$ of circuits sampled at each iteration is large enough. In fact, as illustrated in Fig. \ref{fig:effectofQEM}, QEM causes an increase in the variance of the gradient estimator, as quantified by the term $V^{\QEM}$ in Table~\ref{tab:results}. In particular, QEM adds a contribution to the variance that decreases as $1/N_c$, requiring a sufficiently large value of $N_c$ to ensure that this term does not dominate the overall performance. 
    \item We corroborate the theoretical findings with numerical experiments on the VQE problem of maximizing the weighted max-cut Hamiltonian \cite{amaro2022filtering}.
\end{itemize}

\subsection{Related Works}
Recent works \cite{kandala2017hardware,kubler2020adaptive} provide numerical evidence for the detrimental impact of the quantum gate noise on the trainability of PQCs. Importantly, reference \cite{wang2021noise} shows that, in the presence of local Pauli noise,  if the circuit depth is linear in the number of qubits,  the gradient of the cost function vanishes exponentially in the number of qubits -- a phenomenon termed \textit{noise-induced barren plateaus}. This suggests that PQC optimization can indeed benefit from QEM. The recent work \cite{kandala2019error} demonstrates the enhanced performance of VQE with error mitigation via zero noise extrapolation on a superconducting quantum processor.  Resorting to error mitigation via hidden inverses \cite{zhang2022hidden}, the work \cite{leyton2022quantum} proves experimentally that error-mitigated VQE converges faster when applied to problems in quantum chemistry.

From a theoretical standpoint, the impact of gate noise on the convergence of SGD for the VQE was studied in
\cite{gentini2020noise}, where the authors consider gradient estimators based on symmetric logarithmic derivative operators. The variance of the resulting gradient estimators is quantified in terms of the quantum Fisher information, which is shown to decrease with increasing gate noise. In contrast, our focus in this work is on SGD schemes that leverage the more commonly used parameter-shift rule-based gradient estimators. 

The authors in \cite{du2021learnability} study the trainability of PQCs for empirical risk minimization in supervised learning by investigating the impact of gate noise on the convergence of SGD. However, they do not explicitly characterize the non-zero error floor induced by the quantum gate noise. Furthermore,
none of the works summarized above account for the impact of QEM. 

The references \cite{wang2021can,wang2021noise} tackle the problem of noise-induced barren plateaus by studying the question of whether error mitigation techniques can improve the resolvability of any two points on the cost landscape. Under global depolarizing noise, QEM is shown to improve  resolvability,  provided the number of qubits is small. However, the  improvement in resolvability due to QEM degrades exponentially for large problem sizes (number of qubits, circuit depth) in the presence of local depolarizing noise.
In general, the authors conclude that the exponential cost concentration induced by noise cannot be mitigated without committing additional resources.  
%In fact, any improvement in resolvability achieved by QEM is shown to degrade with an increasing number of qubits.

%{\color{red} not so clear what the goal of the first paper is -- convergence of something?}In \cite{gentini2020noise}, the authors consider noisy quantum circuits and develop an unbiased gradient estimator based on symmetric logarithmic derivative (SLD) operators. The convergence of SGD for the above unbiased estimator is then quantified using quantum Fisher information. In contrast, we focus on the parameter-shift rule based gradient estimator, and considers the impact of shot-noise. 

%The authors in \cite{du2021learnability} study the impact of gate noise on the convergence of SGD for general variational quantum algorithms that use parameter shift rule as gradient estimators, but they do not account for error mitigation. {\color{red} so, without QEM our results coincide with theirs? if not, why?} Conversely, the impact of quantum error mitigation on the trainability of quantum neural networks has been studied in \cite{wang2021can}, which characterizes how well an error mitigation scheme can increase the resolvability of two parameters on the cost landscape relative to that of the noisy circuit. 

Overall, to the best our knowledge, no prior work has addressed the convergence performance of SGD-based VQE with the parameter shift rule and in the presence of quantum error mitigation.
%{\color{red} We note that the analysis in this work assumes a weakly convex cost landscape, and thus does not account for noise-induced barren plateaus}

\subsection{Organization}
The rest of the paper is organized as follows. In Section~\ref{sec:problem_formulation}, we detail the VQE problem  \eqref{eq:VQE}, the solution method based on SGD, as well as the considered quantum gate noise model. Section~\ref{sec:shot_noise} presents the convergence analysis of SGD for VQE when gate noise is negligible. The impact of gate noise on the convergence of SGD is studied in Section~\ref{sec:gate_noise}. In Section~\ref{sec:QEM}, we analyze the convergence of SGD with QEM. Finally, Section~\ref{sec:experiments} presents numerical experiments that corroborate our analytical findings.
\section{Problem Formulation}\label{sec:problem_formulation}
 The goal of the variational quantum eigensolver (VQE) is to prepare a quantum state, via a parameterized quantum circuit (PQC), that minimizes the expectation of an observable. This problem is also at the core of quantum machine learning algorithms, and is the focus of this paper. This section first introduces the VQE problem, as well as the standard solution method based on stochastic gradient descent (SGD). Then, we describe the noise model assumed for the quantum circuit. 
\subsection{Variational Quantum EigenSolver}
An ideal, noiseless,  PQC implements a unitary transformation $U(\thetabf)$, operating on $n$ qubits, that is parameterized by  a vector $\thetabf=(\theta_1,\hdots, \theta_{D})$, with $\theta_d\in \mathbb{R}$ for $d=1,...,D$. A common architecture for the PQC, also known as ansatz, prescribes unitaries of the form \cite{sweke2020stochastic}
\begin{align}
U(\thetabf)=\prod_{d=1}^{D} U_d(\theta_d)V_d, \label{eq:overall_unitary}
   % U(\theta)=\exp \biggl( -i\frac{\theta}{2}G\biggr)
\end{align}where the $d$th parameter $\theta_d$ in vector $\thetabf$ determines the operation of the $d$th unitary $U_d(\theta_d)$. The parameterized gate is  assumed to be given as \cite{benedetti2019parameterized} 
\begin{align}
    U_d(\theta_d)= \exp \biggl( -i \frac{\theta_d}{2}G_d\biggr),
\end{align}where $G_d \in \{I,X,Y,Z\}^{\otimes n}$ denotes the Pauli string generator. Furthermore, the unitary $V_d$ in \eqref{eq:overall_unitary} is fixed, and not dependent on parameters $\theta$. 

The noiseless PQC is applied to $n$ qubits that are initially in the ground state $\vert 0 \rangle$  to yield the parameterized quantum state 
\begin{align}
    \vert \Psi(\thetabf)\rangle =U(\thetabf)\vert 0 \rangle  \label{eq:quantumstate}.
\end{align}  The pure-state density matrix corresponding to the quantum state \eqref{eq:quantumstate} is denoted as \begin{align}\Psi(\thetabf)=\vert \Psi(\thetabf)\rangle \langle \Psi(\thetabf) \vert.
\end{align}

The VQE seeks to find the parameter vector $\thetabf \in \mathbb{R}^{D}$ that addresses the problem \eqref{eq:VQE}, which we restate here as
\begin{align}
       \min_{\thetabf \in \Real^D} \biggl \lbrace L(\thetabf):=\langle H \rangle _{\vert \Psi(\thetabf) \rangle}  \biggr \rbrace \label{eq:problem1},
    \end{align} where  the loss function $L(\thetabf)$ is the expected value \begin{align}\langle H \rangle _{\vert \Psi(\thetabf) \rangle}= \langle 0 \vert  U^{\dag}(\thetabf)H U(\thetabf) \vert 0 \rangle = \Tr(H \Psi(\thetabf)) \end{align} of an observable $H$ for the state $\vert \Psi(\thetabf)\rangle$ produced by the PQC. We denote as
    \begin{align}
        L^{*}=\min_{\thetabf \in \Real^D} L(\thetabf)
    \end{align} the minimum value of the loss function.
    
    The Hermitian matrix $H$ describing the observable can be written via its eigendecomposition as
    \begin{align}
 H= \sum_{y=1}^{N_h}  h_y \Pi_y, \end{align} where  $\{h_y\}_{y=1}^{N_h}$ denote the $N_h \leq 2^n$ distinct eigenvalues of matrix $H$, and $\{\Pi_y\}_{y=1}^{N_h}$ are the projection operators onto the corresponding eigenspaces. Consequently, the loss function $L(\thetabf)$ in problem \eqref{eq:problem1} can be expressed as
\begin{align}
   L(\thetabf)=\sum_{y=1}^{N_h} h_y  \Tr(\Pi_y \Psi(\thetabf)) \label{eq:problem2}.
\end{align}

%can be written as nd we denote the corresponding pure-state density matrix as $\Psi(\theta)=\vert \Psi(\theta)\rangle \langle \Psi(\theta) \vert$. Here $n$ is the number of qubits, and for brevity, we may explicitly avoid using $\otimes n$. Furthermore, $U(\theta) =\prod_{l=1}^L U_l(\theta_l)$ is the product of unitary operators across each of the $L$ layers. Specifically, $U_l(\theta_l)$ is composed of parameterized single qubit and two-qubit unitary gates acting in the $l$th layer of the circuit. This can be written as $U_l(\theta_l)=[\otimes_{j=1}^N U_{l,j}(\theta_{l,j})]U_{ent}$, comprising of parameterized unitary gates $U_{l,j}(\theta_{l,j})$ acting on the $j$th qubit, and $U_{ent}$ denoting an entanglement layer of 2-qubit gates. Further, we assume that each unitary gate $U_{l,j}(\theta_{l,j})=\exp(i (\theta_{l,j}/2) \sigma^s)$ where $\sigma^s$, for $s \in \{0,1,2,3\}$, denotes the $s$th Pauli operator ($\sigma^0=I, \sigma^1=X,\sigma^2=Y,\sigma^3=Z$). For ease of analysis, in the following analysis, we restrict the number of layers $L=1$. Note that the above structure of the ansatz account for the harware-efficient ansatz.

\subsection{Stochastic Gradient Descent}
Assuming the ideal case in which the PQC is noiseless,  problem \eqref{eq:problem1} can be addressed by following a hybrid quantum-classical optimization approach. Accordingly,  $N_m$ measurements of the observable $H$ are made for the output state $\vert \Psi(\thetabf) \rangle$ of PQC, producing samples $\{H_j\}_{j=1}^{N_m}$ with $H_j \in \{h_1,\hdots, h_{N_h}\}$. From these samples, an estimate of the loss function $L(\thetabf)$ is obtained as
\begin{align}
    \hat{L}(\thetabf)=\widehat{\langle H \rangle}_{\vert \Psi(\thetabf)\rangle}=\frac{1}{N_m}  \sum_{j=1}^{N_m} H_j, \label{eq:shotnoise_observable}
\end{align} where the ``hat" notation is used for empirical estimates. The empirical estimate \eqref{eq:shotnoise_observable} is used as input to classical optimization algorithm. 

In this work, we focus on the standard implementation of VQE based on SGD optimization. Starting from an initialization $\thetabf^{0} \in \mathbb{R}^D$, SGD performs the  iterative update of the parameter $\thetabf$ as
\begin{align}
    \thetabf^{t+1}&=\thetabf^t -\eta_t \hat{g}_t ,\label{eq:SGD}
\end{align} for iterations $t=1,\hdots, T$, where $\hat{g}_t$ represents a stochastic estimate of the gradient
\begin{equation}\nabla L(\thetabf)|_{\thetabf=\thetabf^t}=\begin{bmatrix}
\frac{\partial L(\thetabf)}{\partial \theta_1} \\
\vdots\\
\frac{\partial L(\thetabf)}{\partial \theta_D} \label{eq:gradient_vector}
\end{bmatrix}_{\thetabf=\thetabf^t}\end{equation} of the loss function $L(\thetabf)$ at the current iterate $\thetabf=\thetabf^t$, and $\eta_t>0$ denotes the learning rate at the $t$th iteration. We will detail how to obtain the estimate $\hat{g}_t$ as a function of measurements of observable $H$ of the form \eqref{eq:shotnoise_observable} in the next sections. 
%{\color{red}The $j$th entry of the gradient vector$\Bigl[\nabla \langle H \rangle_{\vert \Psi_{\theta}\rangle}\Bigr]_j=\frac{\partial \langle H \rangle_{\vert \Psi_{\theta}\rangle}}{\partial \theta_j}$ denotes the vector of partial derivatives.}
\subsection{Noisy Quantum Gates}\label{sec:noisy_gates}
In the current era of NISQ hardware, the implementation of the unitary gates  that determine the operation of the PQC in \eqref{eq:overall_unitary} is subject to quantum gate noise due to decoherence as well as gate infidelities.
Therefore, a quantum computer can in practice only implement a given set $\Omega$ of, generally
noisy, quantum operations. 

The set $\Omega=\{\Oscr_i^{\theta} \}$ consists of a set of gates $\{\Oscr_i^{\theta}\}$ that can be implemented  on the quantum device. The notation $\Oscr_i^{\theta}$ emphasizes the possible dependence of operation $\Oscr$ on a model parameter $\theta$. Each gate $\Oscr_i^{\theta}$ can be modelled as a \textit{quantum channel}, that is, as a completely positive trace preserving (CPTP) map between density matrices. Accordingly, we can write the corresponding maps as $\Oscr_i^{\theta}(\rho)=\rho'$, where $\rho$ is the input density matrix and $\rho'$ the output density matrix \cite{nielsen2002quantum}.  The operations in set $\Omega$ can be determined via
quantum gate set tomography as detailed in \cite{endo2018practical}.

Each  unitary gate $U_d(\theta_d)V_d$ in the ideal, noiseless, PQC \eqref{eq:overall_unitary} corresponds to the noiseless quantum channel \begin{align}\rho'=\Uscr_d^{\theta_d}(\rho)=V_dU_d(\theta_d)\rho U_d(\theta_d)^{\dag}V_d^{\dag}.\label{eq:gate_operation}\end{align} Thus, the operation of the overall noiseless PQC $U(\thetabf)$ in \eqref{eq:overall_unitary} on the initial state $\rho_0=\vert 0 \rangle \langle 0 \vert $ can be expressed as the composition of the mappings
\begin{align}
    \Uscr^{\thetabf}(\rho_0)=\Uscr^{\theta_{D}}_{D} \circ \hdots \circ \Uscr^{\theta_1}_1(\rho_0),\label{eq:overallunitaryoperation}
\end{align}where $\circ$ indicates the composition operation.

%Each operation $\Oscr_{\beta}(\rho)$ is a completely positive trace preserving (CPTP) mapping between density matrices.

In practice, the quantum gates in the ideal PQC \eqref{eq:overallunitaryoperation}
%Some of the operations \eqref{eq:gate_operation} in the ideal PQC \eqref{eq:overallunitaryoperation} may be implementable on the quantum computer with minimal distortion, \emph{i.e,} we may have $\Uscr_d^{\theta_d}(\cdot)=\Oscr_i^{\theta_d} (\cdot)$ for some index $i$. This is typically the case for unitaries that apply to single qubits, \emph{e.g.,} for single-qubit Pauli gates or rotations
%\cite{endo2018practical}. In contrast, other quantum gates
can be only approximately implemented on the quantum computer, in the sense
that the set of feasible operations $\Omega$ includes only noisy versions of
such gates. Formally, given the noiseless parameterized gate $\Uscr_d^{\theta_d}(\cdot)$ in \eqref{eq:gate_operation}, we assume that the set $\Omega$ includes a noisy operation of the form \begin{align}\Oscr_i^{\theta_d}(\cdot) = \Escr \circ \Uscr_d^{\theta_d}(\cdot) = \widetilde{\Uscr}_d^{\theta_d}(\cdot), \label{eq:noisygate} \end{align} where $\Escr(\cdot)$ represents a quantum channel. Following \eqref{eq:noisygate}, we will write as $\widetilde{\Uscr}_d^{\theta_d}(\cdot)$ the noisy quantum gate corresponding to the noiseless gate $\Uscr_d^{\theta_d}(\cdot)$ in \eqref{eq:gate_operation}.
%In addition, the set $\Omega^{\Escr}$ also includes a {\color{red}set $\Gamma=\{\Gamma_{\alpha}\}_{\alpha=1}^{N_b}$ of  $N_b$ (noisy) basis operations for the $n$-qubit quantum device, as well as the set of all implementable operations $\{\Gamma_{\alpha} \Oscr_{\beta}\}$ obtained by the application of basis operation on the gates.} We detail more on this in Section~\ref{sec:QEM}.

%The noise quantum channel $\Escr$ is assumed to apply independently across unitary gates of .
%Accordingly, for a gate $\Uscr_d^{\theta_d}
%(\cdot)$ that applies to a subset $\mathcal{K}$ of qubits, we can write $\mathcal{E} = \prod_{k \in \mathcal{K}} \Escr_k$ with $\Escr_k$ denoting the quantum channel acting on the $k$th qubit.
Throughout this work, we model gate noise via \textit{Pauli quantum channels} acting on the $n$ qubits of the form
\begin{align}
    \Escr(\rho)=(1-\epsilon)\rho+\epsilon\sum_{j}E_j \rho E_j^{\dag}, \label{eq:noisemodels}
\end{align}  where $0\leq \epsilon \leq 1$ is the error probability, and the sum in \eqref{eq:noisemodels} runs over a subset of $4^n-1$ $n$-length Pauli strings $E_j$, excluding the identity matrix. The noise model in \eqref{eq:noisemodels} includes the standard model with independent channels across the $n$ qubits, and it can also more generally account for spatial correlations across $n$ qubits \cite{braccia2022quantum}. It is practically well justified when quantum gate set tomography returns a real, diagonal matrix for the Pauli transfer matrix representation \cite{mangini2021qubit}.

We write  $P_s$, with $s \in \{0,1,2,3\}$, for a single-qubit Pauli operator, where $P_0=I$, $P_1=X$, $P_2=Y$ and $P_3=Z$. With this notation,  the Kraus operators in (\ref{eq:noisemodels}) can be written as \begin{equation}E_j=\sqrt{p_j} P_{s_{j,0}} \otimes \hdots \otimes P_{s_{j,n-1}}, \end{equation} where $s_{j,k} \in \{0,1,2,3\}$ for $k=0,1,\hdots,{n-1}$; we have the equality $\sum_j p_j =1$; and the indices $s_{j,k}$ cannot be all equal to $0$.
%$P_{j,i} \in \{I,X,Y,Z\}$, and $E_j \neq \frac{1}{\sqrt{J}} I \otimes \hdots \otimes I $.
%and matrices $\{E_j\}$ with $E_j \in \frac{1}{\sqrt{4^k-1}}\{I,X,Y,Z\}^{\otimes n}/\{I\}^{\otimes n}$ represent Kraus operators satisfying the completeness relation $\sum_{j} E_j^{\dag}E_j=I$.

As an important example that we will assume in some of the theoretical derivations,  the noise model \eqref{eq:noisemodels} includes the  \textit{depolarizing noise channel}, which is obtained when the Kraus operators $E_j$ include all $4^n -1$ Pauli strings excluding the identity matrix, and $p_j=1/(4^n -1)$. Note that this corresponds to a situation in which the gate noise applies separately to each qubit.

Overall, when QEM is not applied, the actual implementation of PQC \eqref{eq:overall_unitary} on the quantum
computer amounts to the following composition of noisy gates
\begin{align}
    \widetilde{\Uscr}^{\thetabf}(\cdot) = \widetilde{\Uscr}_D^{\theta_D} \circ \hdots \circ \widetilde{\Uscr}_1^{\theta_1}(\cdot), \label{eq:noisyPQC}
\end{align}
where each operation $\widetilde{\Uscr}_d^{\theta_d}(\cdot)$, for $d=1,\hdots,D$, belongs to the family of quantum operations $ \Omega$. Accordingly, the PQC operation of \eqref{eq:noisyPQC} on the initial quantum state $\rho_0=\vert 0 \rangle \langle 0 \vert$ results in the output noisy quantum state 
\begin{align}
    \rho^{\Escr}(\thetabf)=\widetilde{\Uscr}^{\thetabf}(\rho_0). \label{eq:noisyquantumstate}
\end{align}

\subsection{Feasible Quantum Gate Set}\label{sec:feasiblegate}
%\begin{align}
 %   \Dscr(\rho)= (1-\epsilon) \rho + \frac{\epsilon}{3} (X \rho X+Y \rho Y+Z \rho Z) \label{eq:depolarizingnoise},
%\end{align}
 % the  \bit-flip noise channel with $J=2^n-1$  strings of $\{P_0,P_1\}$, \ie, $s_k\in \{0,1\}$,  excluding all-identity matrix;  the  phase-flip noise channel with  $J=2^n-1$ strings of $\{P_0,P_3\}$, \ie, $s_k\in \{0,3\}$,  excluding all-identity matrix; and the  bit-phase flip channel with $J=2^n-1$  strings of $ \{P_0,P_2\}$, \ie, $s_k\in \{0,2\}$,  excluding all-identity matrix \cite{strikis2021learning}.

Apart from the noisy gates \eqref{eq:noisygate}, in order to enable QEM, we assume that the set $\Omega$ of implementable operations also includes the noisy cascade of Pauli operations and ideal gates ${\Uscr}^{\thetabf}_d(\cdot)$. To denote an $n$-qubit Pauli operation, we henceforth use the following standard convention.
%We write  as $\Pscr_s$ with $s \in \{0,1,2,3\}$ a Pauli operation, where $\Pscr_0=\Iscr$, $\Pscr_1=\Xscr$, $\Pscr_2=\Yscr$ and $\Pscr_3=\Zscr$ with  $\Xscr(\cdot)=X (\cdot)X$.
Let $\sbar=(s_0,\hdots,s_{n-1})$ be a vector of indices with $s_k \in \{0,1,2,3\}$ for $k=0,\hdots,n-1$.  Using this notation, we denote an $n$-qubit Pauli operator as \begin{align}\Pscr_{\sbar}(\cdot)= \Pscr_{s_0}(\cdot) \otimes \hdots \otimes \Pscr_{s_{n-1}}(\cdot)\label{eq:Paulistring}\end{align} where \begin{equation}\Pscr_{s_k}(\cdot)=P_{s_k}(\cdot)P_{s_k}^{\dag}\end{equation} for $k=0,\hdots,n-1$.
The set of implementable operations  $\Omega$ then includes all the noisy gates $\widetilde{\Uscr}^{\thetabf}_d(\cdot)$, as well as the noisy concatenations $\Escr \circ \Pscr_s \circ \Uscr^{\thetabf}_d$ of Pauli operation $\Pscr_s$ on the ideal unitary for all $s \in\{0,1,2,3\}^{\otimes n}$.

\section{Impact of Shot Noise}\label{sec:shot_noise}
We start our analysis of the SGD-based optimization for the VQE problem \eqref{eq:problem1} by assuming that the quantum gate noise is negligible. This amounts to assuming that the set $\Omega$ of feasible quantum operations includes all gates $\Uscr_d^{\theta_d}(\cdot)$ in \eqref{eq:overallunitaryoperation}, and hence we have $\widetilde{\Uscr}_d^{\theta_d}(\cdot)= \Uscr_d^{\theta_d}(\cdot)$ in  \eqref{eq:noisygate}. Accordingly, the analysis in this section only accounts for the randomness due to shot noise, that is, due to the inherent stochasticity of quantum measurements. The impact of gate noise  will be studied in the next section. 
\subsection{Estimating the gradient}
With no gate noise, the $d$th component  of the gradient, $[g_t]_d=[\nabla L(\thetabf^t)]_d$, can be estimated 
via the \textit{parameter shift-rule} as \cite{schuld2019evaluating}
\begin{align}
   [\hat{g}_t]_d= \frac{1}{2} \Bigl(  \widehat{\langle H \rangle}_{\vert \Psi(\thetabf^t+\frac{\pi}{2}{e}_d) \rangle}- \widehat{\langle H \rangle}_{\vert \Psi(\thetabf^t-\frac{\pi}{2}{e}_d)\rangle}\Bigr), \label{eq:unbiasedgradient_1}
\end{align} where ${e}_d$ denotes a unit vector with all zero elements except in the $d$th position. In \eqref{eq:unbiasedgradient_1}, the quantum state $\vert \Psi(\thetabf^t +\frac{\pi}{2}{e}_d)\rangle$ is obtained by shifting the phase of $d$th parameter $\theta_d$ by $ \pi/2$; and $\vert \Psi(\thetabf^t -\frac{\pi}{2}{e}_d)\rangle$ is obtained by shifting the phase of $d$th parameter $\theta_d$ by $ -\pi/2$. Furthermore, as in \eqref{eq:shotnoise_observable}, the notation
 $\widehat{\langle H \rangle}_{\vert \Psi(\thetabf) \rangle}$ describes an empirical estimation of the expected observable $H$, obtained from $N_m$ i.i.d. measurements $\{H_1,\hdots,H_{N_m}\}$ of the observable under state $\vert \Psi(\thetabf) \rangle$.
 
 The measurements of observable $H$ return $H_j=h_y$ with probability \begin{align}
    p(y|\thetabf)=\Tr(\Pi_y \Psi(\thetabf)), \quad y \in \{1,\hdots,N_h\} \label{eq:probability_shotnoise}
\end{align} for all measurements $j=1,\hdots, N_m$.
The resulting empirical estimator of each expected value in \eqref{eq:unbiasedgradient_1} evaluates as in \eqref{eq:shotnoise_observable}. Therefore, estimating the full gradient vector \eqref{eq:gradient_vector}
%\begin{align}
 %   \widehat{\langle H \rangle}_{\vert \Psi(\thetabf)\rangle}=\frac{1}{N_m}\sum_{j=1}^{N_m} \sum_{y=1}^{N'}\lambda_y \Ibb\{Y_j=y\}. \label{eq:estimator_observable}
%\end{align}
 requires running the PQC a number of times equal to $2DN_m$.
 
Upon obtaining a solution $\thetabf^T$ after $T$ iterations, the final outcome of the VQE is evaluated as $\langle H \rangle_{\vert \Psi(\thetabf^T) \rangle}$, where the expectation is taken over a large number of measurements.
 
 \subsection{Properties of the Gradient Estimte}

% Thus, the $j$th partial derivative of gradient estimator $\hat{g}_t$ requires taking the expected observable over two shifted quantum circuits. 
 
%Assuming that $N_m$ measurements are  made on every  quantum circuit, denote $D^t_{d,+} $ (and $D^t_{d,-} $) as the set of  $N_{m}$ measured samples using the positive (and negative) phase-shifted circuits corresponding to the $d$th parameter of the PQC at iteration $t$. Then, $\Dscr^t=\{D^t_{d,+},D^t_{d,-}\}_{d=1}^{N_{g}}$ denote the total measured samples over all $N_g$ parameters at $t$th iteration. 
The estimate in \eqref{eq:unbiasedgradient_1} is unbiased, \ie, we have \begin{align}\Ebb[\hat{g}_t]=g_t=\nabla L(\thetabf^t), \label{eq:g_t}
\end{align}
where the expectation $\Ebb[\cdot]$ is taken with respect to the $2DN_m$ measurements used in \eqref{eq:unbiasedgradient_1}.  Therefore, we can decompose the gradient estimator \eqref{eq:unbiasedgradient_1} as
\begin{align}
    \hat{g}_t= g_t + \xi_t, \label{eq:gradient_estimate}
\end{align} with noise $\xi_t$ satisfying the conditions $\Ebb[\xi_t]=0$ and $\mathrm{var}(\xi_t)=\Ebb[\lVert\hat{g}_t-g_t\rVert^2]$. 
The following lemma gives an upper bound on the variance $\mathrm{var}(\xi_t)$. To this end, we first rewrite the empirical estimate \eqref{eq:shotnoise_observable} as \begin{align}
    \widehat{\langle H \rangle}_{\vert \Psi(\thetabf)\rangle} = \frac{1}{N_m}  \sum_{j=1}^{N_m} \sum_{y=1}^{N_h}h_y\Ibb\{Y_j=y\}, \label{eq:shot_noise_observable_new}
\end{align} where $Y_j \in \{1,\hdots,N_h\}$ is the random variable defining the index corresponding to the $j$th measurement so that we have the equality $H_j=h_{Y_j}$. Furthermore, we introduce the variance of the Bernoulli quantum measurement  $\Ibb\{Y_j=y\}$ as \begin{align}
    \nu(p(y|\thetabf))=p(y|\thetabf)(1-p(y|\thetabf)), \label{eq:variance}
\end{align} where $p(y|\thetabf)$ is as defined in \eqref{eq:probability_shotnoise}.
\begin{lemma} \label{lem:1}
The variance of the unbiased gradient estimator $\hat{g}_t$ in \eqref{eq:unbiasedgradient_1} can be upper bounded as
\begin{align}
    \mathrm{var}(\xi_t)  \leq  \frac{\nu N_h D \Tr(H^2)}{2N_m}:=V \label{eq:variance_noiseless},
\end{align} where $\nu \in [0,0.25]$ is  \begin{align}\nu= \max_{\thetabf \in \Real^D} \max_{y \in \{1,\hdots,N_h\}} \nu(p(y|\thetabf)) \label{eq:nu2}, \end{align} with $\nu(p(y|\thetabf))$ in \eqref{eq:variance}.
\end{lemma}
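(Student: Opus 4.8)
The plan is to bound the variance of $\hat g_t$ componentwise and then sum over the $D$ coordinates. Since $\hat g_t = g_t + \xi_t$ with $\Ebb[\xi_t]=0$, we have $\mathrm{var}(\xi_t) = \Ebb[\lVert \hat g_t - g_t\rVert^2] = \sum_{d=1}^D \mathrm{var}([\hat g_t]_d)$. Fix a coordinate $d$. From the parameter-shift expression \eqref{eq:unbiasedgradient_1}, $[\hat g_t]_d$ is $\tfrac12$ times the difference of two \emph{independent} empirical averages (the measurements at $\thetabf^t+\tfrac\pi2 e_d$ and at $\thetabf^t-\tfrac\pi2 e_d$ are collected from separate circuit runs), so $\mathrm{var}([\hat g_t]_d) = \tfrac14\big(\mathrm{var}(\widehat{\langle H\rangle}_{|\Psi(\thetabf^t+\frac\pi2 e_d)\rangle}) + \mathrm{var}(\widehat{\langle H\rangle}_{|\Psi(\thetabf^t-\frac\pi2 e_d)\rangle})\big)$.

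Next I would bound the variance of a single empirical estimate $\widehat{\langle H\rangle}_{|\Psi(\thetabf)\rangle}$. Using the rewriting \eqref{eq:shot_noise_observable_new}, each of the $N_m$ i.i.d.\ terms is $\sum_{y=1}^{N_h} h_y \Ibb\{Y_j = y\}$, so $\mathrm{var}(\widehat{\langle H\rangle}_{|\Psi(\thetabf)\rangle}) = \tfrac{1}{N_m}\mathrm{var}\big(\sum_y h_y \Ibb\{Y_j=y\}\big)$. I would expand this single-shot variance via $\mathrm{var}(\sum_y h_y \Ibb\{Y_j=y\}) = \sum_y h_y^2 \nu(p(y|\thetabf)) - \sum_{y\neq y'} h_y h_{y'} p(y|\thetabf)p(y'|\thetabf)$ (the covariance terms of a multinomial indicator vector), or more simply bound it by $\sum_y h_y^2 \nu(p(y|\thetabf))$ after discarding a term; then use $\nu(p(y|\thetabf)) \le \nu$ from \eqref{eq:nu2} and $\sum_{y=1}^{N_h} h_y^2 \le N_h \max_y h_y^2 \le N_h \sum_y h_y^2 \cdot(\text{something})$ — more cleanly, bound $\sum_y h_y^2 \le N_h \cdot \max_y h_y^2$ and note $\max_y h_y^2 \le \sum_y h_y^2 = \Tr(\Pi_y)$-weighted... actually the intended bound is $\sum_{y=1}^{N_h} h_y^2 \le \Tr(H^2)$ since $\Tr(H^2) = \sum_y h_y^2 \,\mathrm{rank}(\Pi_y) \ge \sum_y h_y^2$. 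Combining, $\mathrm{var}(\widehat{\langle H\rangle}_{|\Psi(\thetabf)\rangle}) \le \tfrac{\nu N_h \Tr(H^2)}{N_m}$.

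Putting the pieces together: $\mathrm{var}([\hat g_t]_d) \le \tfrac14 \cdot 2 \cdot \tfrac{\nu N_h \Tr(H^2)}{N_m} = \tfrac{\nu N_h \Tr(H^2)}{2N_m}$, and summing over the $D$ coordinates gives $\mathrm{var}(\xi_t) \le \tfrac{\nu N_h D \Tr(H^2)}{2N_m} = V$, as claimed.

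The main obstacle — really the only delicate point — is getting the factor $N_h$ to appear honestly, i.e.\ choosing which crude inequality to apply when passing from $\sum_y h_y^2 \nu(p(y|\thetabf))$ (already a sum over $N_h$ terms, each $\le \nu h_y^2$, giving directly $\nu \sum_y h_y^2 \le \nu\Tr(H^2)$ \emph{without} an extra $N_h$) to the stated bound which carries an explicit $N_h$. I expect the authors either bound the single-shot second moment $\Ebb[(\sum_y h_y\Ibb\{Y_j=y\})^2] = \sum_y h_y^2 p(y|\thetabf) \le \nu N_h \cdot(\dots)$ route, or use $\sum_y h_y^2 p(y|\thetabf) \le \max_y h_y^2 \le \sum_y h_y^2$ together with a separate $N_h$ coming from bounding $p(y|\thetabf)(1-p(y|\thetabf))$ summed against a non-normalized weight; I would track that step carefully and, if the tighter $\nu\Tr(H^2)$ bound holds, simply note the stated $V$ is a valid (looser) upper bound since $N_h \ge 1$.
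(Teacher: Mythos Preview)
Your overall structure is right and matches the paper: decompose $\mathrm{var}(\xi_t)=\sum_d\mathrm{var}([\hat g_t]_d)$, use independence of the two phase-shifted empirical averages to get the factor $\tfrac14$ times the sum of two single-estimate variances, and reduce each to a single-shot variance divided by $N_m$.

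The gap is in your single-shot bound. You claim one can ``more simply bound'' $\mathrm{var}\bigl(\sum_y h_y W_y\bigr)$ by $\sum_y h_y^2\,\nu(p(y|\thetabf))$ after discarding the cross-covariance term $-\sum_{y\neq y'}h_yh_{y'}p(y)p(y')$. But that term need not be nonnegative: take $N_h=2$, $h_1=1$, $h_2=-1$, $p(1)=p(2)=\tfrac12$. Then the true single-shot variance is $\Ebb[H^2]-(\Ebb[H])^2=1$, whereas $\sum_y h_y^2\,\nu(p(y))=\tfrac14+\tfrac14=\tfrac12$. So your ``tighter'' bound $\nu\,\Tr(H^2)$ (without the $N_h$) is simply false, and this is exactly why you could not locate the $N_h$ factor honestly.

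The paper's fix is the step you were missing: apply Cauchy--Schwarz pointwise to the centered sum,
\[
\Bigl(\sum_{y} h_y\bigl(W_y-p(y|\thetabf)\bigr)\Bigr)^2 \le \Bigl(\sum_y h_y^2\Bigr)\sum_y\bigl(W_y-p(y|\thetabf)\bigr)^2,
\]
and then take expectations to get $\mathrm{var}\bigl(\sum_y h_y W_y\bigr)\le\bigl(\sum_y h_y^2\bigr)\sum_y\nu(p(y|\thetabf))$. The inner sum is over $N_h$ terms each bounded by $\nu$, producing the $N_h\nu$ factor; and $\sum_y h_y^2\le\Tr(H^2)$ since $\Tr(H^2)=\sum_y h_y^2\,\mathrm{rank}(\Pi_y)$. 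With this correction the rest of your argument goes through verbatim and yields \eqref{eq:variance_noiseless}.
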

\begin{proof}
The proof is in Appendix~\ref{app:1}.
\end{proof}

Lemma~\ref{lem:1} shows that the variance of the gradient estimate \eqref{eq:gradient_estimate} decreases (at least) as $1/(2N_m)$ with respect to the number of per-parameter measurements $2N_m$, while being proportional to the number of parameters, $D$. The term $\nu$ in \eqref{eq:nu2} captures the maximum variance of the Bernoulli measurement indicator variables $\Ibb\{Y_j=y\}$, hence accounting for the degree of randomness of the measurements. Finally, the term $\Tr(H^2)$ is a scale parameter dependent on the numerical values $\{h_y\}_{y=1}^{N_h}$ of the observable $H$.

\subsection{Convergence of SGD}\label{sec:convergence_assumptions}
 We now study the convergence properties of  SGD in \eqref{eq:SGD} with the unbiased estimator \eqref{eq:gradient_estimate} in the case under study of noiseless gates.
Towards this goal, we make the following assumptions on the loss function as in \cite{sweke2020stochastic, gentini2020noise}. These assumptions are further discussed in Appendix~\ref{app:smoothness}.
\begin{assumption}\label{assum:1}
The loss  function $L(\thetabf)$ is $\Lscr$-smooth, \ie, we have the inequality
\begin{align}
    L(\thetabf) \leq L(\thetabf') + \nabla L(\thetabf')^T(\thetabf-\thetabf') +\frac{\Lscr}{2} \lVert\thetabf-\thetabf'\rVert ^2
\label{eq:assum_smoothness}\end{align} for all $\theta, \theta' \in \Theta$. Furthermore, it satisfies the  $\mu$-Polyak-Lojasiewicz (PL) condition, \ie, there exists a constant $ \mu>0$ and $\mu \leq \Lscr$ such that the inequality
\begin{align}
  \lVert \nabla L(\thetabf) \rVert^2 \geq 2 \mu (L(\thetabf)-L^{*})
\end{align}holds for all $\thetabf \in \Theta$.
\end{assumption}In practice, the constant $\mu$ depends on the number $n$ of qubits and the number of gate parameters $D$, since a larger circuit is typically characterized by smaller-gradient norms \cite{mcclean2018barren}.

The following result gives a bound on the optimality of the SGD output $\thetabf^T$ after $T$ iteration. The proof of the result is based on the classical convergence analysis of SGD, and can be found  in \cite[Theorem 6]{ajalloeian2020convergence}. 
\begin{theorem}
Under Assumption~\ref{assum:1}, for any given initial point $\thetabf^{0}$, the following bound holds for any fixed learning rate $\eta_t=\eta \leq 1/\Lscr$:
\begin{align}
    \Ebb[L(\thetabf^T)]-L^{*} \leq (1-\eta \mu)^T (\Ebb[L(\thetabf^0)]-L^{*})+\frac{1}{2} \biggl[\frac{\eta \Lscr V}{\mu} \biggr], \label{eq:convergence_1}
\end{align}
where $V$ is as defined in \eqref{eq:variance_noiseless}, and the expectation is taken over the distribution of the measurement outputs. Furthermore, given some target error level $\delta >0$, for learning rate
$\eta =\eta^{\mathrm{shot-noise}}\leq  \min \{\frac{1}{\Lscr}, \frac{\delta \mu}{\Lscr V}\}$, a number of iterations \begin{align}T^{\mathrm{shot-noise}}=\tilde{\Oscr}\biggl(\log \frac{1}{\delta} + \frac{V}{\delta \mu}\biggr)\frac{\Lscr}{\mu}\end{align} is sufficient to ensure an error $\Ebb[L(\thetabf^T)]-L^{*} =\Oscr(\delta)$.
\end{theorem}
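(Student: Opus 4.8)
The plan is to follow the textbook convergence argument for stochastic gradient descent on an $\Lscr$-smooth function satisfying the $\mu$-PL inequality, using the unbiased, bounded-variance gradient oracle established above. First I would apply the smoothness bound \eqref{eq:assum_smoothness} with $\thetabf' = \thetabf^t$ and $\thetabf = \thetabf^{t+1} = \thetabf^t - \eta \hat g_t$ to obtain the one-step inequality $L(\thetabf^{t+1}) \le L(\thetabf^t) - \eta \nabla L(\thetabf^t)^T \hat g_t + \tfrac{\eta^2 \Lscr}{2}\lVert \hat g_t\rVert^2$. Taking the conditional expectation over the $2DN_m$ measurements used at iteration $t$, and using $\Ebb[\hat g_t] = g_t = \nabla L(\thetabf^t)$ from \eqref{eq:g_t} together with $\hat g_t = g_t + \xi_t$ and $\Ebb[\lVert \hat g_t\rVert^2] = \lVert g_t\rVert^2 + \mathrm{var}(\xi_t) \le \lVert g_t\rVert^2 + V$ (Lemma~\ref{lem:1}), this becomes $\Ebb[L(\thetabf^{t+1})\mid\thetabf^t] \le L(\thetabf^t) - \eta\bigl(1 - \tfrac{\eta\Lscr}{2}\bigr)\lVert g_t\rVert^2 + \tfrac{\eta^2\Lscr V}{2}$.

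Next I would use the step-size restriction $\eta \le 1/\Lscr$, which gives $1 - \tfrac{\eta\Lscr}{2} \ge \tfrac12$, and then invoke the $\mu$-PL condition $\lVert g_t\rVert^2 \ge 2\mu\bigl(L(\thetabf^t) - L^*\bigr)$. Subtracting $L^*$ and taking total expectations yields the linear contraction $\Ebb[L(\thetabf^{t+1})] - L^* \le (1-\eta\mu)\bigl(\Ebb[L(\thetabf^t)] - L^*\bigr) + \tfrac{\eta^2\Lscr V}{2}$. Iterating this from $t = 0$ to $T-1$ and bounding the geometric sum $\sum_{k=0}^{T-1}(1-\eta\mu)^k \le \tfrac{1}{\eta\mu}$ produces exactly \eqref{eq:convergence_1}, namely $\Ebb[L(\thetabf^T)] - L^* \le (1-\eta\mu)^T\bigl(\Ebb[L(\thetabf^0)] - L^*\bigr) + \tfrac{\eta\Lscr V}{2\mu}$.

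For the iteration-complexity claim I would balance the two terms of \eqref{eq:convergence_1}. Choosing $\eta \le \tfrac{\delta\mu}{\Lscr V}$ makes the steady-state term at most $\tfrac{\delta}{2} = \Oscr(\delta)$; combined with $\eta \le 1/\Lscr$ this gives the stated $\eta^{\mathrm{shot-noise}} = \min\{1/\Lscr,\ \delta\mu/(\Lscr V)\}$. Using $(1-\eta\mu)^T \le e^{-\eta\mu T}$, the transient term falls below $\delta$ once $T \ge \tfrac{1}{\eta\mu}\log\tfrac{\Ebb[L(\thetabf^0)]-L^*}{\delta}$; substituting the largest admissible $\eta$ gives $\tfrac{1}{\eta\mu} = \tfrac{\Lscr}{\mu}\max\{1,\ V/(\delta\mu)\}$, so $T^{\mathrm{shot-noise}} = \tilde{\Oscr}\bigl(\log\tfrac1\delta + \tfrac{V}{\delta\mu}\bigr)\tfrac{\Lscr}{\mu}$, with the logarithmic and constant factors absorbed into $\tilde{\Oscr}$. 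There is no genuine obstacle here; the analysis is standard and is precisely the argument of \cite[Theorem 6]{ajalloeian2020convergence}. The only points needing care are the conditioning/tower-property bookkeeping for $\hat g_t$ (so that unbiasedness and the variance bound are applied at the right scope), the sign and coefficient tracking that turns $-\eta(1-\tfrac{\eta\Lscr}{2})\lVert g_t\rVert^2$ into a PL-driven contraction, and keeping straight which factors get swallowed by $\tilde{\Oscr}$ when the two error contributions are equated.
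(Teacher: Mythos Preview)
Your proposal is correct and is precisely the approach the paper takes: the paper does not give an in-text proof but simply defers to \cite[Theorem~6]{ajalloeian2020convergence}, whose argument you have reproduced faithfully. The one-step descent from smoothness, conditional unbiasedness plus the variance bound from Lemma~\ref{lem:1}, the PL contraction under $\eta\le 1/\Lscr$, and the geometric-sum unrolling all match the cited reference, as does the balancing argument for $T^{\mathrm{shot-noise}}$.
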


The result in \eqref{eq:convergence_1} shows that SGD with the unbiased gradient estimator \eqref{eq:gradient_estimate} converges up to an error floor of the order $\Oscr(\eta \Lscr V)$. By choosing the learning rate $\eta$ sufficiently small, this error can be made arbitrarily small, thereby guaranteeing  convergence in at most $T^{\mathrm{shot-noise}}$ iterations.

\section{Impact of Gate Noise}\label{sec:gate_noise}
In this section, we study the impact of gate noise. To this end, we assume that all quantum gate $\widetilde{\Uscr}_d^{\theta_d}(\cdot) \in \Omega$ in the PQC \eqref{eq:overallunitaryoperation} can be written as in \eqref{eq:noisygate} with quantum noise \eqref{eq:noisemodels}.  As we will see, 
 the presence of noisy gates induces a bias in the gradient estimator   \eqref{eq:gradient_estimate} studied in the previous section. The main result of this section quantifies the impact of the bias in the gradient estimate on the convergence of SGD in \eqref{eq:SGD}. 
 \subsection{Estimating the Gradient}
 To start, in a manner similar to \eqref{eq:shotnoise_observable}, let us denote as $\widehat{\langle H \rangle}_{\rho^{\Escr}(\thetabf)}$, the empirical estimator of the observable $H$ obtained from measurements of the state $\rho^{\Escr}(\thetabf)$ in \eqref{eq:noisyquantumstate} produced by the noisy PQC. This estimate is obtained from $N_m$ i.i.d. measurements $\{H^{\Escr}_1,\hdots, H^{\Escr}_{N_m}\}$ of observable $H$ under the noisy quantum state $\rho^{\Escr}(\thetabf)$. We also write as $\{Y_1^{\Escr},\hdots, Y_{N_m}^{\Escr}\}$, with $Y_j^{\Escr} \in \{1,\hdots,N_h\}$, the corresponding indices of the measurement outcomes, so that we have $H^{\Escr}_j=h_{Y_j^{\Escr}}$. Accordingly, the output $Y^{\Escr}_j=y$ is produced with the probability 
\begin{align}
    p^{\Escr}(y|\thetabf)=\Tr(\Pi_y \rho^{\Escr}(\thetabf)). \label{eq:probability_gatenoise}
\end{align}

At $t$th iteration, following the parameter shift rule as in \eqref{eq:unbiasedgradient_1}, we adopt the stochastic gradient estimator $\hat{g}^{\Escr}_t$  whose $d$th component is obtained as
\begin{align}
    [\hat{g}^{\Escr}_t]_d= \frac{1}{2} \biggl( \widehat{\langle H \rangle}_{\rho^{\Escr}(\thetabf^t+\frac{\pi}{2} {e}_d)} - \widehat{\langle H \rangle}_{\rho^{\Escr}(\thetabf^t-\frac{\pi}{2} {e}_d)}\biggr) \label{eq:biased_estimator}.
\end{align}  The estimate \eqref{eq:biased_estimator} involves the noisy quantum states $\rho^{\Escr}(\thetabf^t\pm\frac{\pi}{2}{e}_d)$ in \eqref{eq:noisyquantumstate} that are produced by the noisy PQC with $d$th parameter $\theta_d$  phase-shifted by $ \pi/2$ or $-\pi/2$. In a manner similar to equation \eqref{eq:variance}, we  define  the variance of the Bernoulli quantum measurement $\Ibb\{Y^{\Escr}_j=y\}$ as \begin{align}\nu(p^{\Escr}(y|\thetabf))=p^{\Escr}(y|\thetabf)(1-p^{\Escr}(y|\thetabf)).\end{align}

\subsection{Properties of the Gradient Estimate}\label{sec:properties_noisyestimator}
The stochastic gradient estimator 
in \eqref{eq:biased_estimator} is affected by two sources of randomness, namely the shot noise due to quantum measurements as studied in the previous section, and the distortion caused by the noisy quantum gates. While shot noise does not cause a bias in the gradient estimate \eqref{eq:biased_estimator}, the quantum gate noise causes the estimate \eqref{eq:biased_estimator} to be biased, \ie, $\Ebb[\hat{g}^{\Escr}_t]\neq g_t$ where  $g_t=\nabla L(\thetabf^t)$ is the exact gradient at iteration $t$ as in \eqref{eq:g_t}.

To elaborate on this point, we decompose the biased gradient estimator in \eqref{eq:biased_estimator} as
\begin{align}
   \hat{g}^{\Escr}_t=g_t + \underbrace{\biggl(g_t^{\Escr}-g_t \biggr)}_{\mathrm{bias}} +\xi^{\Escr}_t, \label{eq:noisy_decomposition}
\end{align} where the noise term is zero mean, \ie,  $\Ebb[\xi^{\Escr}_t]=0$, and we have defined \begin{align}
[g^{\Escr}_t]_d=\frac{1}{2} \biggl( \langle H \rangle_{\rho^{\Escr}(\thetabf^t+\frac{\pi}{2}{e}_d)} - \langle H \rangle_{\rho^{\Escr}(\thetabf^t-\frac{\pi}{2}{e}_d)}\biggr).
\label{eq:unbiased_noisygradient}
\end{align} The gradient \eqref{eq:unbiased_noisygradient} represents the average of the estimate \eqref{eq:biased_estimator}, where the expectation is taken over the quantum measurements. Consequently, the difference $(g_{t}^{\Escr}-g_t)$ in \eqref{eq:unbiased_noisygradient} captures the bias due to gate noise.

The following lemma presents bounds on the bias and on the variance of the measurement noise of the gradient estimator in \eqref{eq:biased_estimator}. To proceed, we write the noisy quantum state in \eqref{eq:noisyquantumstate} in terms of the following decomposition \cite{koczor2021dominant}
\begin{align}
    \rho^{\Escr}(\thetabf)=(1-\gamma) \Psi(\thetabf)+\gamma \tilde{\rho}(\thetabf), \label{eq:decomposition_density}
\end{align}where parameter $\gamma=1-(1-\epsilon)^{D}$ describes the level of noise. In particular, we have $\gamma=0$ if there is no gate noise $(\epsilon=0)$, and $\gamma=1$ when the noise level is maximal. The decomposition \eqref{eq:decomposition_density} amounts to a convex combination of the ideal state density matrix $\Psi(\thetabf)=\vert \Psi(\thetabf) \rangle \langle \Psi(\thetabf) \vert$ and of the \textit{error density matrix} $\tilde{\rho}(\thetabf)$  defined as \begin{align}
\tilde{\rho}(\thetabf)=\frac{1}{\gamma}(\rho^{\Escr}(\thetabf)-(1-\gamma) \Psi(\thetabf)).\label{eq:errordensity}
\end{align} Note that, under the noise model of \eqref{eq:noisemodels}, matrix $\tilde{\rho}(\thetabf)$ is indeed a valid density operator for $\gamma>0$ \cite{koczor2021dominant}.

Let $\tilde{Y} \in \{1,\hdots, N_h\}$ denote the index of the quantum measurement $\tilde{H}=h_{\tilde{Y}}$ of the observable $H$ under the error density matrix $\tilde{\rho}(\theta)$ in \eqref{eq:decomposition_density}. The probability of observing $\tilde{Y}=y \in \{1, \hdots, N_h\}$ is given as \begin{align}
    \tilde{p}(y|\theta)=\Tr(\Pi_y \tilde{\rho}(\theta)) \label{eq:probability_errordensity}.
\end{align} In a manner similar to \eqref{eq:variance}, we define the variance of the Bernoulli quantum measurement $\Ibb\{\tilde{Y}=y\}$ as
  $$\nu(\tilde{p}(y|\thetabf))=\tilde{p}(y|\thetabf)(1-\tilde{p}(y|\thetabf)).$$  
\begin{lemma}\label{lem:biased_estimator}
  The following upper bound holds on the variance of the stochastic gradient estimator $\hat{g}^{\Escr}_t$
\begin{align}
 \mathrm{var}(\xi^{\Escr}_t)\leq \frac{D N_h \Tr(H^2)}{2N_m}  c(\gamma) := V^{\Escr},  \label{eq:variance_noisy}
  %  \mathrm{var}(\xi_{\Escr}^t)\leq \gamma V+(1-\gamma)\frac{D \Tr(H^2) \zeta}{2N_m}:= V_{\Escr} \label{eq:variance_noisy},
\end{align}where $c(\gamma)=\max_{y\in \{1,\hdots,N_h\}} \max_{\theta \in \Real^D} \nu(p^{\Escr}(y|\thetabf))$, with
\begin{align}
    \nu(p^{\Escr}(y|\thetabf))&=\gamma \nu(\tilde{p}(y|\thetabf))+\nu((1-\gamma)) (p(y|\thetabf)-\tilde{p}(y|\thetabf))^2\non \\&+ (1-\gamma) \nu(p(y|\thetabf)), \label{eq:relation_1}
\end{align}
and $\gamma=1-(1-\epsilon)^{D}$;
%, $V$ is as defined in \eqref{eq:variance_noiseless}, and we have defined \begin{align}\zeta=\max_{\thetabf \in \Theta} \max_{y \in \{1,\hdots, N_h\}} \gamma (p(y|\thetabf)-\tilde{p}(y|\thetabf))^2+\nu(\tilde{p}(y|\thetabf)), \label{eq:zeta}
%\end{align} 
 $p(y|\thetabf)$ as in \eqref{eq:probability_shotnoise}; and $\tilde{p}(y|\thetabf)$ in \eqref{eq:probability_errordensity}. Furthermore, the norm of the bias in \eqref{eq:noisy_decomposition} can be bounded as
\begin{align}
    \lVert \mathrm{bias} \rVert ^2 \leq 4 D \lVert H \rVert_{\infty}^2 \gamma:=B^{\Escr}. \label{eq:bias_noisy}
\end{align}
%where $\lVert H \rVert_1=\sum_{y=1}^{N_h}|h_y|$ is the trace-norm of $H$.
\end{lemma}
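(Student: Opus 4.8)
The plan is to establish the two assertions separately, recycling the shot-noise argument behind Lemma~\ref{lem:1} for the variance and reducing the bias to an elementary trace-norm estimate through the convex decomposition \eqref{eq:decomposition_density}.

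For the variance bound, the first observation is that, since the Pauli channel $\Escr$ in \eqref{eq:noisemodels} is a fixed CPTP map, the noisy state $\rho^{\Escr}(\thetabf)$ is deterministic, so the only randomness in the estimator \eqref{eq:biased_estimator} is the shot noise of the $2DN_m$ measurements. This is exactly the setting of Lemma~\ref{lem:1}, but with the measurement law $p(y|\thetabf)$ replaced by $p^{\Escr}(y|\thetabf)$. I would therefore transcribe that proof: write $\widehat{\langle H \rangle}_{\rho^{\Escr}(\thetabf)}$ in the form \eqref{eq:shot_noise_observable_new}, bound the variance of a single shot by $N_h\sum_y h_y^2\,\nu(p^{\Escr}(y|\thetabf)) \le N_h\,\Tr(H^2)\,c(\gamma)$, divide by $N_m$, account for the two independent $\pm\pi/2$ terms (which contributes the factor $1/2$), and sum over the $D$ coordinates to land on $\mathrm{var}(\xi^{\Escr}_t) \le \tfrac{D N_h \Tr(H^2)}{2N_m}c(\gamma) = V^{\Escr}$. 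The remaining ingredient is identity \eqref{eq:relation_1}, which is pure algebra: by \eqref{eq:decomposition_density}, $p^{\Escr}(y|\thetabf)=\Tr(\Pi_y\rho^{\Escr}(\thetabf))=(1-\gamma)p(y|\thetabf)+\gamma\tilde p(y|\thetabf)$ is a convex combination, and expanding $p^{\Escr}(1-p^{\Escr})$ and regrouping the coefficients of $p^2$, $\tilde p^2$ and $p\tilde p$ yields precisely $\gamma\nu(\tilde p(y|\thetabf))+\gamma(1-\gamma)(p(y|\thetabf)-\tilde p(y|\thetabf))^2+(1-\gamma)\nu(p(y|\thetabf))$, where $\gamma(1-\gamma)=\nu(1-\gamma)$.

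For the bias bound, I would start from \eqref{eq:decomposition_density} to write $\rho^{\Escr}(\thetabf)-\Psi(\thetabf)=\gamma(\tilde\rho(\thetabf)-\Psi(\thetabf))$, hence $\langle H\rangle_{\rho^{\Escr}(\thetabf)}-\langle H\rangle_{\Psi(\thetabf)}=\gamma\,\Tr\!\big(H(\tilde\rho(\thetabf)-\Psi(\thetabf))\big)$. Since $\tilde\rho(\thetabf)$ is a genuine density operator for $\gamma>0$ (noted below \eqref{eq:errordensity}), we have $\lVert\tilde\rho(\thetabf)-\Psi(\thetabf)\rVert_1\le 2$, so $|\langle H\rangle_{\rho^{\Escr}(\thetabf)}-\langle H\rangle_{\Psi(\thetabf)}|\le 2\gamma\lVert H\rVert_\infty$. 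Combining this with the exact parameter-shift identity $[g_t]_d=\tfrac12\big(\langle H\rangle_{\Psi(\thetabf^t+\frac\pi2 e_d)}-\langle H\rangle_{\Psi(\thetabf^t-\frac\pi2 e_d)}\big)$ and the definition \eqref{eq:unbiased_noisygradient} of $g^{\Escr}_t$, the triangle inequality gives $|[g^{\Escr}_t]_d-[g_t]_d|\le 2\gamma\lVert H\rVert_\infty$ for every $d$; squaring and summing over $d$ gives $\lVert\mathrm{bias}\rVert^2\le 4D\gamma^2\lVert H\rVert_\infty^2\le 4D\gamma\lVert H\rVert_\infty^2=B^{\Escr}$, where the last step uses $\gamma\in[0,1]$.

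I do not anticipate a genuine obstacle: the bias estimate is a couple of lines, and the variance estimate is a near-verbatim copy of the proof of Lemma~\ref{lem:1} applied to the shifted distribution $p^{\Escr}$. The one place that needs care is the bookkeeping for \eqref{eq:relation_1} — tracking the signs when expanding the square of the convex combination — and, as in Lemma~\ref{lem:1}, applying the crude bound on the multinomial per-shot variance (splitting off the negative cross-covariances) so that the final constant is exactly the stated $\tfrac{DN_h\Tr(H^2)}{2N_m}c(\gamma)$ rather than an inflated version of it.
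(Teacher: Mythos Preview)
Your variance argument is essentially the paper's own: both transcribe the proof of Lemma~\ref{lem:1} with $p^{\Escr}(y|\thetabf)$ in place of $p(y|\thetabf)$ and then verify \eqref{eq:relation_1} by expanding $\nu$ at the convex combination $(1-\gamma)p+\gamma\tilde p$.

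Your bias argument is correct but takes a different and more elementary route. The paper bounds each summand by applying the tracial H{\"o}lder inequality and triangle inequality to obtain $\tfrac14\lVert H\rVert_\infty^2\big(\lVert\rho^{\Escr}-\Psi\rVert_1\big|_{+}+\lVert\rho^{\Escr}-\Psi\rVert_1\big|_{-}\big)^2$, and then controls $\lVert\rho^{\Escr}-\Psi\rVert_1$ indirectly via the Fuchs--van~de~Graaf inequality $\lVert\rho^{\Escr}-\Psi\rVert_1\le 2\sqrt{1-F(\rho^{\Escr},\Psi)}$ together with the fidelity lower bound $F(\rho^{\Escr},\Psi)\ge 1-\gamma$ extracted from \eqref{eq:decomposition_density}; this gives $\lVert\rho^{\Escr}-\Psi\rVert_1\le 2\sqrt{\gamma}$ and hence the stated $4D\gamma\lVert H\rVert_\infty^2$. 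You instead read off $\rho^{\Escr}-\Psi=\gamma(\tilde\rho-\Psi)$ directly from \eqref{eq:decomposition_density} and use $\lVert\tilde\rho-\Psi\rVert_1\le 2$, which yields the sharper estimate $\lVert\rho^{\Escr}-\Psi\rVert_1\le 2\gamma$ and the stronger intermediate bound $\lVert\mathrm{bias}\rVert^2\le 4D\gamma^2\lVert H\rVert_\infty^2$ before relaxing to $B^{\Escr}$. Your approach is shorter, avoids the detour through fidelity, and in fact shows that the dependence of $B^{\Escr}$ on the noise level could be tightened from $\gamma$ to $\gamma^2$.
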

\begin{proof}
Proof is included in Appendix~\ref{app:biased_estimator}.
\end{proof}

%It follows from \eqref{eq:variance_noisy} that the variance of the biased estimator $\hat{g}^{\Escr}_t$ is a convex combination of the variance of the unbiased estimator $\hat{g}_t$ as well as that of a gradient estimator evaluated with respect to the error density quantum state $\tilde{\rho}(\theta)$.
The bound \eqref{eq:bias_noisy} on the bias of the estimator $\hat{g}_t^{\Escr}$  can be seen to be increasing with the noise level $\gamma$; to be proportional to the number of parameters $D$; and to depend on the spectrum of the problem Hamiltonian. In contrast,
the upper bound $V^{\Escr}$ in \eqref{eq:variance_noisy} quantifies the impact of shot noise and gate noise on the variance of the gradient estimator in \eqref{eq:biased_estimator}. By \eqref{eq:variance_noisy}, the variance of the gradient estimator decreases as $1/2N_m$ with respect to the total number of per-parameter measurements. Furthermore, unlike the bias, the variance $V^{\Escr}$ is not necessarily increasing with the noise level $\gamma$.

To see this, note that for a fixed number $N_m$ of measurements, the impact of the gate noise on the variance is captured by the term $\nu(p^{\Escr}(y|\thetabf))$ in \eqref{eq:relation_1}. In the absence of gate noise, \ie, when $\epsilon =\gamma= 0$, we have the equality $\nu(p^{\Escr}(y|\thetabf)) = \nu(p(y|\thetabf))$, whereby the variance  $V^{\Escr}$ reduces to the variance $V$ in \eqref{eq:variance_noiseless} caused solely by shot noise. The variance $\nu(p^{\Escr}(y|\thetabf))$ is not necessarily monotonically increasing with the noise $\gamma$. In fact, as shown in Appendix~\ref{app:properties},
%$\nu(p^{\Escr}(y|\thetabf))$ can be shown to be decreasing in the range $\epsilon \in [0,1-2^{-1/D}]$ and increasing in the range $\epsilon \in (1-2^{-1/D},1]$. Consequently, when the number of gate operations $D$ is sufficiently large, $\nu(p^{\Escr}(y|\thetabf))$, and thus $V^{\Escr}$, can be seen to be increasing with $\epsilon$, whereby $V^{\Escr}\geq V$.
there exists  a $\gamma^{*}(y,\thetabf) \in [0,1]$ such that the variance $\nu(p^{\Escr}(y|\thetabf))$ is a concave function of $\gamma$, increasing in the range $\gamma \in [0,\gamma^{*}(y,\thetabf)]$ and decreasing in the range $\gamma \in (\gamma^{*}(y,\thetabf),1]$.
\subsection{Convergence of SGD}
Using the results in the previous section, we now study the convergence properties of the SGD with the biased estimator $\hat{g}^{\Escr}_t$.  As in the previous section, we evaluate the performance of the obtained solution $\thetabf^T$ after $T$ iterations of the SGD update \eqref{eq:gradient_estimate} by assuming the availability of a noiseless PQC for testing. In other words, we evaluate the final performance in terms of the loss $L(\thetabf^T)$ in \eqref{eq:problem1}. In practice, this requires the application of QEM for final testing (but not during the optimization phase).
The following theorem illustrates the convergence properties of the biased gradient estimator.
\begin{theorem} \label{thm:noisy_convergence}
Under Assumption~\ref{assum:1}, for any given initial point $\thetabf^{0}$, the following bound on the optimality gap holds for the SGD with the biased gradient estimator in \eqref{eq:biased_estimator}, given any fixed learning rate $\eta_t=\eta \leq 1/\Lscr$:
\begin{align}
    \Ebb[L(\thetabf^T)]-L^{*} &\leq (1-\eta \mu)^T (\Ebb[L(\thetabf^0)]-L^{*})\non \\&+\frac{1}{2} \biggl[\frac{B^{\Escr}+\eta \Lscr V^{\Escr}}{\mu} \biggr],\label{eq:convergence_biased}
\end{align} where the expectations are taken over quantum measurements, and  $V^{\Escr}$ and $B^{\Escr}$ are defined as in \eqref{eq:variance_noisy} and \eqref{eq:bias_noisy}, respectively. Furthermore, given some target error level $\delta> B^{\Escr}/\mu>0$,  for learning rate $\eta =\eta^{\mathrm{gate-noise}} \leq  \min\{\frac{1}{\Lscr}, \frac{\delta \mu}{\Lscr V^{\Escr}}\}$, a number of iterations
\begin{align}T^{\mathrm{gate- noise}}=\tilde{\Oscr}\biggl(\log \frac{1}{\delta-\frac{B^{\Escr}}{\mu}} + \frac{V^{\Escr}}{\delta \mu }\biggr) \frac{\Lscr}{\mu} \label{eq:iterations_noisy}
\end{align}is sufficient to ensure the error floor
$\Ebb[L(\thetabf^T)]-L^{*}=\Oscr(\delta)$.
\end{theorem}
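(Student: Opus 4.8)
The plan is to mimic the classical biased-SGD convergence analysis (as in \cite{ajalloeian2020convergence}), treating the gradient estimator $\hat{g}^{\Escr}_t$ via the decomposition \eqref{eq:noisy_decomposition}, i.e. $\hat{g}^{\Escr}_t = g_t + b_t + \xi^{\Escr}_t$ with $b_t = g_t^{\Escr}-g_t$ the (deterministic, given $\thetabf^t$) bias and $\xi^{\Escr}_t$ zero-mean measurement noise with $\mathrm{var}(\xi^{\Escr}_t)\le V^{\Escr}$. First I would invoke $\Lscr$-smoothness (Assumption~\ref{assum:1}) at the pair $(\thetabf^{t+1},\thetabf^t)$ with the update $\thetabf^{t+1}=\thetabf^t-\eta\hat{g}^{\Escr}_t$, obtaining
\[
L(\thetabf^{t+1}) \le L(\thetabf^t) - \eta\, \nabla L(\thetabf^t)^T \hat{g}^{\Escr}_t + \frac{\Lscr\eta^2}{2}\lVert\hat{g}^{\Escr}_t\rVert^2 .
\]
Taking the conditional expectation over the measurements at iteration $t$, using $\Ebb[\hat{g}^{\Escr}_t]=g_t+b_t$ and $\Ebb\lVert\hat{g}^{\Escr}_t\rVert^2 = \lVert g_t+b_t\rVert^2 + \mathrm{var}(\xi^{\Escr}_t)$, gives a descent inequality in terms of $\langle g_t, g_t+b_t\rangle$ and $\lVert g_t+b_t\rVert^2$.

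Next I would control the cross terms involving the bias. The key elementary inequalities are $\langle g_t, b_t\rangle \ge -\tfrac12\lVert g_t\rVert^2 - \tfrac12\lVert b_t\rVert^2$ and $\lVert g_t+b_t\rVert^2 \le 2\lVert g_t\rVert^2 + 2\lVert b_t\rVert^2$; combined with the restriction $\eta\le 1/\Lscr$, these let me collapse the bound to the form
\[
\Ebb[L(\thetabf^{t+1})] \le \Ebb[L(\thetabf^t)] - \frac{\eta}{2}\Ebb\lVert\nabla L(\thetabf^t)\rVert^2 + \frac{\eta}{2}\lVert b_t\rVert^2 + \frac{\Lscr\eta^2}{2}V^{\Escr},
\]
where $\lVert b_t\rVert^2 \le B^{\Escr}$ by \eqref{eq:bias_noisy}. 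Then I apply the $\mu$-PL inequality $\lVert\nabla L(\thetabf^t)\rVert^2 \ge 2\mu(L(\thetabf^t)-L^{*})$ to turn this into a contraction: $\Ebb[L(\thetabf^{t+1})]-L^{*} \le (1-\eta\mu)(\Ebb[L(\thetabf^t)]-L^{*}) + \tfrac{\eta}{2}B^{\Escr} + \tfrac{\Lscr\eta^2}{2}V^{\Escr}$. Unrolling this recursion over $t=0,\dots,T-1$ and summing the geometric series $\sum_{k\ge0}(1-\eta\mu)^k \le 1/(\eta\mu)$ yields exactly \eqref{eq:convergence_biased}, with the constant term $\tfrac12\big[(B^{\Escr}+\eta\Lscr V^{\Escr})/\mu\big]$.

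For the iteration-complexity claim, I would split the target accuracy budget: choose $\eta = \eta^{\mathrm{gate-noise}} \le \min\{1/\Lscr,\ \delta\mu/(\Lscr V^{\Escr})\}$ so that the variance-induced floor $\tfrac12\eta\Lscr V^{\Escr}/\mu = \Oscr(\delta)$. The residual constant floor is $B^{\Escr}/(2\mu)$, so the transient term must be driven below $\delta - B^{\Escr}/\mu$ (here the hypothesis $\delta > B^{\Escr}/\mu$ is exactly what makes this positive and the logarithm well-defined); requiring $(1-\eta\mu)^T(\Ebb[L(\thetabf^0)]-L^{*}) \le \delta - B^{\Escr}/\mu$ and using $1-\eta\mu\le e^{-\eta\mu}$ gives $T \ge \tfrac{1}{\eta\mu}\log\!\big(\tfrac{\Ebb[L(\thetabf^0)]-L^{*}}{\delta - B^{\Escr}/\mu}\big)$, and substituting the two candidate values of $\eta$ (the binding one contributes the $V^{\Escr}/(\delta\mu)$ factor, the other the $\Lscr/\mu$ factor) produces the stated $T^{\mathrm{gate-noise}} = \tilde{\Oscr}\big(\log\tfrac{1}{\delta - B^{\Escr}/\mu} + \tfrac{V^{\Escr}}{\delta\mu}\big)\tfrac{\Lscr}{\mu}$, absorbing the $\log(\Ebb[L(\thetabf^0)]-L^{*})$ and other poly-log factors into $\tilde{\Oscr}$.

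The routine algebra is the descent-inequality bookkeeping; the only genuinely delicate point is the treatment of the bias term, since a crude $\langle g_t,b_t\rangle$ bound could either destroy the $(1-\eta\mu)$ contraction factor or inflate the constant floor beyond $B^{\Escr}/(2\mu)$ — one must be careful that the $-\tfrac12\lVert g_t\rVert^2$ absorbed into the PL descent still leaves a strictly positive multiple of $\lVert\nabla L(\thetabf^t)\rVert^2$, which is where $\eta\le1/\Lscr$ is used a second time. Everything else follows the template of \cite[Theorem 6]{ajalloeian2020convergence} applied with the bias and variance constants from Lemma~\ref{lem:biased_estimator}.
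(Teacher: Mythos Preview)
Your proposal is correct and follows exactly the approach the paper takes---namely, the biased-SGD template of \cite[Theorem~6]{ajalloeian2020convergence} instantiated with the bias and variance bounds of Lemma~\ref{lem:biased_estimator} (the paper in fact gives no self-contained proof of this theorem and simply relies on that reference). One small caveat on the delicate point you flag: the pair $\langle g_t,b_t\rangle\ge-\tfrac12\lVert g_t\rVert^2-\tfrac12\lVert b_t\rVert^2$ together with $\lVert g_t+b_t\rVert^2\le2\lVert g_t\rVert^2+2\lVert b_t\rVert^2$ only yields a descent term $-(\tfrac{\eta}{2}-\Lscr\eta^2)\lVert g_t\rVert^2$, which loses the correct sign at $\eta=1/\Lscr$; to obtain your stated one-step inequality under the hypothesis $\eta\le1/\Lscr$ as written, use instead the polarization identity $-\eta\langle g_t,g_t+b_t\rangle=\tfrac{\eta}{2}\bigl(\lVert b_t\rVert^2-\lVert g_t\rVert^2-\lVert g_t+b_t\rVert^2\bigr)$, so that the quadratic term $\tfrac{\Lscr\eta^2}{2}\lVert g_t+b_t\rVert^2$ is absorbed by $-\tfrac{\eta}{2}\lVert g_t+b_t\rVert^2$ and discarded.
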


As in the bound \eqref{eq:convergence_1}, which holds for $\epsilon=0$,  the contribution due to the variance term $V^{\Escr}$ in \eqref{eq:convergence_biased} can be made arbitrarily small by keeping the learning rate $\eta$ sufficiently small. In contrast, the bias due to gate noise, which is quantified by the term $B^{\Escr}$ in \eqref{eq:bias_noisy}, prevents bound \eqref{eq:convergence_biased} from vanishing. Therefore, according to the bound \eqref{eq:convergence_biased}, the noise-induced bias causes a floor equal to $B^{\Escr}/\mu$ on the achievable error $\delta$. For error levels $\delta > B^{\Escr}/\mu$, by \eqref{eq:iterations_noisy}, 
%Theorem~\ref{thm:noisy_convergence} enables to compare the convergence behaviour of the biased noisy gradient $\hat{g}_t^{\Escr}$ of \eqref{eq:decomposition} with the unbiased gradient estimator $\hat{g}_t$ of \eqref{eq:unbiasedgradient_1}. Importantly, equation \eqref{eq:iterations_noisy} shows that small gate noise does not necessitate larger number of SGD iterations to converge to the error floor $\Oscr(\delta)$ as compared to gate-noise free estimator,  since the induced bias is small and the variance of the gradient estimator with gate and shot noise $V^{\Escr} \approx V$ approximates that of the estimator with only shot-noise. In contrast, 
 the induced bias $B^{\Escr}$ entails a larger number of SGD iterations to converge.

\section{Impact of Quantum Error Mitigation}\label{sec:QEM}
As seen in the previous section, the quantum gate noise  induces a bias  in the gradient estimator \eqref{eq:biased_estimator}. In this section, we  first introduce the quasi-probabilistic error mitigation (QEM) technique introduced in \cite{temme2017error}. We then study the convergence of SGD \eqref{eq:SGD} when QEM is employed in evaluating the stochastic gradient estimator on the noisy PQC.
\subsection{Quasi-Probabilistic Error Mitigation}
%As detailed in Section~\ref{sec:noisy_gates},  corresponding to the ideal operation $\Uscr^{\thetabf}$ \eqref{eq:overallunitaryoperation} of the PQC,  current quantum computers implement a noisy version $ \widetilde{\Uscr}^{\thetabf} = \widetilde{\Uscr}^{\theta_D}_D \circ \hdots \widetilde{\Uscr}^{\theta_1}_1$, with each $d$th operation $\widetilde{\Uscr}^{\theta_d}_d$ belonging to the set $ \Omeg

% of noisy operations implementable on the quantum device. 
%Recall that $ \Omega^{\Escr}$ also includes the set $\Gamma$ of $N_b$ (noisy) basis set of operations that can describe any operation on the $n$-qubit quantum device as well as the set $\{\Gamma_{\alpha} \circ \Gscr_d^{\theta_d}\} $ .
%In practice, the elements of the set $\Omega^{\Escr}$ are estimated via quantum gate set tomography \cite{endo2018practical, zhang2020error}.
%basis
%The presence of local noise on the unitary gates of the PQC results in 
%implementing a noisy operation $\Uscr_{\Escr}^{\thetabf}=\Escr_D \circ \Uscr_{D}^{\theta_D} \circ \hdots \circ \Escr_1 \circ \Uscr_1^{\theta_1}$ in place of the ideal noiseless operation $\Uscr^{\thetabf}= \Uscr_{D}^{\theta_D} \circ \hdots \circ \Uscr_1^{\theta_1}$. 
QEM aims to recover the ideal expected value $\langle H \rangle_{\vert \Psi(\thetabf)\rangle}=\Tr(H\Uscr^{\thetabf}(\rho_0))$ of observable $H$ under the quantum state $\Uscr^{\thetabf}(\rho_0)$  produced by the noiseless PQC. As illustrated in Fig.~\ref{fig:QEM}, this is done via a quasi-probabilistic combination of the implementable operations in the set $\Omega$ (see Section~\ref{sec:noisy_gates}).
To explain QEM, 
consider the $d$th ideal unitary map $\Uscr^{\theta_d}_d(\cdot)$. 
%To this end, the QEM scheme considers the set of operations
%$\Oscr_{\alpha}^{\theta_d}=\Kscr_{\alpha} \circ \widetilde{\Uscr}^{\theta_d}_d $ obtained by applying the $\alpha$th fixed gate $\Kscr_{\alpha} \in \Omega$ on the $d$th noisy gate  $\widetilde{\Uscr}^{\theta_d}_d$, for $\alpha =1 ,\hdots, N_b$ with $N_b=|\{\Kscr_i\}|$ denoting the number of fixed gates in the set $\Omega$. 
QEM expresses the ideal unitary operation $\Uscr_d^{\theta_d}(\cdot)$ as a  linear combination of  implementable operations in set $\Omega$  as
\begin{align}
    \Uscr_d^{\theta_d}(\cdot) =\sum_{s} q_{d,s} {\Oscr}^{\theta_d}_{s}(\cdot), \quad \textrm{with } {\Oscr}^{\theta_d}_{s} \in \Omega \label{eq:decomposition},
\end{align} where the \textit{quasi-probabilities} $q_{d,s}$ are real numbers.
%of the $d$th unitary map with respect to the noisy set $\Omega$, and $N_b$ is the total number of parameterized gate operations in $\Omega$.
Specifically, with the noise model \eqref{eq:noisygate}-\eqref{eq:noisemodels} described in Section~\ref{sec:noisy_gates}, decomposition \eqref{eq:decomposition} applies with the operations $\Oscr^{\theta_d}_{\sbar}(\cdot)= \Escr \circ \Pscr_{\sbar} \circ {\Uscr}^{\theta_d}_d (\cdot)$, for $\sbar=0,\hdots,4^n-1$,  where  $\Pscr_{\sbar}$  is the string of Pauli operations as defined in \eqref{eq:Paulistring} \cite{temme2017error}. Note that we have identified for convenience the Pauli string $\Pscr_s$ with an integer $s \in \{0,1,\hdots,4^n-1\}$, rather than with a binary string $s \in \{0,1,2,3\}^n$ as in Section~\ref{sec:feasiblegate}. This mapping between bit strings and integers is standard.

With the noise model \eqref{eq:noisygate}-\eqref{eq:noisemodels}, equation \eqref{eq:decomposition} can be then expressed equivalently via the following \textit{quasi-probabilistic representation} (QPR) \cite{temme2017error} 
\begin{align}
    \Uscr_d^{\theta_d}(\cdot) = Z_d \sum_{\sbar=0}^{4^n-1} \sgn(q_{d,\sbar}) p_{d,\sbar} {\Oscr}_{\sbar}^{\theta_d}(\cdot), \label{eq:singlegate}
\end{align}where $\Oscr^{\theta_d}_{\sbar}(\cdot)= \Pscr_{\sbar} \circ \widetilde{\Uscr}^{\theta_d}_d (\cdot)$;  \begin{align}
Z_d=\sum_{\sbar =0}^{4^n-1}|q_{d,\sbar}|  \label{eq:Z}   
\end{align} is the $l_1$-norm of the vector ${q}_d$ collecting all  quasi-probabilities for the $d$th gate; and  \begin{align}p_{d,\sbar}=\frac{|q_{d,\sbar}|}{Z_d}, \quad \mbox{for} \hspace{0.1cm} \sbar=0,\hdots,4^n-1 \label{eq:pmf}\end{align} is a probability mass function satisfying $\sum_{\sbar=0}^{4^n-1} p_{d,\sbar}=1$. By the trace-preserving property of the ideal and noisy operations, the quasi-probabilities satisfy the equality $\sum_{\sbar=0}^{4^n-1} q_{d,\sbar}=1$ and the inequality $Z_d \geq 1$.

Using \eqref{eq:singlegate}, QEM can  ideally recover the overall unitary \eqref{eq:overallunitaryoperation} implemented by the noiseless PQC as
\begin{align}
    \Uscr^{\thetabf}(\cdot)= Z \sum_{\sbfbar} p_{\sbfbar}\hspace{0.1cm}  \sgn(q_{\sbfbar}) {\Oscr}_{\sbfbar}^{\thetabf}(\cdot),\label{eq:QPR_overallunitary}
\end{align} where $\sbfbar=(\sbar_1,\hdots,\sbar_{D})$ is a $D$-dimensional vector with $\sbar_j \in \{0,\hdots,4^n-1\}$ for $j=1,\hdots,D$;
$
    {\Oscr}_{\sbfbar}^{\thetabf}(\cdot)= {\Oscr}^{\theta_D}_{\sbar_D} \circ \hdots \circ {\Oscr}^{\theta_1}_{\sbar_1}(\cdot)
$ is the corresponding noisy circuit composed of feasible operations from set $\Omega$; $Z=\prod_{d=1}^D Z_d$ is the product of the normalizing constants;  $p_{\sbfbar}=\prod_{d=1}^{D}p_{d,\sbar_d}$ is the probability of choosing the $\sbfbar$th circuit implementing operation $\Oscr^{\thetabf}_{\sbfbar}(\cdot)$; and we have $\sgn(q_{\sbfbar})=\prod_{d=1}^{D} \sgn(q_{d,\sbar_d})$ with $\sgn$ being the sign function. Note that the sum in \eqref{eq:QPR_overallunitary} is over all $4^{nD}$ values of string $\sbfbar$.
As illustrated in Fig.~\ref{fig:QEM}, QEM can thus exactly recover the true expected observable as
\begin{align}
    \Tr(H \Uscr^{\thetabf}(\rho_0))&=Z \sum_{\sbfbar}p_{\sbfbar}\sgn(q_{\sbfbar}) \langle H \rangle_{ \rho_{\sbfbar}(\thetabf)}\label{eq:H_qem},
\end{align}where \begin{align}\rho_{\sbfbar}(\thetabf)={\Oscr}^{\thetabf}_{\sbfbar}(\rho_0)\end{align} is the quantum state obtained by applying the $\sbfbar$th noisy circuit $\Oscr^{\thetabf}_{\sbfbar}(\cdot)$;  and $\langle H \rangle_{ \rho_{\sbfbar}(\thetabf)}=\Tr(H\rho_{\sbfbar}(\thetabf))$ is the expected observable under the $\sbfbar$th quantum state. 

Evaluating \eqref{eq:H_qem} becomes practically infeasible as the number $n$ of qubits or the number $D$ of unitary maps grows, since the number of terms in the sum \eqref{eq:QPR_overallunitary} grow as $4^{nD}$.
\begin{algorithm}
    \caption{$\mathrm{QEM}$ via $\mathrm{MC}$ sampling}
  \begin{algorithmic}[1]
    \INPUT Ideal unitary gates $\{\Uscr_d^{\theta_d}\}_{d=1}^D$, set of implementable operations $\Omega$, and number of circuits $N_c$
    \OUTPUT $\{\sgn(q_{\sbfbar_l})\}_{l=1}^{N_c}$, $\{\Oscr_{\sbfbar_l}^{\thetabf}\}_{l=1}^{N_c}$ and $\langle H \rangle_{\rho_{\sbfbar_{1:N_c}}(\thetabf)}$
    \STATE Using \eqref{eq:decomposition} obtain the quasi-probability vector $q_d=[q_{d,s}]_{s=0}^{4^n-1}$, for $d=1,\hdots,D$.
    \STATE Compute $Z_d$ and $Z$ using \eqref{eq:Z},  and $p_{d,\sbar}$ using \eqref{eq:pmf}
    \FOR{$l=1,\hdots,N_c$}
      \FOR{$d=1,\hdots,D$}
%      \STATE  Set $\Bscr^{\theta_d}_d= \Uscr^{\theta_d}_d$
      \STATE Sample a noisy circuit $s_d \sim p_{d,s}$
      to chose the parameterized gate $\Oscr^{\theta_d}_{s_d}$
      \ENDFOR
      \STATE Compute $\sgn(q_{\sbfbar_l})=\prod_{d=1}^D \sgn(q_{d,\sbar_d})$
%\STATE Implement the resulting PQC
%$ \Oscr^{\thetabf}_{\alphab_l}=\Oscr^{\theta_D}_{\alpha_D}\circ \hdots \circ \Oscr^{\theta_1}_{\alpha_l}$
\STATE Compute $\langle H_{\sgn} \rangle_{\rho_{\sbfbar_l}}=\sgn(q_{\sbfbar_l}) \Tr(H {\Oscr}^{\thetabf}_{\sbfbar_l}(\rho_0))$
\ENDFOR
\STATE \textbf{Return} $\{\sgn(q_{\sbfbar_l})\}_{l=1}^{N_c}$, $\{\Oscr_{\sbfbar_l}^{\thetabf}\}_{l=1}^{N_c}$ and $\langle H \rangle_{\rho_{\sbfbar_{1:N_c}}(\thetabf)}=\frac{Z}{N_c} \sum_l \langle H_{\sgn} \rangle_{\rho_{\sbfbar_l}} $
  \end{algorithmic}
\end{algorithm}
Consequently, in practice, quasi-probabilistic error mitigation is implemented via a Monte Carlo sampling of $N_c$ circuits $\{\sbfbar_1,\hdots, \sbfbar_{N_c}\}:=\sbfbar_{1:N_c}$ in an i.i.d manner from the distribution $p_{\sbfbar}$ \cite{temme2017error,xiong2022accuracy}. This in turn, gives an unbiased estimator of the QEM-mitigated expected observable  \eqref{eq:H_qem} as
\begin{align}
 \langle H \rangle_{\rho_{\sbfbar_{1:N_c}}(\thetabf)}=Z \sum_{l=1}^{N_c}\frac{1}{N_c} \sgn(q_{\sbfbar_l})\langle H \rangle_{\rho_{\sbfbar_l}(\thetabf)}. \label{eq:H_qem_MC}
\end{align}
%However, to ensure $\delta$-accurate estimate of the ideal expected observable, the above Monte-Carlo sampling scheme requires the number of circuit samples to scale as $N_c \propto \frac{Z^2}{\delta^2}$ .
%The term $Z^2$ determines the sampling overhead for the QEM scheme which increases exponentially with the number of unitary maps $D$ \cite{takagi2021optimal}.  
We refer to Algorithm 1 for a summary of QEM.

%It is easy to see that $\langle H \rangle_{\rho^{\alphab_{1:N_c}}(\theta)}$ is an unbiased estimator of the true expected observable $\langle H \rangle_{\vert \Psi(\thetabf)\rangle}$ of \eqref{eq:H_qem}. However, from the classical theory of Monte-Carlo sampling \cite{takagi2021optimal}, the  number of circuit samples $N_c$ required to obtain a $\delta$-accurate estimate of the ideal expectation $\langle H \rangle_{ \vert \Psi(\thetabf)\rangle}$ scales as $N_c \propto \frac{Z^2}{\delta^2}$. Thus, 
\subsection{QEM-Based Gradient Estimator}
In this section, we describe the QEM-based gradient estimator used to mitigate the bias due to gate noise. At each iteration $t$,  the QEM-based gradient estimator $\hat{g}_t^{\QEM}$ is obtained in two steps. In the first step, the gradient estimator samples $N_c$ noisy circuits $\{\Oscr^{\thetabf^{t-1}}_{\sbfbar_l}\}_{l=1}^{N_c}$, with the current parameter vector $\thetabf^{t-1}$. This is done by using the distribution $p_{\sbfbar}$ as described in Algorithm~1.   
For each $l$th sampled noisy circuit, the gradient estimator then applies the parameter-shift rule as
\begin{align}
    [\hat{g}_{t,l}^{\QEM}]_d=\frac{1}{2}\biggl(\widehat{\langle H \rangle}_{\rho_{\sbfbar_l}(\thetabf^{t-1}+\frac{\pi}{2}{e}_d)} -\widehat{\langle H \rangle}_{\rho_{\sbfbar_l}(\thetabf^{t-1}-\frac{\pi}{2}{e}_d)} \biggr), \label{eq:percircuit_gradient}
    \end{align}with $l=1,\hdots, N_c$. In \eqref{eq:percircuit_gradient}, the term $\widehat{\langle H \rangle}_{\rho_{\sbfbar}(\thetabf)}$ denotes an empirical estimator of the expected observable $\langle H \rangle_{\rho_{\sbfbar}(\thetabf)}$,  obtained from $N_{\QEM}$ i.i.d. measurements $\{H^{\sbfbar}_1,\hdots, H^{\sbfbar}_{N_{\QEM}}\}$ of observable $H$ under the sampled noisy quantum state $\rho_{\sbfbar}(\thetabf)$. This yields the output $H^{\sbfbar}_j=h_y$ with probability
\begin{align}
    p_{\sbfbar}(y|\thetabf) = \Tr(\Pi_y \rho_{\sbfbar}(\thetabf)). \label{eq:circuit_distribution}
\end{align} Finally, the
 $d$th component of the gradient estimator, $[\hat{g}_t^{\QEM}]_d$ is obtained as 
\begin{align}
 [\hat{g}^{\QEM}_t]_d=\frac{Z}{ N_c}&\sum_{l=1}^{N_c}\sgn(q_{\sbfbar_l}) [\hat{g}_{t,l}^{\QEM}]_d\label{eq:QEM_gradientestimator},
\end{align} by averaging the product of the per-circuit gradient estimation $[\hat{g}_{t,l}^{\QEM}]_d$ in \eqref{eq:percircuit_gradient} over the $N_c$ sampled circuits, and multiplying by the normalizing constant $Z$. The QEM-based gradient estimator is described in Algorithm~2.

To enable ease of comparison of the convergence behavior of the QEM-based gradient estimator with the gradient estimators described in previous sections, we fix the measurement budget to $N_m$ shots for the total of $N_c$ circuits. In other words, we fix $N_{\QEM}=N_m/N_c$ number of measurement shots on each sampled circuit.
\begin{algorithm}
    \caption{QEM-Based SGD}
  \begin{algorithmic}[1]
    \INPUT Initialization $\thetabf^{0}$, number of iterations $T$, number of circuit samples $N_c$, learning rate $\{\eta_t\}$
    \OUTPUT Final iterate $\thetabf^T$
    \WHILE{$t\leq T$}
    \STATE Set $\thetabf=\thetabf^{t-1}$
    \STATE Get $\{\sgn(q_{\sbfbar_l})\}_{l=1}^{N_c}, \{\Oscr_{\sbfbar_l}^{\thetabf}\}_{l=1}^{N_c}$ from Algorithm~1
    \FOR{$d=1,\hdots,D$}
    \STATE Set $g_d=[\cdot ]$
      \FOR{$l=1,\hdots,N_c$}
      \STATE  Implement  $ \Oscr_{\sbfbar_l}^{\thetabf \pm  \frac{\pi  }{2}\mathbf{e}_d} (\rho_0) $
   %   \STATE Implement the PQC operation $ \Oscr_{\alphab_l}^{\thetabf - \frac{\pi  }{2}\mathbf{e}_d} (\rho_0) $
      \STATE Compute $[\hat{g}_{t,l}^{\QEM}]_d$ using \eqref{eq:percircuit_gradient}
      \STATE Set
      $g_d[l] = \sgn(q_{\sbfbar_l}) [\hat{g}_{t,l}^{\QEM}]_d$
      \ENDFOR
\STATE Compute 
$[\hat{g}_t^{\QEM}]_d=\frac{Z}{N_c} \sum_{l=1}^{N_c} g_d[l]$
\ENDFOR
\STATE Update parameter vector as
$\thetabf^t \leftarrow \thetabf -\eta_t \hat{g}_t^{\QEM} $
\STATE Update $t\leftarrow t+1$
\ENDWHILE
\STATE Return final parameter iterate $\thetabf^T$
  \end{algorithmic}
\end{algorithm}
\subsection{Properties of the Gradient Estimate}\label{sec:QEM_properties}
%In this section, we study a quasi-probabilistic error mitigated stochastic gradient estimator that can mitigate the bias in the gradient estimator of \eqref{eq:biased_estimator}. Specifically, at $t$th iteration of SGD in \eqref{eq:SGD}, we consider the gradient estimator $\hat{g}^{\QEM}_t$ whose $d$th component corresponds to
 %The $d$th component of the gradient estimator in \eqref{eq:QEM_gradientestimator} is obtained by using parameter-shift rule based empirical gradient estimation on each of the $N_c$ noisy sampled circuits used for quasi-probabilistic quantum error mitigation. The product of the per-circuit gradient estimator with the circuit parity, $\sgn(q_{\alphab_l})$, is then averaged over the sampled circuits and multiplied with the normalizing constant $Z$ to get the final estimate in \eqref{eq:QEM_gradientestimator}.
%the  Each sampled circuit is used to error mitigate the noisy quantum circuit whose $d$th parameter $\theta_d$ is phase-shifted by $\pm \pi/2$. We make $N_m$ i.i.d. quantum measurements on the resulting circuit $D^{t,\alphab_l}_{d,\pm}=\{Y^{\alphab_l}_{d,\pm,1},\hdots, Y^{\alphab_l}_{d,\pm,N_m}\}$, which is subsequently used to evaluate the inner empirical estimator  $\widehat{\langle H \rangle}_{\rho^{\alphab_l}(\thetabf^t\pm \mathbf{e}_d\pi/2)}=\sum_{y=1}^{N'}\lambda_y \frac{1}{N_m}\sum_{j=1}^{N_m} \Ibb\{Y^{\alphab_l}_{d,\pm,j}=\lambda_y\}$. 
The QEM-based gradient estimator of \eqref{eq:QEM_gradientestimator} has two sources of randomness, namely the randomness arising from sampling noisy circuits $\{\Oscr^{\thetabf}_{\sbfbar_l}\}_{l=1}^{N_c}$, as well as that arising from the quantum measurements. Accordingly, we can decompose the QEM-based gradient estimator in \eqref{eq:QEM_gradientestimator} as
\begin{align}
 \hat{g}^{\QEM}_t= g_t+  \xi^{\QEM}_t,   
\end{align} where $\xi^{\QEM}_t$ denotes the total noise due to shot noise and circuit sampling noise, that satisfies
$\Ebb[\xi^{\QEM}_t]=0$ and $\mathrm{var}(\xi^{\QEM}_t)=\Ebb[\lVert \hat{g}^{\QEM}_t- g_t \rVert^2]$, with the expectation taken over the randomly sampled noisy circuits as well as over  quantum measurements. It can be seen that the following equality holds (see proof in Appendix~\ref{app:variance_QEM}) \begin{align}
\mathrm{var}(\xi^{\QEM}_t)= \Ebb[\lVert \hat{g}_t^{\QEM}-g_t^{\Circ}\rVert^2] + \Ebb[\lVert g_t^{\Circ}-g_t \rVert^2] \label{eq:variance_decomposition}, \end{align} where 
\begin{align}
 [g^{\Circ}_t]_d&=\frac{Z}{2 N_c}\sum_{l=1}^{N_c}\sgn(q_{\sbfbar_l}) \Bigl({\langle H \rangle}_{\rho_{\sbfbar_l}(\thetabf^t+\frac{\pi}{2}{e}_d)} \non \\&-{\langle H \rangle}_{\rho_{\sbfbar_l}(\thetabf^t-\frac{\pi}{2}{e}_d)} \Bigr) \label{eq:auxiliary_loss}
 \end{align}is the shot-noise free estimate of the gradient obtained via $N_c$ noisy sampled circuits. While the first term of \eqref{eq:variance_decomposition} captures the impact of finite number, $N_{\QEM}=N_m/N_c$, of measurements made per sampled noisy circuit,  the second term captures the impact of sampling finite number of noisy circuits on the variance of the QEM-based gradient estimator.  
With this insight, the next theorem provides bounds on the variance of the QEM-based gradient estimator.
\begin{theorem}\label{lem:variance_QEM}
The following inequality holds for the variance of the QEM-based gradient estimator,
\begin{align}
   c_1(\gamma) \mathrm{var}(\xi^{\Escr}_t) &\leq  \mathrm{var}(\xi^{\QEM}_t) \leq V^{\QEM},  \label{eq:variance_QEM}
   \end{align}
  where $\mathrm{var}(\xi^{\Escr}_t)$ defined as in \eqref{eq:variance_noisy}; we have $c_1(\gamma)\geq 0$; and \begin{align} V^{\QEM} &= \frac{N_h DZ^2 \Tr(H^2)}{2 N_m}\sup_{\thetabf \in \Real^D ,y \in \{1,\hdots,N_h\}}\Ebb_{\sbfbar}[\nu(p_{\sbfbar}(y|\thetabf)) ] \non\\ & +\frac{Z^2 D \lVert H \rVert_{\infty}^2}{N_c} \label{eq:VQEM} \end{align}
 satisfies the inequality \begin{align}
     V^{\QEM} \geq c_1(\gamma) V^{\Escr}+\frac{c_2(\gamma) D\lVert H \rVert_{\infty}^2}{N_c}, \label{eq:relation_3}
\end{align} where $c_2(\gamma)\geq 1$ and $p_{\sbfbar}(y|\thetabf)$ is defined as in \eqref{eq:circuit_distribution}. Furthermore, in the special case of the depolarizing channel, the inequality $c_1(\gamma)\geq 1$ is satisfied,  and both $c_1(\gamma)$ and $c_2(\gamma)$ are non-decreasing functions of $\gamma$ with $c_1(0)=c_2(0)=1$.
%\begin{align}
 %   \Ebb[||\mathrm{bias}^{\QEM}||^2] &\leq \frac{Z^2 ||H||_{\infty}^2}{4N_c} \sum_{d=1}^{D}||\Uscr_{d}^{\thetabf^t+\mathbf{e}_d \pi/2}-\Uscr_d^{\thetabf^t -\mathbf{e}_d \pi/2}||_{\diamond}^2 \non \\
  %  & := B^{\QEM} \label{eq:bias_QEM}
%where the expectation is over the noisy sampled circuits and $||\Uscr-\Vscr||_{\diamond}$ denotes the diamond norm between quantum channels $\Uscr$ and $\Vscr$.
\end{theorem}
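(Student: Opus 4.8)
The plan is to build everything on the exact variance decomposition \eqref{eq:variance_decomposition}, which writes $\mathrm{var}(\xi^{\QEM}_t)=S_1+S_2$ with the \emph{shot-noise term} $S_1:=\Ebb[\lVert\hat g_t^{\QEM}-g_t^{\Circ}\rVert^2]$, capturing the $N_{\QEM}=N_m/N_c$ measurements per sampled circuit, and the \emph{circuit-sampling term} $S_2:=\Ebb[\lVert g_t^{\Circ}-g_t\rVert^2]$, capturing the use of only $N_c$ circuits. Since $S_1,S_2\ge0$, the lower-bound half $c_1(\gamma)\mathrm{var}(\xi^{\Escr}_t)\le\mathrm{var}(\xi^{\QEM}_t)$ will follow from discarding $S_2$ and comparing $S_1$ with $\mathrm{var}(\xi^{\Escr}_t)$, while the upper bound $V^{\QEM}$ will come from bounding $S_1$ and $S_2$ separately.

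For $S_1$: conditioning on the sampled circuits $\sbfbar_{1:N_c}$, the per-circuit estimators in \eqref{eq:percircuit_gradient} are unbiased for the shot-noise-free gradients in \eqref{eq:auxiliary_loss}, mutually independent across $l$ (distinct circuits use independent measurements), and multiplied by signs with $\sgn(q_{\sbfbar_l})^2=1$, so $S_1=\frac{Z^2}{N_c^2}\sum_{l=1}^{N_c}\sum_{d=1}^{D}\Ebb_{\sbfbar_l}\bigl[\mathrm{var}([\hat g_{t,l}^{\QEM}]_d\mid\sbfbar_l)\bigr]$. Each inner variance is that of a parameter-shift estimate formed from $N_{\QEM}$ measurements of $H$ on the state $\rho_{\sbfbar_l}$ evaluated at the shifted points $\thetabf^t\pm\tfrac{\pi}{2}e_d$, and is bounded exactly as in Lemma~\ref{lem:1} through the Bernoulli variances $\nu(p_{\sbfbar}(y|\thetabf))$ and $\Tr(H^2)$; using $N_cN_{\QEM}=N_m$, this collapses to the first term of $V^{\QEM}$. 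For $S_2$: by the quasi-probabilistic identity \eqref{eq:H_qem}, $g_t^{\Circ}$ is the empirical mean of $N_c$ i.i.d.\ unbiased estimates of $g_t$, hence $S_2\le\frac{1}{N_c}$ times the second moment of a single summand, which is at most $Z^2 D\lVert H\rVert_\infty^2$ because $|\langle H\rangle_{\rho_{\sbfbar}}|\le\lVert H\rVert_\infty$ for every density matrix; this is the second term of $V^{\QEM}$, completing $\mathrm{var}(\xi^{\QEM}_t)\le V^{\QEM}$.

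For the lower bound and \eqref{eq:relation_3}, I would write $S_1$ and $\mathrm{var}(\xi^{\Escr}_t)$ as sums over the $D$ coordinates of rescaled exact single-measurement variances, namely $S_1=\frac{Z^2}{4N_m}\sum_d\Ebb_{\sbfbar}[\mathrm{Var}(h;\rho_{\sbfbar}(\thetabf^t{+}\tfrac{\pi}{2}e_d))+\mathrm{Var}(h;\rho_{\sbfbar}(\thetabf^t{-}\tfrac{\pi}{2}e_d))]$ versus $\mathrm{var}(\xi^{\Escr}_t)=\frac{1}{4N_m}\sum_d[\mathrm{Var}(h;\rho^{\Escr}(\thetabf^t{+}\tfrac{\pi}{2}e_d))+\mathrm{Var}(h;\rho^{\Escr}(\thetabf^t{-}\tfrac{\pi}{2}e_d))]$, and take $c_1(\gamma)$ to be $Z^2$ times the worst-case (over $\thetabf^t$) ratio of these sums (equivalently, a quantity bounded above by $Z^2\sup_{\thetabf,y}\Ebb_{\sbfbar}[\nu(p_{\sbfbar}(y|\thetabf))]/\sup_{\thetabf,y}\nu(p^{\Escr}(y|\thetabf))$); nonnegativity $c_1(\gamma)\ge0$ is then immediate, $c_1(\gamma)\mathrm{var}(\xi^{\Escr}_t)\le S_1\le\mathrm{var}(\xi^{\QEM}_t)$ holds, and matching the two pieces of $V^{\QEM}$ gives \eqref{eq:relation_3} with $c_2(\gamma)=Z^2=\prod_{d=1}^{D}Z_d^2\ge1$, using $Z_d\ge1$ from the trace-preservation property recalled in Section~\ref{sec:QEM}.

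Finally, for the depolarizing channel: at $\gamma=0$ there is no gate noise, every sampled circuit is the ideal PQC, $Z_d=1$ hence $Z=1$, $\rho_{\sbfbar}=\Psi$ and $g_t^{\Circ}=g_t$, so all bounds reduce to the shot-noise-only quantities, yielding $c_1(0)=c_2(0)=1$. For $\gamma>0$, I would insert the explicit quasi-probability representation of the ideal gate under global depolarizing noise to obtain a closed form for $Z_d$, which with $\gamma=1-(1-\epsilon)^D$ shows that $Z$, hence $c_2=Z^2$, is non-decreasing in $\gamma$; and I would use the ``absorbing'' property of the depolarizing channel — each sampled state has the same form $\rho_{\sbfbar}(\thetabf)=(1-\gamma)|\phi_{\sbfbar}(\thetabf)\rangle\langle\phi_{\sbfbar}(\thetabf)|+\gamma\tfrac{I}{2^n}$ with the same $\gamma$ as in $\rho^{\Escr}(\thetabf)$ in \eqref{eq:decomposition_density} — together with the concavity-in-$\gamma$ facts of Appendix~\ref{app:properties}, to conclude $c_1(\gamma)\ge1$ and non-decreasing. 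The main obstacle is precisely this last monotonicity: $c_1(\gamma)$ is a product of the rapidly growing sampling overhead $Z^2$ and a relative per-circuit variance that typically shrinks with $\gamma$, so showing the product is non-decreasing and never drops below $1$ requires carefully balancing these two effects via the explicit $Z$ and the structure of $\nu(p^{\Escr}(y|\thetabf))$ in \eqref{eq:relation_1}; a secondary, bookkeeping-heavy obstacle is the conditional-independence argument behind the closed form of $S_1$ (independence across the $N_c$ circuits and the $2D$ shifted evaluations, and its interplay with the random signs $\sgn(q_{\sbfbar_l})$).
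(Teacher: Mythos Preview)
Your decomposition $\mathrm{var}(\xi^{\QEM}_t)=S_1+S_2$ and the upper bounds on $S_1$ and $S_2$ are correct and coincide with the paper's argument for $V^{\QEM}$; likewise $c_2(\gamma)=Z^2\ge1$ is exactly what the paper takes.

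The gap is in your treatment of the lower bound and hence of $c_1(\gamma)$. You propose to \emph{define} $c_1(\gamma)$ as $Z^2$ times an infimum over $\thetabf^t$ of the ratio $\sum_d\Ebb_{\sbfbar}[\mathrm{Var}(h;\rho_{\sbfbar})]/\sum_d\mathrm{Var}(h;\rho^{\Escr})$. That makes $c_1(\gamma)\ge0$ tautological, but it also makes $c_1$ depend on the ansatz and on the observable $H$, not just on the noise level $\gamma$, and it leaves you with no handle for the depolarizing-channel claims --- which you yourself flag as the ``main obstacle.'' Your proposed fix via the absorbing property of the depolarizing channel tells you only that every $\rho_{\sbfbar}$ has the same $\gamma$-mixture structure; it does not by itself compare $\Ebb_{\sbfbar}[\mathrm{Var}(h;\rho_{\sbfbar})]$ with $\mathrm{Var}(h;\rho^{\Escr})$, so the argument does not close.

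The paper sidesteps all of this with one observation you are missing: among the circuits being averaged in $\Ebb_{\sbfbar}[\cdot]$, the particular index $\sbfbar=\{0\}^D$ (identity Pauli at every layer) reproduces the unmitigated noisy PQC exactly, i.e.\ $\rho_{\{0\}^D}(\thetabf)=\rho^{\Escr}(\thetabf)$. Dropping every other nonnegative summand in the expectation gives $\Ebb_{\sbfbar}[\mathrm{Var}(h;\rho_{\sbfbar})]\ge p_{\{0\}^D}\,\mathrm{Var}(h;\rho^{\Escr})$ termwise, whence $S_1\ge Z^2 p_{\{0\}^D}\,\mathrm{var}(\xi^{\Escr}_t)$ and, by the same trick applied to the first term of $V^{\QEM}$, inequality \eqref{eq:relation_3}. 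This yields the \emph{explicit}, noise-model-only choice $c_1(\gamma)=Z^2 p_{\{0\}^D}=(Z_d^2\,p_{d,0})^D$. For the depolarizing channel the paper then plugs in the closed forms for $Z_d$ and $p_{d,0}$ from \cite{takagi2021optimal} and verifies $Z_d^2 p_{d,0}\ge1$ and $\tfrac{d}{d\gamma}(Z_d^2 p_{d,0})\ge0$ by direct algebra; no balancing of competing effects or concavity arguments are needed.
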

\begin{proof}
Proof can be found in Appendix~\ref{app:variance_QEM}.
\end{proof}

Theorem~\ref{lem:variance_QEM} highlights three key points. First, as can be seen from \eqref{eq:variance_QEM}, the variance of the QEM-based gradient estimator is at least $c_1(\gamma)$ times the variance of the unmitigated gradient estimator. For depolarizing noise, the function  $c_1(\gamma)\geq 1$ is non-decreasing with $\gamma$, implying that QEM-based gradient estimator has larger variance than the biased noisy gradient estimator.

Second, the upper bound $V^{\QEM}$ in \eqref{eq:VQEM}  on the variance of the QEM-based gradient estimator comprises the contribution of shot noise, as captured by the first term, as well as of the circuit sampling noise, as captured by the second term. Accordingly, the bound \eqref{eq:VQEM} suggests that, even when an infinite number of circuits is sampled, \ie when $N_c \rightarrow \infty$, the finite number of per-circuit measurements results in a non-zero variance. Finally, the inequality \eqref{eq:relation_3} relates the bound $V^{\QEM}$ of the variance with QEM to the bound $V^{\Escr}$ obtained without QEM, as defined in \eqref{eq:variance_noisy}. Noting that functions $c_1(\gamma)$ and $c_2(\gamma)$ are non-decreasing with $\gamma$ for depolarizing noise, the inequality in \eqref{eq:relation_3} suggests that the variance of the QEM-based estimator generally increases with gate noise.
\subsection{Convergence of QEM-Based SGD}
Based on the analysis of the stochastic gradient in the previous subsection, we have the following convergence result for SGD using the QEM-based gradient estimator $\hat{g}_t^{\QEM}$.
\begin{theorem}\label{thm:QEM}
Under Assumption~\ref{assum:1}, for any given initial point $\theta^0$, the following bound on the optimality gap holds for the SGD employing QEM-based gradient estimator in \eqref{eq:QEM_gradientestimator}, given any fixed learning rate $\eta_t=\eta\leq 1/\Lscr$
\begin{align}
    \Ebb[L(\theta^T)]-L^{*} &\leq (1-\eta \mu)^T (\Ebb[L(\theta^0)]-L^{*})\non \\&+\frac{1}{2} \biggl[\frac{\eta \Lscr V^{\QEM}}{\mu} \biggr]\label{eq:convergence_2},
\end{align}where the expectations are taken over quantum measurements and noisy circuit samples, and $V^{\QEM}$ is defined as in \eqref{eq:variance_QEM}.
Furthermore, given some target error level $\delta>0$, for learning rate $\eta =\eta^{\QEM} \leq \min \{ \frac{1}{\Lscr}, \frac{\delta \mu}{\Lscr V^{\QEM}}\}$, a number of iterations
\begin{align}
   T^{\QEM}=\tilde{\Oscr}\biggl(\log \frac{1}{\delta} + \frac{V^{\QEM}}{\delta \mu} \biggr)\frac{\Lscr}{\mu} \label{eq:iterations_QEM}
\end{align} is sufficient to ensure the error $\Ebb[L(\thetabf^T)]-L^{*}=\Oscr(\delta)$.
\end{theorem}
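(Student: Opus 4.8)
The plan is to treat the QEM-based estimator $\hat{g}_t^{\QEM}$ of \eqref{eq:QEM_gradientestimator} as an \emph{unbiased} stochastic gradient with bounded variance, and then to reuse verbatim the classical SGD argument that produced the noiseless bound \eqref{eq:convergence_1}, namely \cite[Theorem~6]{ajalloeian2020convergence}, with $V$ replaced by $V^{\QEM}$ and with zero bias. The first step is therefore to establish unbiasedness, $\Ebb[\hat{g}_t^{\QEM}]=g_t=\nabla L(\thetabf^t)$, where the expectation is over both the Monte Carlo sampling $\sbfbar_{1:N_c}\sim p_{\sbfbar}$ of circuits and the $N_{\QEM}=N_m/N_c$ measurements per circuit. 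Taking the measurement expectation first, each per-circuit term in \eqref{eq:percircuit_gradient} satisfies $\Ebb_{\mathrm{meas}}[[\hat{g}_{t,l}^{\QEM}]_d]=\tfrac12(\langle H\rangle_{\rho_{\sbfbar_l}(\thetabf^t+\frac{\pi}{2}e_d)}-\langle H\rangle_{\rho_{\sbfbar_l}(\thetabf^t-\frac{\pi}{2}e_d)})$; then, since the $\sbfbar_l$ are i.i.d.\ $\sim p_{\sbfbar}$, averaging $Z\,\sgn(q_{\sbfbar_l})$ times this quantity over $l$ reproduces, via the exact QEM identity \eqref{eq:QPR_overallunitary}--\eqref{eq:H_qem} applied to the two phase-shifted circuits, the difference $\tfrac12(\langle H\rangle_{\vert\Psi(\thetabf^t+\frac{\pi}{2}e_d)\rangle}-\langle H\rangle_{\vert\Psi(\thetabf^t-\frac{\pi}{2}e_d)\rangle})=\tfrac12(L(\thetabf^t+\frac{\pi}{2}e_d)-L(\thetabf^t-\frac{\pi}{2}e_d))$, which by the parameter-shift rule \eqref{eq:unbiasedgradient_1} equals $[g_t]_d$. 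Hence $\hat{g}_t^{\QEM}=g_t+\xi_t^{\QEM}$ with $\Ebb[\xi_t^{\QEM}]=0$; in contrast to the unmitigated decomposition \eqref{eq:noisy_decomposition}, no bias term survives.

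Second, I would import the variance bound $\mathrm{var}(\xi_t^{\QEM})\leq V^{\QEM}$ from Theorem~\ref{lem:variance_QEM}, with $V^{\QEM}$ as in \eqref{eq:VQEM}. With unbiasedness and this bound in hand, the remainder is the standard descent argument under Assumption~\ref{assum:1}: from $\Lscr$-smoothness, $L(\thetabf^{t+1})\leq L(\thetabf^t)-\eta\nabla L(\thetabf^t)^{\top}\hat{g}_t^{\QEM}+\tfrac{\Lscr\eta^2}{2}\lVert\hat{g}_t^{\QEM}\rVert^2$; taking conditional expectation and using $\Ebb[\hat{g}_t^{\QEM}\mid\thetabf^t]=\nabla L(\thetabf^t)$ together with $\Ebb[\lVert\hat{g}_t^{\QEM}\rVert^2\mid\thetabf^t]\leq\lVert\nabla L(\thetabf^t)\rVert^2+V^{\QEM}$; then $\eta\leq1/\Lscr$ absorbs the gradient-norm term and the $\mu$-PL inequality gives the contraction $\Ebb[L(\thetabf^{t+1})]-L^{*}\leq(1-\eta\mu)(\Ebb[L(\thetabf^t)]-L^{*})+\tfrac{\Lscr\eta^2 V^{\QEM}}{2}$; unrolling the recursion and summing $\sum_{k\geq0}(1-\eta\mu)^k=1/(\eta\mu)$ yields \eqref{eq:convergence_2}. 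This is exactly \cite[Theorem~6]{ajalloeian2020convergence} specialized to bias $0$ and variance $V^{\QEM}$, so no new estimates are needed here.

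Finally, the iteration count \eqref{eq:iterations_QEM} follows by budgeting the two terms of \eqref{eq:convergence_2}: the asymptotic term $\tfrac{\eta\Lscr V^{\QEM}}{2\mu}$ is $\Oscr(\delta)$ once $\eta\leq\delta\mu/(\Lscr V^{\QEM})$, which together with $\eta\leq1/\Lscr$ fixes $\eta=\eta^{\QEM}=\min\{1/\Lscr,\ \delta\mu/(\Lscr V^{\QEM})\}$; then forcing the transient term $(1-\eta\mu)^T(\Ebb[L(\thetabf^0)]-L^{*})=\Oscr(\delta)$ requires $T$ of order $\tfrac{1}{\eta\mu}\log\tfrac{1}{\delta}$, and substituting $1/\eta=\max\{\Lscr,\ \Lscr V^{\QEM}/(\delta\mu)\}\leq\tfrac{\Lscr}{\mu}(\mu+V^{\QEM}/\delta)$ and hiding logarithmic factors in $\tilde{\Oscr}(\cdot)$ gives $T^{\QEM}=\tilde{\Oscr}(\log\tfrac{1}{\delta}+\tfrac{V^{\QEM}}{\delta\mu})\tfrac{\Lscr}{\mu}$, as claimed.

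The main obstacle I anticipate is Step~1: one must carefully keep the normalizing constant $Z=\prod_d Z_d$ and the sign products $\sgn(q_{\sbfbar})=\prod_d\sgn(q_{d,\sbar_d})$ consistent with the factorized sampling $p_{\sbfbar}=\prod_d p_{d,\sbar_d}$, and verify that the parameter-shift rule is applied circuit-by-circuit \emph{before} the quasi-probabilistic averaging, so that the telescoped identity \eqref{eq:H_qem} holds for each of the shifted parameter vectors $\thetabf^t\pm\tfrac{\pi}{2}e_d$. Once this unbiasedness is in place and Theorem~\ref{lem:variance_QEM} is granted, the rest is a direct reuse of the noiseless analysis of Section~\ref{sec:shot_noise}.
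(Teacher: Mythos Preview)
Your proposal is correct and follows essentially the same approach as the paper. The paper does not give a standalone proof of Theorem~\ref{thm:QEM}; it relies on the unbiasedness $\Ebb[\xi_t^{\QEM}]=0$ established in Section~\ref{sec:QEM_properties} together with the variance bound $V^{\QEM}$ from Theorem~\ref{lem:variance_QEM}, after which the result follows by the same reference \cite[Theorem~6]{ajalloeian2020convergence} used for the shot-noise-only Theorem in Section~\ref{sec:shot_noise}, exactly as you outline.
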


Theorem~\ref{thm:QEM} can be used to compare the convergence of QEM-based SGD with that of the SGD under shot and gate noise in Theorem~\ref{thm:noisy_convergence}. While the presence of quantum gate noise forces SGD to settle at an error floor of $\Oscr(\delta)$ with $\delta> B^{\Escr}\mu$, QEM can achieve any error floor, $\Oscr(\delta)$ for $\delta>0$, in a finite number $T^{\QEM}$ of iterations. We refer to Sec. I for further discussion on this result.
 
%We can interpret Theorem~\ref{thm:QEM} in the light of discussions from the previous section. As we have seen, the variance of the gradient estimator employing QEM suffers from higher variance than the unmitigated estimator. Equation \eqref{eq:iterations_QEM} then gives that QEM  requires larger number of iterations $T^{\QEM} > T^{\mathrm{gate-noise}}$ to converge to the same error floor of $\Oscr(\delta)$. 

%We illustrate these convergence analysis in numerical experiments in the next section.
\section{Experiments}\label{sec:experiments}
In this section, we present numerical results concerning the VQE to solve a weighted max-cut combinatorial optimization problem.
\subsection{Weighted Max-Cut Hamiltonian}
Consider an undirected graph $\Gscr=(V,E)$ with $V=\{1,\hdots,n\}$ denoting the set of vertices of the graph and $E \subseteq V \times V$ denoting the set of edges. Each edge $(i,j) \in E$ has an associated weight $w_{i,j}>0$ such that $w_{i,j}=w_{j,i}$. A cut of the graph defines a partition of the vertices into two distinct subsets. Specifically, a cut assigns a binary variable $x_v \in \{0,1\}$ to each vertex $v \in V$ depending on whether it belongs to one subset or to the other.   In the weighted max-cut optimization problem, the goal is to find the cut that maximizes the sum of the weights of the edges that connect the vertices belonging to the two distinct subsets, \ie, of the edges crossing the cut. This corresponds to maximizing the  cost function
\begin{align}
    C(\xbf)= \sum_{(i,j) \in E}w_{i,j}x_{i}(1-x_j) +\sum_{i}w_{i,i} x_i \label{eq:maxcut_objective}
\end{align}over the binary vector $\xbf=(x_1,\hdots,x_n)$ with $x_i \in \{0,1\}$ for all $i=1,\hdots,n$.  Note that \eqref{eq:maxcut_objective} also imposes an additional penalty on the self-weights assigned to each vertices via the second summation.

The objective function in \eqref{eq:maxcut_objective} can be converted to an Ising Hamiltonian via the mapping $x_i \mapsto (1-Z_i)/2$ where $Z_i=(I \otimes \hdots I \otimes Z \otimes I \otimes \hdots \otimes I)$ denotes an $n$-qubit operator that applies a Pauli-$Z$ gate on the $i$th-qubit. The weighted max-cut problem of maximizing  the cost in \eqref{eq:maxcut_objective} over $\xbf$ can be then equivalently expressed as minimizing the expected value $\langle H \rangle_{\vert \Psi(\thetabf) \rangle}$ of the Ising Hamiltonian
\begin{align}
    H =\sum_{i=1}^n w_{i,i} Z_i +\sum_{i<j} w_{i,j}Z_i Z_j \label{eq:problem_Hamiltonian}
\end{align} over the quantum state $\vert \Psi(\thetabf) \rangle$. To produce the state $\vert \Psi(\thetabf) \rangle = U(\thetabf) \vert 0 \rangle$,
 we consider the hardware-efficient ansatz adopted in \cite{amaro2022filtering} as shown in Figure~\ref{fig:ansatz}. Accordingly, the PQC comprises of single-qubit Pauli $Y$-rotation gates $R_y(\theta)$, as well as two-qubit CNOT gates. The dashed box in Figure~\ref{fig:ansatz} represent a layer of the PQC which can be repeated to obtain deep variational quantum circuits.

\begin{figure}
    \centering
    \includegraphics[scale=0.5,clip=true, trim= 2.6in 3.1in 2.3in 1.4in]{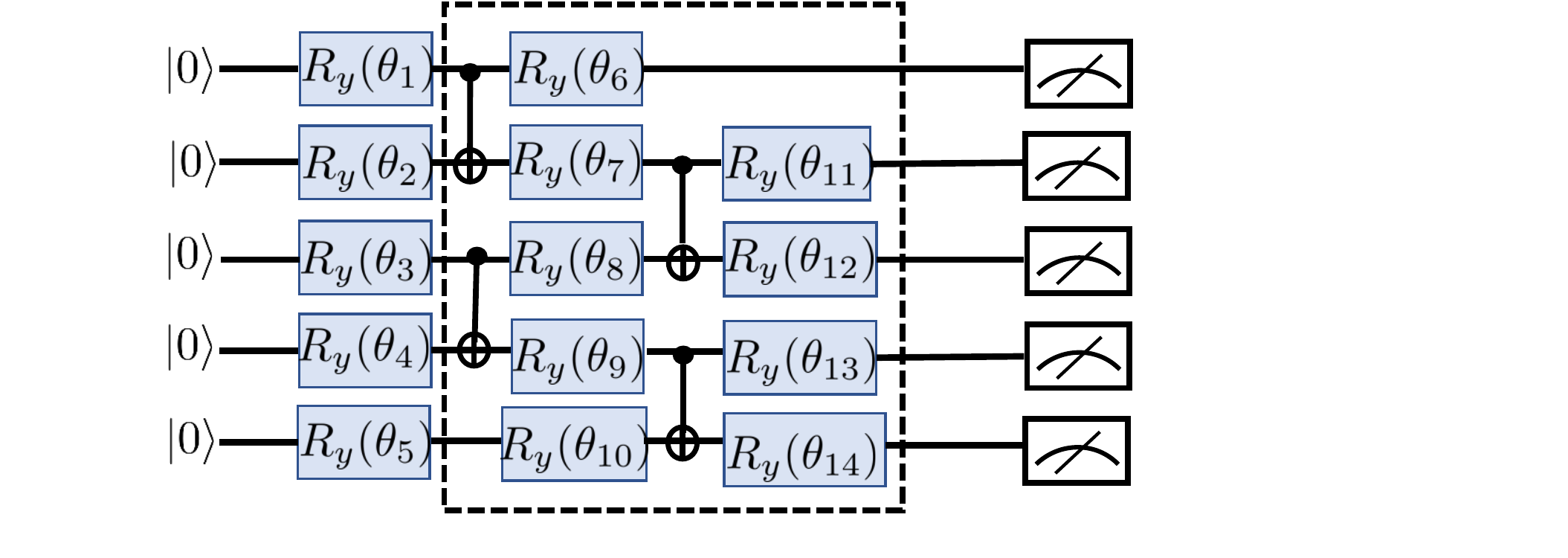}
    \caption{Hardware-efficient ansatz used in the experiments. The gates in the dashed box are repeated $L$ times.}
    \label{fig:ansatz}
\end{figure}
\subsection{Results}
We first consider a simple complete graph $\Gscr$ with $n=3$ vertices and a random $3 \times 3$ weight matrix $w=[0.41, 0.44, 0.55; 0.44, 0.97, 0.22; 0.55, 0.22,0.89]$ whose $(i,j)$th element $w_{i,j}$ denotes the weight of the edge $(i,j) \in E$. We consider the PQC defined by the hardware-efficient ansatz in Figure~\ref{fig:ansatz} with a single-layer, \ie, with $L=1$. We assume that depolarizing noise, with error probability $\epsilon$, acts only on the CNOT gates. This is motivated by the experimental observations using Qiskit that noisy CNOT gates yield larger error in measurement outcomes as compared to noisy single-qubit gates.
\begin{figure*}[h!]
     \begin{minipage}[c]{0.45 \linewidth}
     \centering
      \includegraphics[scale=0.5,clip=true, trim=0in 0in 0in 0.45in]{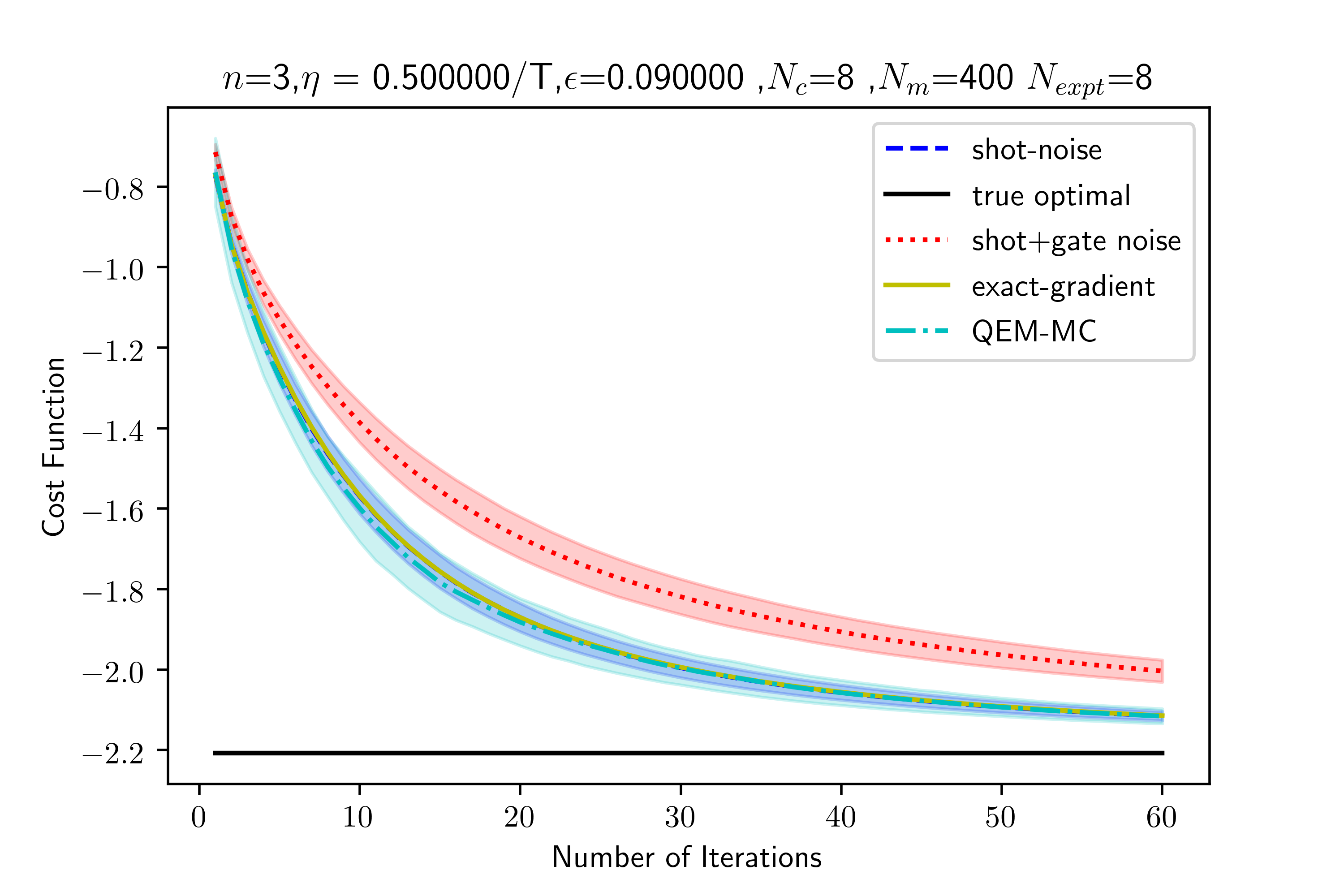}
     % \caption{$\epsilon=0.15$}
      \end{minipage} \hfill
      \begin{minipage}[c]{0.45 \linewidth}
      \centering
     \includegraphics[scale=0.5,clip=true, trim=0in 0in 0in 0.45in]{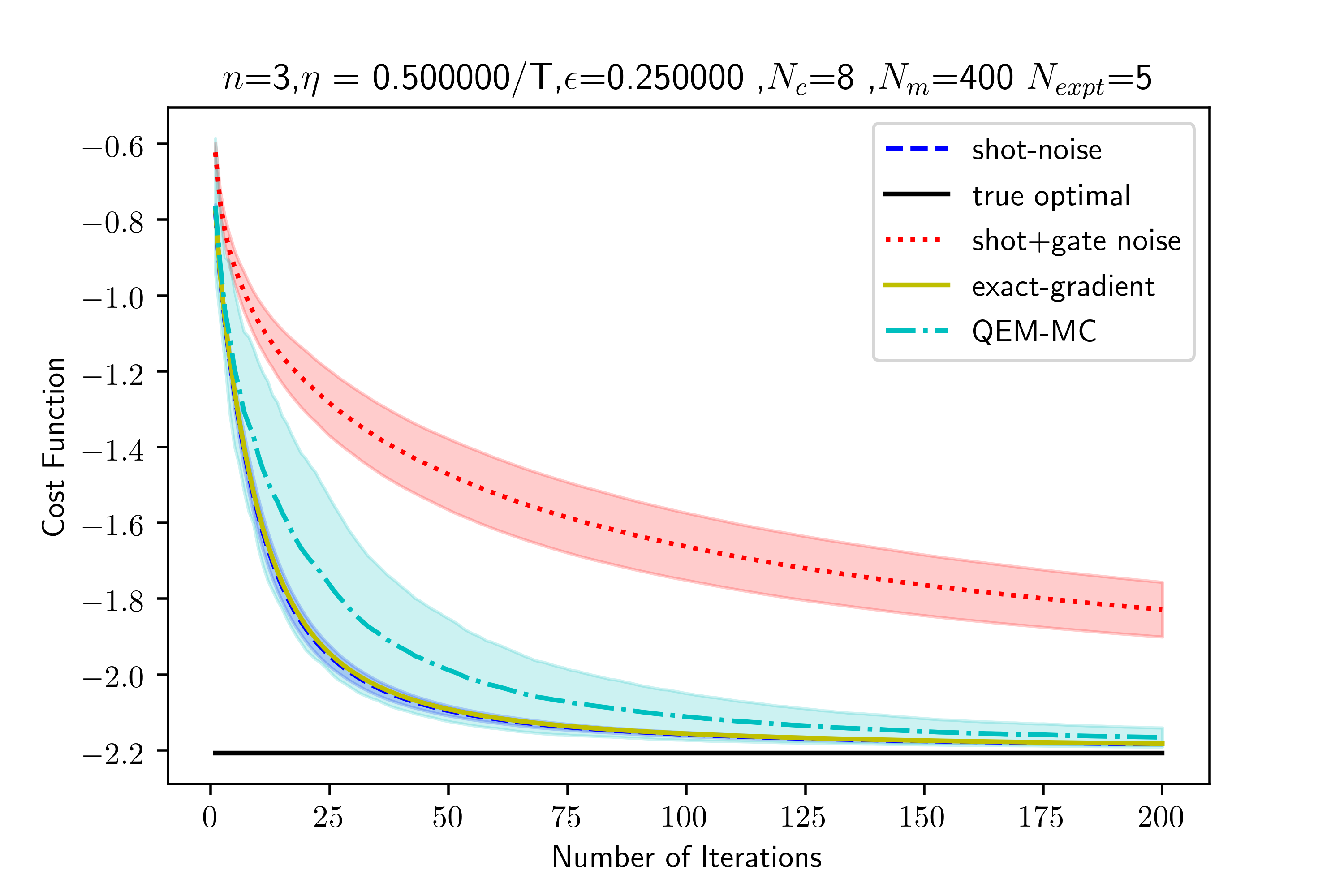}
     %\caption{$\epsilon =0.45$}
     \end{minipage}
         \caption{Convergence analysis of the SGD -- when exact gradients can be computed; when only shot-noise is present; when shot and gate noise are present; and when QEM is employed --  as a function of the number of iterations for $n=3$ qubit system subjected to depolarizing noise on the CNOT gates. The learning rate  is $\eta_t = 0.5/t$; $N_c=8$; $N_m=400$; and (left) $\epsilon=0.09$ and (right) $\epsilon=0.25$.}
         \label{fig:low_error}
\end{figure*}
   
In Figure~\ref{fig:low_error}, we study the convergence of SGD  when exact gradients can be computed; when only shot noise is present, with number of measurements $N_m=400$; when both shot and gate noise are present, with the latter modelled by the mentioned depolarizing noise with error probability $\epsilon$; and when QEM is used to combat gate noise with number of sampled noisy circuits $N_c=8$.  We compare the figure on the  left, corresponding to a smaller noise of $\epsilon=0.09$, with that on the right with a higher noise of $\epsilon=0.25$. We keep the noise level relatively high to achieve a better contrast between the above three scenarios under study, given the robustness of VQAs to noise \cite{sharma2020noise}. For each of the three scenarios, namely shot noise, shot+gate noise and QEM, we conduct $8$ experiments with the SGD iterations starting from the same fixed initial point $\thetabf^0$ to obtain a set of parameter iterates. In each experiment, the cost $\langle H \rangle_{\vert \Psi(\thetabf^t)\rangle}$ corresponding to the obtained parameter iterate $\thetabf^t$, for $t=1,\hdots,T$, is evaluated exactly (\ie $N_m \rightarrow \infty$). The mean of the costs over the experiments is indicated by the bold curve, while the spread is indicated by the lighter shadows. 

Overall, the experimental findings of Fig.~\ref{fig:low_error} corroborate our theoretical analysis. In particular, by comparing the two figures in Fig.~\ref{fig:low_error}, we observe that a larger  strength of the depolarizing noise causes a larger error floor for SGD in the presence of shot and gate noise. In contrast, QEM  achieves a lower error floor under both low and high noise strengths. When the noise strength is smaller and a sufficient number of circuits are sampled as in Fig.~\ref{fig:low_error}(left), the variance of the QEM-based estimator is reduced (see Table~\ref{tab:results}), thereby ensuring convergence at a rate comparable to the shot-noise only case. In contrast, when the noise strength is larger, by Theorem~\ref{lem:variance_QEM},  QEM requires more circuits to be sampled per iteration.

 \begin{figure*}
     \centering
     \begin{minipage}[c]{0.45 \linewidth}
     \centering
     \includegraphics[scale=0.6,clip=true, trim = 0in 0in 0in 0in]{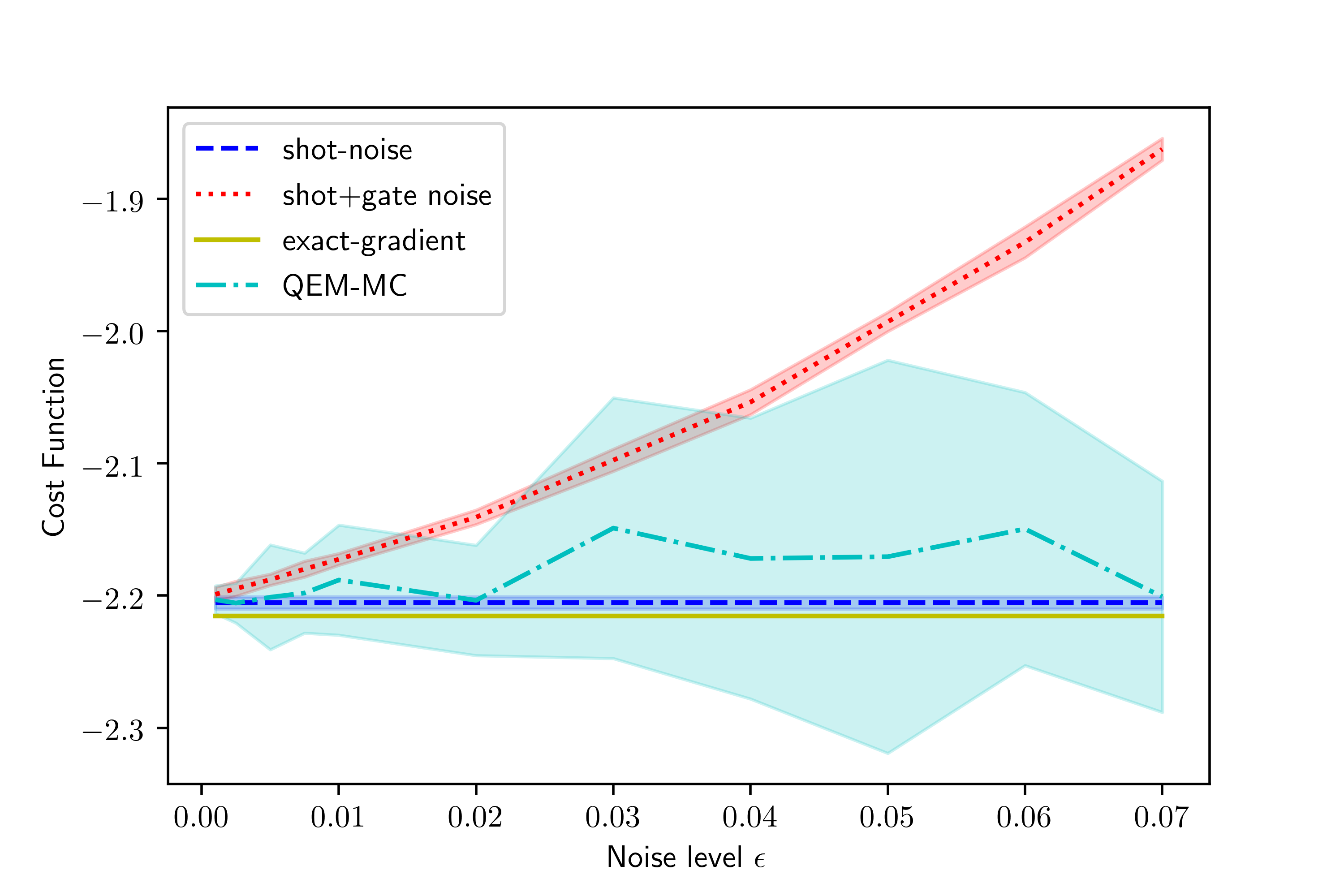}
     \end{minipage} \hfill\begin{minipage}[c]{0.45 \linewidth}
     \centering
      \includegraphics[scale=0.6,clip=true, trim = 0in 0in 0in 0in]{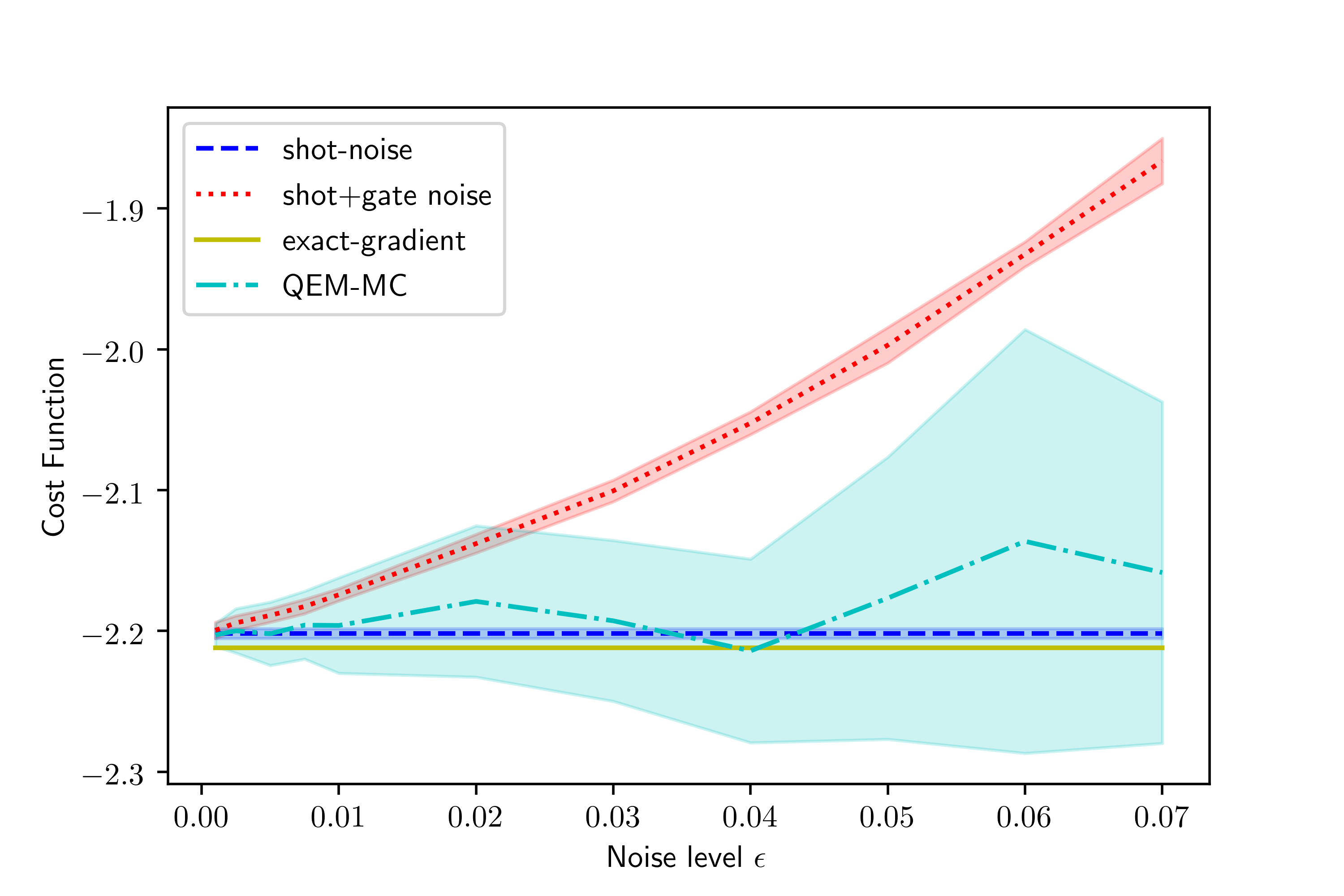}
      \end{minipage}
     \caption{Cost function $L(\thetabf^T)$ after $T=10$ iterations of the SGD when only shot noise is present; when both shot and gate noise are present; and when QEM is employed, as a function of the noise level $\epsilon$. (Left) Circuit samples $N_c=7$ and (right) $N_c=10$. Other parameters are set as $n=5$, $L=1$, $N_m=10240$, and $\eta_t=0.14$. Ground state eigenvalue for the max-cut problem is $-3.24$.}
     \label{fig:errorprobability}
 \end{figure*}
 In Figure~\ref{fig:errorprobability}, we study the exact loss $L(\thetabf^T)$ (evaluated with $N_m \rightarrow \infty$) of the SGD iterate after $T=10$ iterations when exact gradient can be computed; when shot-noise is present, with $N_m=10240$ measurement shots; when shot and gate noise are present; and when QEM is employed to mitigate the bias, as a function of the increasing noise level $\epsilon$. The figure in the left corresponds to smaller number, $N_c=7$, of circuit samples, while the figure on the right corresponds to $N_c=10$. The max-cut problem has $n=5$ vertices and a random $5 \times 5$ weight matrix $w=[ 0.42, 0.43,0.55,0.96,0.22;$ $ 0.44, 0.89, 0.07, 0.87,0.01;$ $ 0.55,$ $0.07,0.77, 0.18, 0.15;$ $ 0.96,$ $ 0.87,0.18,0.77,0.51;$ $ 0.22,$ $0.01, 0.15,0.51,0.84]$. As before, we use the harware-efficient ansatz in Fig.~\ref{fig:ansatz}, with $L=1$ layer and depolarizing noise assumed to act only on CNOT gates. We fix the learning rate as $\eta_t=0.14$. 
 
  It can be seen from Fig.~\ref{fig:errorprobability} that the presence of gate noise induces a significant bias as the strength of the noise increases. For  large noise level, QEM successfully lowers the bias induced by gate noise. However, when the noise strength is small, comparing the figures in the left and right shows that the higher variance of the QEM, resulting due to small $N_c$, may offset the benefit of QEM.
  
  \begin{figure*}
     \centering
     \begin{minipage}[c]{0.45 \linewidth}
     \centering
     \includegraphics[scale=0.6,clip=true, trim = 0in 0in 0in 0in]{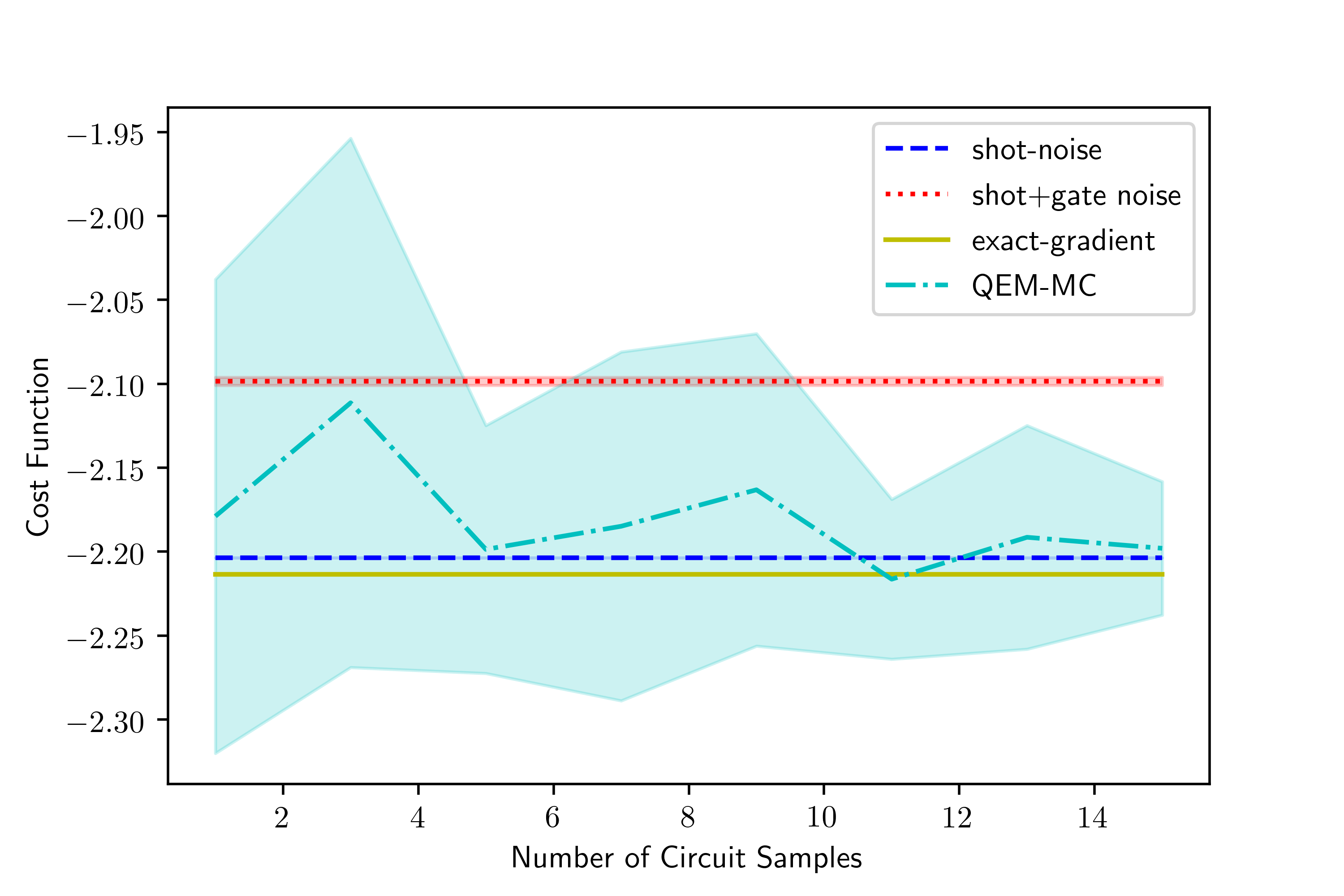}
     \end{minipage} \hfill\begin{minipage}[c]{0.45 \linewidth}
     \centering
      \includegraphics[scale=0.6,clip=true, trim = 0in 0in 0in 0in]{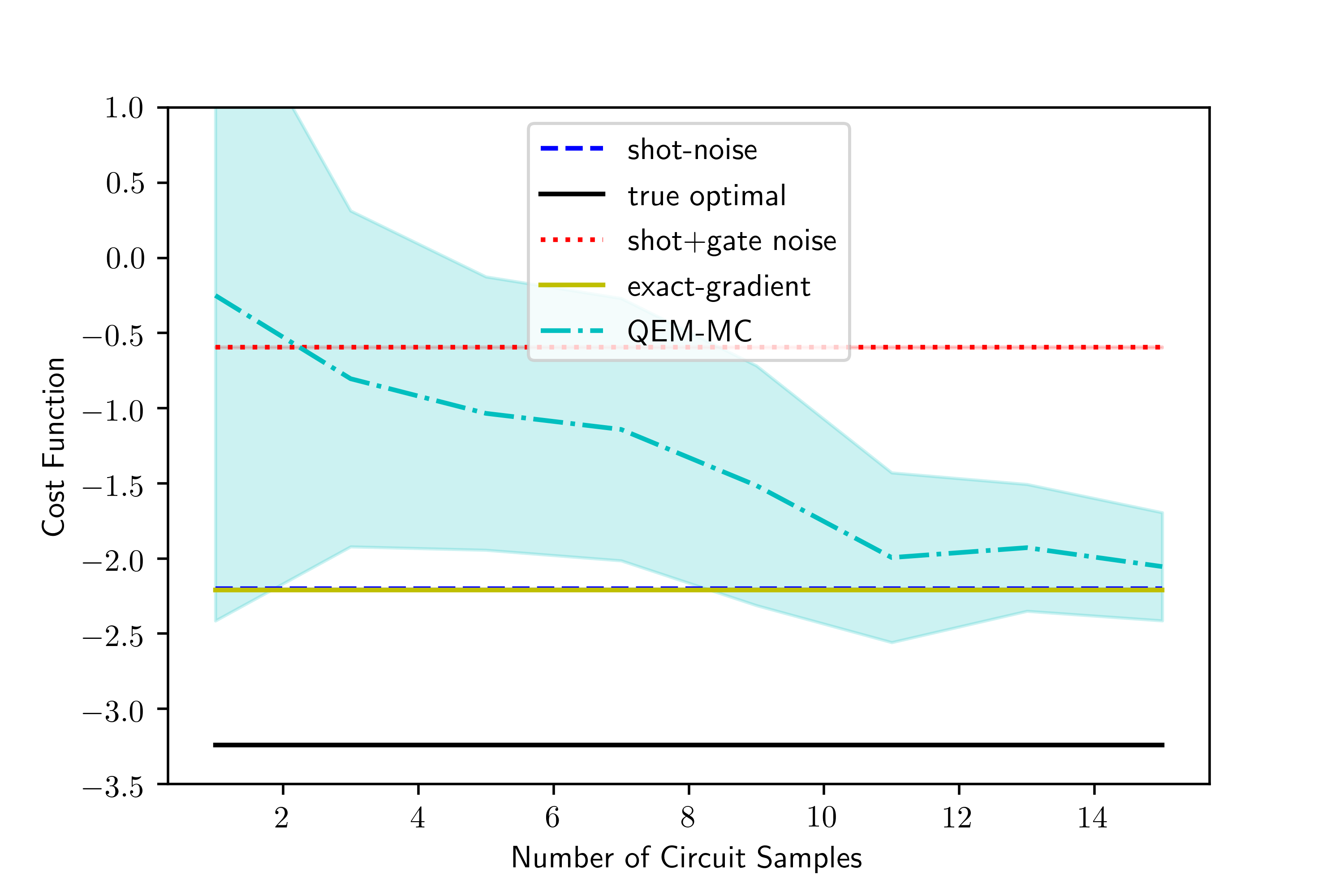}
      \end{minipage}
     \caption{Cost function $L(\thetabf^T)$ after $T=10$ iterations of the SGD when only shot noise is present; when both shot and gate noise are present; and when QEM is employed, as a function of the number of circuit samples $N_c$. (Left) low noise level $\epsilon=0.03$ and (Right) high noise level ($\epsilon=0.25$). Other parameters are set as $n=5$, $L=1$, $N_m=10240$, and $\eta_t=0.14$. Ground state eigenvalue for the max-cut problem is $-3.24$.}
     \label{fig:circuitsamples}
 \end{figure*}
  Finally, Fig.~\ref{fig:circuitsamples} demonstrates the impact of increasing number of circuit samples on the loss $L(\thetabf^T)$ of the SGD iterate after $T=10$ iterations. The experimental setting is same as in Fig.~\ref{fig:errorprobability}. As can be seen from the figure, increasing the number of circuit samples significantly reduces the variance of the QEM and ensure performance close to the SGD with perfect error mitigation (blue curve).
  %results in larger perturbation of parameter $\theta^t$ around the parameter $\theta^t_{\mathrm{exact}}$ computed by exact-gradient evaluation. When $\theta^t_{\mathrm{exact}}$ is not a local minimum, the resulting perturbations can result in costs that may perform better or worse, as seen in Fig~\ref{fig:errorprobability}, than the corresponding exact-gradient costs.
  %Furthermore, by keeping sufficiently large number $N_m=10240$ of measurement shots, it can be seen that the variance of the resulting gradient estimator does not increase significantly with increasing gate noise. This is in contrast to the variance of SGD employing QEM. Due to the finite number $N_c=10$ of circuits sampled, the QEM results in a larger variance as compared to the noisy gradient estimators. Importantly, as discussed in Section~\ref{sec:QEM_properties}, the variance can be seen to increase with increase in gate noise. In particular, when gate noise is small, the higher variance of the QEM can result in a more biased gradient estimate than that induced by gate noise. Fig~\ref{fig:errorprobability} shows that QEM can significantly decrease the bias due to large gate noise at the cost of increased variance. The latter can be reduced by sampling more number of noisy circuits.
 \section{Conclusions}
This paper seeks an answer to the following question: Can quasi-probabilistic error mitigation (QEM) be beneficial in improving the convergence of SGD for the implementation of VQEs in NISQ devices? By analyzing the convergence properties of SGD for VQEs, we have shown that the quantum gate noise inherent in the operation of NISQ devices results in a non-zero error floor. Achieving lower error levels requires the use of QEM.  However, for larger error levels, the increase in the variance of the stochastic gradient estimate caused by QEM may entail the need for a larger number of SGD iterations when QEM is deployed. In particular,  when the noise strength is high and insufficient time is available at each SGD iteration to sample circuits for QEM, QEM may require a larger number of iterations to converge to the desired error level. Conversely, when the number of circuits sampled per iteration is sufficiently large, QEM can yield a significant reduction in the number of SGD iterations as compared to a conventional system without error mitigation.
\appendices
\section{Proof of Lemma~3.1}\label{app:1}
To prove Lemma~3.1, we fix the iteration index $t$, which is dropped from the notation. 
Let
 $X_{d,\pm}=\widehat{ \langle H\rangle}_{\vert \Psi(\thetabf\pm\frac{\pi}{2}{e}_d) \rangle}-  \langle H\rangle_{\vert \Psi(\thetabf\pm\frac{\pi}{2}{e}_d)\rangle}$ denote the difference between the estimated and true expectation of the observable $H$ under the quantum state $\vert \Psi(\thetabf\pm\frac{\pi}{2}{e}_d)\rangle$ whose $d$th parameter is phase-shifted by $\pm\pi/2$.  For brevity, we also henceforth use the notation  $\vert \Psi_{d,\pm}\rangle=\vert \Psi(\thetabf\pm {e}_d \frac{\pi}{2})\rangle$. The variance of the gradient estimate \eqref{eq:gradient_estimate} can be written as
\begin{align}
  \mathrm{var}(\xi)&=\Ebb\biggl[\sum_{d=1}^{D}\biggl(\frac{1}{2}(\widehat{ \langle H\rangle}_{\vert \Psi_{d,+}\rangle}-\widehat{ \langle H\rangle}_{\vert \Psi_{d,-}\rangle}) \non \\&-\frac{1}{2}( \langle H\rangle_{\vert \Psi_{d,+}\rangle}- \langle H\rangle _{\vert \Psi_{d,-}\rangle})\biggr)^2 \biggr] \non \end{align} \begin{align}
  &=  \sum_{d=1}^{D} \frac{1}{4} \Ebb \biggl[(X_{d,+}-X_{d,-})^2 \biggr] \non \\&=  \sum_{d=1}^{D}\frac{1}{4} \Bigl(\Ebb[X_{d,+}^2]+\Ebb[X_{d,-}^2 ] \Bigr), \label{eq:1_11}
\end{align}where the expectation is with respect to the $N_m$  measurements of the quantum state $\vert \Psi(\thetabf + e_d\frac{\pi}{2}) \rangle$ and the $N_m$ measurements of the quantum state $\vert \Psi(\thetabf - e_d\frac{\pi}{2}) \rangle$ for $d=1,\hdots,D$. The random variables  $X_{d,+}$ and $X_{d,-}$ are thus independent for $d=1,\hdots,D$, which results in the equality in \eqref{eq:1_11}.

The expectation 
$
    \Ebb[X_{d,+}^2]$ is equal to the variance $\mathrm{var}(\widehat{\langle H \rangle}_{\vert \Psi_{d,+} \rangle})
$ of the random variable $\widehat{\langle H \rangle}_{\vert \Psi_{d,+} \rangle}$. Let $Y$ be the random variable that defines the index of the  measurement of the observable $H$, and let us write as $H=h_{Y}$ for the corresponding measurement output. We denote as
 $W_{y}=\Ibb\{Y=y\}$, for $y =1,\hdots,N_h$, the Bernoulli random variable determining whether $Y=y$ ($W_y=1)$ or not ($W_y=0$). Noting that the quantum measurements are i.i.d., it follows from the definition of expectation $\widehat{\langle H \rangle}_{\vert \Psi_{d,+} \rangle}$ in \eqref{eq:shot_noise_observable_new} that
%\begin{align}
%    \mathrm{var}(\widehat{ \langle H \rangle}_{\vert \Psi_{d,+}\rangle})&=\sum_{y=1}^{N_h}  \frac{h_y^2}{N_m^2}\sum_{j=1}^{N_m} \mathrm{var}(Z_{j,y}), \label{eq:1} \\
  %  &= \sum_{y=1}^{N_h}  \frac{h_y^2}{N_m^2}\sum_{j=1}^{N_m} \nu\Bigl(p(y|\thetabf+e_d \frac{\pi}{2})\Bigr)\\
  %  &=\sum_{y=1}^{N_h}  \frac{h_y^2}{N_m} \nu\Bigl(p(y|\thetabf+e_d \frac{\pi}{2})\Bigr)\label{eq:acrossproofs_1}\\
  %  & \leq \sum_{y=1}^{N_h}  \frac{h_y^2}{N_m}\nu\\
  %  & = \frac{\Tr(H^2)\nu}{N_m}, \label{eq:1_1}
  %  \end{align} 
  \begin{align}
    \Ebb[X^2_{d,+}]&=\frac{1}{N_m}\var\Bigl(\sum_{y=1}^{N_h}h_y W_y\Bigr) \label{eq:1} \\
    &= \frac{1}{N_m}\Ebb\biggl[\biggl(\sum_{y=1}^{N_h} h_y\Bigl(W_y-p\Bigl(y\Bigl|\theta+e_d \frac{\pi}{2}\Bigr)\Bigr) \biggr)^2 \biggr]\label{eq:acrossproofs_1} \end{align} 
    \begin{align}
    & \stackrel{(a)}{\leq} \frac{1}{N_m}\Bigl(\sum_{y=1}^{N_h} h_y^2\Bigr) \sum_{y=1}^{N_h}\var(W_y)\\
    &\stackrel{(b)}{=}\frac{1}{N_m}\Bigl(\sum_{y=1}^{N_h} h_y^2\Bigr) \sum_{y=1}^{N_h}\nu\Bigl(p\Bigl(y\Bigl|\thetabf+e_d \frac{\pi}{2}\Bigr)\Bigr)\non\\
    & \leq \frac{N_h}{N_m}\Bigl(\sum_{y=1}^{N_h} h_y^2\Bigr) \nu = \frac{N_h \Tr(H^2)}{N_m}\nu, \label{eq:1_1}
  %  & = \frac{\Tr(H^2)\nu}{N_m}, \label{eq:1_1}
    \end{align}
  where $(a)$ follows from the Cauchy-Schwarz inequality;  $(b)$ follows since the variance of the Bernoulli random variable $W_{y}$
  can be computed as
   \begin{align*}
       \mathrm{var}(W_{y})&=\Ebb[W_{y}^2]-(\Ebb[W_{y}])^2=\nu\Bigl(p\Bigl(y \Bigl|\thetabf+e_d \frac{\pi}{2}\Bigr)\Bigr),
   \end{align*}  with  $\nu(x)=x(1-x)$ for $x \in (0,1)$. The last inequality follows from the definition of the quantity $\nu$ in \eqref{eq:nu2}.
   
   In a similar way, it can be shown that the following inequality holds
   \begin{align}
      \Ebb[X^2_{d,-}]&\leq \frac{N_h\Tr(H^2)}{N_m} \nu.
   \end{align}
   Using these in \eqref{eq:1_11} then yields the inequality
   \begin{align}
       \mathrm{var}(\xi) \leq \frac{D N_h \Tr(H^2)\nu}{2N_m},
   \end{align} concluding the proof.
\section{Smoothness of Loss Function $L(\thetabf)$} \label{app:smoothness}
In the following lemma, we show that the loss function $L(\thetabf)$ in \eqref{eq:problem1} is indeed always $\Lscr$-smooth for a constant $\Lscr$ that depends on the number of PQC parameters $D$, and on the observable $H$. In contrast, the loss function $L(\thetabf)$ need not necessarily satisfy the PL-condition.
\begin{lemma}
The loss function $L(\thetabf)$ as defined in \eqref{eq:problem1} is $\Lscr$-smooth with $\Lscr=D^{3/2}\sum_{y=1}^{N_h}|h_y|$.
\end{lemma}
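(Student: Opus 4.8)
The plan is to establish the stronger statement that $\nabla L$ is globally Lipschitz with constant $\Lscr = D^{3/2}\sum_{y=1}^{N_h}|h_y|$; the descent inequality \eqref{eq:assum_smoothness} then follows by integrating $\nabla L$ along the segment joining $\thetabf'$ and $\thetabf$. To obtain this Lipschitz constant I would bound every entry of the Hessian $\nabla^2 L(\thetabf)$ uniformly in $\thetabf$ and then aggregate these $D^2$ bounds with elementary norm inequalities.

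First I would compute the relevant derivatives of the prepared state $\Psi(\thetabf) = U(\thetabf)\ket{0}\bra{0}U(\thetabf)^{\dag}$. Since $U_d(\theta_d) = \exp(-i\theta_d G_d/2)$ with $G_d$ an $n$-qubit Pauli string, so that $G_d^2 = I$, one has $\partial_{\theta_d}U_d(\theta_d) = -\tfrac{i}{2}G_dU_d(\theta_d)$ and $\partial_{\theta_d}^2 U_d(\theta_d) = -\tfrac14 U_d(\theta_d)$. Substituting these into the product \eqref{eq:overall_unitary} shows that $\partial_{\theta_d}U(\thetabf)$, resp.\ $\partial_{\theta_d}\partial_{\theta_e}U(\thetabf)$, equals a product of unitary factors times a scalar of modulus $1/2$, resp.\ $1/4$; hence $\lVert\partial_{\theta_d}U(\thetabf)\rVert_\infty \le 1/2$ and $\lVert\partial_{\theta_d}\partial_{\theta_e}U(\thetabf)\rVert_\infty \le 1/4$ for all $d,e$, and the same bounds hold for $U(\thetabf)^{\dag}$.

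Next, expanding $\partial_{\theta_d}\partial_{\theta_e}\Psi(\thetabf)$ by the product rule produces four terms, each of the form $A\ket{0}\bra{0}B$ with $A,B$ products of the operators above; using $\Tr(HA\ket{0}\bra{0}B) = \bra{0}BHA\ket{0}$ and $|\bra{0}BHA\ket{0}| \le \lVert A\rVert_\infty\lVert H\rVert_\infty\lVert B\rVert_\infty$, so that each of the four terms contributes at most $\tfrac14\lVert H\rVert_\infty$, I obtain
\[
\big|[\nabla^2 L(\thetabf)]_{d,e}\big| = \big|\Tr\!\big(H\,\partial_{\theta_d}\partial_{\theta_e}\Psi(\thetabf)\big)\big| \le \lVert H\rVert_\infty \le \sum_{y=1}^{N_h}|h_y|,
\]
uniformly over $\thetabf\in\Real^D$ and $d,e\in\{1,\dots,D\}$. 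Applying the mean value theorem to each component $[\nabla L]_d$ along the segment from $\thetabf'$ to $\thetabf$ then gives $\big|[\nabla L(\thetabf)]_d - [\nabla L(\thetabf')]_d\big| \le \big(\sum_y|h_y|\big)\sum_{e}|\theta_e - \theta_e'| \le D\big(\sum_y|h_y|\big)\lVert\thetabf-\thetabf'\rVert$, and squaring, summing over $d$, and taking the square root yields $\lVert\nabla L(\thetabf)-\nabla L(\thetabf')\rVert \le D^{3/2}\big(\sum_y|h_y|\big)\lVert\thetabf-\thetabf'\rVert$, which is the asserted Lipschitz property.

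The main obstacle is the bound on $[\nabla^2 L(\thetabf)]_{d,e}$: the computation has to be organized so that the four second-derivative terms are each controlled and, crucially, so that the observable $H$ enters only through its spectral norm $\lVert H\rVert_\infty$. This is exactly what the $\ket{0}\bra{0}$ sandwich $\bra{0}BHA\ket{0}$ accomplishes; a careless use of $|\Tr(HM)| \le \lVert H\rVert_\infty\lVert M\rVert_{\mathrm{tr}}$ would instead let the dimensions of the eigenspaces of $H$ leak in and spoil the bound. The remaining steps are routine, and the factor $D^{3/2}$ rather than $D$ is merely an artifact of the crude norm conversions in the aggregation step, which is harmless for the convergence results that invoke this lemma.
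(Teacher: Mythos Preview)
Your proof is correct and reaches the same entrywise Hessian bound $|[\nabla^2 L(\thetabf)]_{d,e}|\le \lVert H\rVert_\infty$ as the paper, but by a genuinely different route. The paper expresses the second derivative via a \emph{double application of the parameter shift rule}, writing $[\nabla^2 L(\thetabf)]_{i,j}$ as $\tfrac14\Tr[H(\Psi_{++}-\Psi_{+-}-\Psi_{-+}+\Psi_{--})]$ for four shifted states, and then bounds this using the tracial matrix H{\"o}lder inequality $|\Tr(HM)|\le\lVert H\rVert_\infty\lVert M\rVert_1$ together with $\lVert\rho-\sigma\rVert_1\le 2$ for density matrices. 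You instead differentiate $U(\thetabf)$ directly, exploit $G_d^2=I$ to get $\lVert\partial_{\theta_d}U\rVert_\infty\le 1/2$ and $\lVert\partial_{\theta_d}\partial_{\theta_e}U\rVert_\infty\le 1/4$, and control each of the four product-rule terms via $|\bra{0}BHA\ket{0}|\le\lVert A\rVert_\infty\lVert H\rVert_\infty\lVert B\rVert_\infty$. Your approach is arguably more elementary and self-contained; the paper's is perhaps more natural in the VQA context since the parameter shift rule is already the gradient-estimation tool used throughout. Your closing remark that the tracial H{\"o}lder route would ``let the dimensions of the eigenspaces leak in'' is inaccurate, though: that is precisely the tool the paper uses, and it works because the trace-norm factor is bounded by $2$ independently of dimension. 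The final aggregation step is essentially the same in both proofs (the paper invokes $\lVert\nabla^2 L\rVert_2\le\sqrt{D}\lVert\nabla^2 L\rVert_\infty\le D^{3/2}\sum_y|h_y|$, which is equivalent to your mean-value-theorem argument with the same loose norm conversions).
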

\begin{proof}
To prove the this result, we note that the smoothness condition can be equivalently written as
the inequality $\lVert\nabla^2 L(\theta)\rVert_2 \leq \Lscr$ on the Hessian $\nabla^2 L(\theta)$. We then note the following steps: 
\begin{align}
    &[\nabla^2 L(\thetabf)]_{i,j} = \frac{\partial^2 L(\thetabf)}{\partial \theta_j \partial \theta_i} \\
    &\stackrel{(a)}{=}\frac{1}{4}\Tr \biggl[H \biggl(\Psi\Bigl(\thetabf+{e}_i \frac{\pi}{2}+{e}_j \frac{\pi}{2}\Bigr)-\Psi\Bigl(\thetabf+{e}_i \frac{\pi}{2}-{e}_j \frac{\pi}{2}\Bigr)\non \\&-\Psi\Bigl(\thetabf-{e}_i \frac{\pi}{2}+{e}_j \frac{\pi}{2}\Bigr)+\Psi\Bigl(\thetabf-{e}_i \frac{\pi}{2}-{e}_j \frac{\pi}{2}\Bigr) \biggr) \biggr]\non \\
    &\stackrel{(b)}{\leq } \frac{1}{4}\biggl | \Tr \biggl[H \biggl(\Psi\Bigl(\thetabf+{e}_i \frac{\pi}{2}+{e}_j \frac{\pi}{2}\Bigr)-\Psi\Bigl(\thetabf+{e}_i \frac{\pi}{2}-{e}_j \frac{\pi}{2}\Bigr)\non \\&-\Psi\Bigl(\thetabf-{e}_i \frac{\pi}{2}+{e}_j \frac{\pi}{2}\Bigr)+\Psi\Bigl(\thetabf-{e}_i \frac{\pi}{2}-{e}_j \frac{\pi}{2}\Bigr) \biggr) \biggr] \biggr |\non \\
    & \stackrel{(c)}{\leq} \frac{\lVert H \rVert_{\infty}}{4}  \biggl \lVert \Psi\Bigl(\thetabf+{e}_i \frac{\pi}{2}+{e}_j \frac{\pi}{2}\Bigr)-\Psi\Bigl(\thetabf+{e}_i \frac{\pi}{2}-{e}_j \frac{\pi}{2}\Bigr)\non \\&-\Psi\Bigl(\thetabf-{e}_i \frac{\pi}{2}+{e}_j \frac{\pi}{2}\Bigr)+\Psi\Bigl(\thetabf-{e}_i \frac{\pi}{2}-{e}_j \frac{\pi}{2}\Bigr)\biggr \rVert_1 \end{align}
    \begin{align}
    & \stackrel{(d)}{\leq} \frac{ \lVert H \rVert _{\infty}}{4} \biggl( \biggl \lVert \Psi\Bigl(\thetabf+{e}_i \frac{\pi}{2}+{e}_j \frac{\pi}{2}\Bigr)-\Psi\Bigl(\thetabf+{e}_i \frac{\pi}{2}-{e}_j \frac{\pi}{2}\Bigr)\biggl \rVert_1 \non \\&+ \biggl\lVert-\Psi\Bigl(\thetabf-\mathbf{e}_i \frac{\pi}{2}+\mathbf{e}_j \frac{\pi}{2}\Bigr)+\Psi\Bigl(\thetabf-{e}_i \frac{\pi}{2}-{e}_j \frac{\pi}{2}\Bigr)\biggr\rVert_1\biggr)\non\\
    & \stackrel{(e)}{\leq} \lVert H\rVert_{\infty} \leq \sum_{y=1}^{N_h} |h_y|. \label{eq:new1}
\end{align} The equality in $(a)$ follows from a double application of parameter shift rule \cite{amaro2022filtering}; the inequality in $(b)$ follows since the inequality $x \leq |x|$ holds for $x \in \Real$; $(c)$ follows from tracial matrix H{\"o}lder's inequality \ie $|\Tr(A^{\dag}B)|\leq \lVert A\rVert_{\infty} \lVert B \rVert_1$ \cite{baumgartner2011inequality}; $(d)$ follows from the triangle inequality of the trace norm, $\lVert A \rVert_1=\sqrt{\Tr(A A^{\dag})}$ \cite{wilde2013quantum}; inequality $(e)$ follows since the trace norm of a density matrix is bounded as $0 \leq \lVert \rho \rVert _1 \leq 2$ \cite{wilde2013quantum}; and the last inequality follows from triangle inequality and the inequality $\lVert \Pi_y \rVert_{\infty}\leq 1$, \ie, $\lVert H \rVert_{\infty}=\lVert\sum_{y=1}^{N_h}h_y \Pi_y\rVert_{\infty} \leq \sum_{y=1}^{N_h}|h_y| \lVert\Pi_y \rVert _{\infty} \leq  \sum_{y=1}^{N_h}|h_y|$.
This in turn implies the inequalities
$\lVert \nabla^2 L(\thetabf)\rVert _2 \leq \sqrt{D} \lVert \nabla^2 L(\thetabf)\rVert _{\infty} \leq D^{3/2} \sum_{y=1}^{N_h} |h_y|.$
\end{proof}

In contrast to the smoothness assumption, the convexity of the loss function $L(\thetabf)$ in Assumption~\ref{assum:1} is not always satisfied. It provides a useful working condition for the analysis, which was also adopted in \cite{schuld2019evaluating}, \cite{gentini2020noise}. Convexity may be recovered by applying the optimization approach for PQC detailed in \cite{banchi2020convex}. We leave an investigation of this idea to future works.
\section{Proof of Lemma~4.1}\label{app:biased_estimator}
We first obtain an upper bound on the variance $\mathrm{var}(\xi^{\Escr}_t)=\Ebb[\lVert \hat{g}_t^{\Escr}-g^{\Escr}_t \rVert ^2]$ of the gradient estimator. As in Appendix~\ref{app:1}, we drop the index $t$ and define  $X_{d,\pm}=\widehat{\langle H \rangle}_{\rho^{\Escr}(\thetabf\pm{e}_d\frac{\pi}{2})}-{\langle H \rangle}_{\rho^{\Escr}(\thetabf\pm {e}_d\frac{\pi}{2})}$ as the difference between the estimated and true expectation of the observable $H$ under the noisy quantum state $\rho^{\Escr}(\thetabf \pm \frac{\pi}{2}e_d)$. By \eqref{eq:1_11}, we have the equality
\begin{align}
    \mathrm{var}(\xi^{\Escr})=\sum_{d=1}^{D} \frac{1}{4}\Bigl(\Ebb[X_{d,+}^2]+\Ebb[X_{d,-}^2]\Bigr),\label{eq:2}
\end{align}where $\Ebb[X_{d,\pm}^2]=\mathrm{var}\Bigl(\widehat{\langle H \rangle}_{\rho^{\Escr}(\thetabf\pm {e}_d\frac{\pi}{2})}\Bigr)$. Recall from \eqref{eq:shot_noise_observable_new} the following definition 
\begin{align}
\widehat{\langle H \rangle}_{\rho^{\Escr}(\thetabf + {e}_d\frac{\pi}{2})}=\sum_{y=1}^{N_h}  \frac{h_y}{N_m} \sum_{j=1}^{N_m} \Ibb\{Y^{\Escr}_{j}=y\} \label{eq:QEM_5}
\end{align}of estimated expectation of the observable under noisy state $\rho^{\Escr}(\thetabf + {e}_d\frac{\pi}{2})$  holds, where $Y^{\Escr}_{j}$ is the random variable corresponding to the index of the the $j$th measurement of the observable $H$.
In a manner similar to \eqref{eq:1_1}, it can be shown that the following upper bound holds:
\begin{align}
&\Ebb[X^2_{d,\pm}]
%&=\sum_{y=1}^{N_h}  \frac{h_y^2}{N_m^2} \sum_{j=1}^{N_m} \var(\Ibb\{Y^{\Escr}_{j}=y\})\non \\
%&=\sum_{y=1}^{N_h}  \frac{h_y^2}{N_m^2} \sum_{j=1}^{N_m} \nu \Bigl(p^{\Escr}(y|\thetabf\pm\mathbf{e}_d\pi/2) \Bigr) \non\\
%& = \sum_{y=1}^{N_h}  \frac{h_y^2}{N_m}  \nu \Bigl(p^{\Escr}(y|\thetabf\pm {e}_d\frac{\pi}{2}) \Bigr)\non \\
\leq \frac{N_h\Tr(H^2)}{N_m}\sup_{\theta \in \Real^D, y \in \{1,\hdots,N_h\}} \nu(p^{\Escr}(y|\thetabf)), \label{eq:1_4}
\end{align} where $p^{\Escr}(y|\thetabf)$ is defined in \eqref{eq:probability_gatenoise}.  Using \eqref{eq:1_4} in \eqref{eq:2} yields the upper bound \eqref{eq:variance_noisy}.

To obtain the relation in \eqref{eq:relation_1}, we use the decomposition \eqref{eq:decomposition_density} of the noisy quantum state $\rho^{\Escr}(\thetabf)$ as a convex combination of the noiseless ideal state $\Psi(\thetabf)$ and of the error density matrix $\tilde{\rho}(\thetabf)$, defined in \eqref{eq:errordensity}. With the definition \eqref{eq:errordensity}, the following set of relation holds
for any $y \in \{1,\hdots,N_h\}$ and $\thetabf \in \Real^D$:
\begin{align*}
    &\nu(p^{\Escr}(y|\thetabf)) \non \\&= ((1-\gamma) p(y|\thetabf)+\gamma\tilde{p}(y|\theta))(1-(1-\gamma) p(y|\thetabf)-\gamma\tilde{p}(y|\thetabf))\\
    &=(1-\gamma) p(y|\thetabf)\Bigl(1-(1-\gamma) p(y|\thetabf) \Bigr)\non \\&\quad + \gamma\tilde{p}(y|\thetabf) \Bigl(1-(1-\gamma) p(y|\theta) \Bigr)\\&\quad -\gamma (1-\gamma)p(y|\theta)\tilde{p}(y|\thetabf)-\gamma^2\tilde{p}(y|\thetabf)^2 \nonumber \\
    &=(1-\gamma)\nu(p(y|\thetabf))+ \gamma\biggl[(1-\gamma) p^2(y|\thetabf)\non \\& \quad + \tilde{p}(y|\thetabf) \Bigl(1-(1-\gamma) p(y|\thetabf) \Bigr)-\\& \quad (1-\gamma) p(y|\thetabf)\tilde{p}(y|\thetabf)-\gamma \tilde{p}(y|\thetabf)^2 \biggr]\nonumber\\
    &=(1-\gamma) \nu(p(y|\thetabf))+ \gamma\Bigl[ (1-\gamma) (p(y|\thetabf)-\tilde{p}(y|\thetabf))^2\\&\quad +\tilde{p}(y|\thetabf)-\tilde{p}(y|\thetabf)^2\Bigr]\\
    &=(1-\gamma) \nu(p(y|\thetabf))+\nu((1-\gamma)) (p(y|\thetabf)-\tilde{p}(y|\thetabf))^2\\ & \quad + \gamma \nu(\tilde{p}(y|\thetabf)) .
  %  & \leq \gamma \nu + (1-\gamma)\sup_{\thetabf \in \Theta} \max_{y \in \{1,\hdots,N'\}}\Bigl[ \gamma (p(y|\thetabf)-\tilde{p}(y|\thetabf))^2+  \nu(\tilde{p}(y|\thetabf))\Bigr]\\
  %  &\leq = \gamma \nu+(1-\gamma) \zeta,
\end{align*}
%where $ \zeta$ is defined as in \eqref{eq:zeta}.
%Subsequently, we have 
%\begin{align}
  %  \mathrm{var}(\widehat{\langle H \rangle}_{\rho^{\Escr}(\thetabf+\mathbf{e}_j\pi/2)}) \leq \frac{\mathrm{Tr}(H^2)(\gamma \nu+(1-\gamma)\zeta)}{N_m}.
%\end{align} Similarly, an upper bound on $\Ebb[X^2_{j,-}]=\mathrm{var}(\widehat{\langle H \rangle}_{\rho^{\Escr}(\thetabf-\mathbf{e}_j\pi/2)})$ can be obtained. Substituting these in \eqref{eq:2} gives the required upper bound.

We now analyze the bias term. By \eqref{eq:noisy_decomposition}, we have the equality
\begin{align}
   & \lVert \mathrm{bias}\rVert^2\non \\&=\sum_{d=1}^D \frac{1}{4} \biggl(\Tr\biggl(H\Bigl(\rho^{\Escr}\Bigl(\thetabf+{e}_d \frac{\pi}{2}\Bigr)-\rho^{\Escr}\Bigl(\thetabf-{e}_d\frac{\pi}{2}\Bigr)\non\\&-\Psi\Bigl(\thetabf+{e}_d \frac{\pi}{2}\Bigr)+\Psi\Bigl(\thetabf-{e}_d \frac{\pi}{2}\Bigr)\Bigr)\biggr)^2\\
    & \stackrel{(a)}{\leq} \sum_{d=1}^{D} \frac{1}{4} \lVert H\rVert^2_{\infty} \biggl( \biggl \lVert \rho^{\Escr}\Bigl(\thetabf+{e}_d \frac{\pi}{2}\Bigr)-\Psi\Bigl(\thetabf+{e}_d\frac{\pi}{2}\Bigr)\biggr \rVert_1\non \\&+\biggl \lVert \rho^{\Escr}\Bigl(\thetabf-{e}_d\frac{\pi}{2}\Bigr)-\Psi\Bigl(\thetabf-{e}_d\frac{\pi}{2}\Bigr)\biggr \rVert_1\biggr)^2\\
    & \stackrel{(b)}{\leq}4D \lVert H \rVert ^2_{\infty}\gamma, \label{eq:66}
    %\stackrel{(c)} {\leq} 4D (\sum_{y=1}^{N_h}|h_y|^2)(1-\gamma)
\end{align}where the inequality in $(a)$ follows from tracial matrix H{\"o}lder's inequality and the triangle inequality for trace norm \cite{wilde2013quantum}. To obtain the inequality in $(b)$, we use the following relationship between trace distance and quantum fidelity $F(\rho,\sigma)=\Bigl(\Tr(\sqrt{\rho}\sigma \sqrt{\rho}) \Bigr)^2$ \cite{wilde2013quantum}
\begin{align}
  &\biggl\lVert\rho^{\Escr}\Bigl(\thetabf\pm{e}_j\frac{\pi}{2}\Bigr)-\Psi\Bigl(\thetabf\pm{e}_j\frac{\pi}{2}\Bigr)\biggr\rVert_1 \non \\&\leq 2 \sqrt{1-F\biggl(\rho^{\Escr}\Bigl(\thetabf\pm {e}_j\frac{\pi}{2}\Bigr),\Psi\Bigl(\thetabf\pm {e}_j\frac{\pi}{2}\Bigr)\biggr)}. \label{eq:33}
\end{align} Furthermore, the  quantum fidelity in \eqref{eq:33} can be upper bounded using \eqref{eq:decomposition_density} as
\begin{align}
& F(\rho^{\Escr}(\thetabf),\Psi(\thetabf))\non \\&=\langle \Psi(\thetabf) \vert \rho^{\Escr}(\thetabf) \vert \Psi(\thetabf) \rangle \non  \\
 &=(1-\gamma)\langle \Psi(\thetabf) \vert \Psi(\theta) \vert \Psi(\thetabf) \rangle+\gamma\langle \Psi(\thetabf) \vert \tilde{\rho}(\thetabf) \vert \Psi(\theta) \rangle \non \\
 &=(1-\gamma)+\gamma\langle \Psi(\thetabf) \vert \tilde{\rho}(\thetabf) \vert \Psi(\thetabf) \rangle \label{eq:55}\\
 & \geq 1-\gamma \label{eq:56}
\end{align} for all $\theta \in \Real^D$. Here, the last inequality follows since the density operator is positive semi-definite, making the second term in \eqref{eq:55} non-negative. Using \eqref{eq:56} in \eqref{eq:33} yields the inequality in \eqref{eq:66}.
%Finally the inequality in $(c)$ follows since $||H||_{\infty}=||\sum_{y=1}^{N_h}h_y \Pi_y ||_{\infty} \leq \sum_{y=1}^{N_h}|h_y|||\Pi_y ||_{\infty} \leq \sum_{y=1}^{N_h}|h_y|$.
\section{Additional Properties }\label{app:properties}
In this section, we  provide the following additional properties of the variance term $\nu(p^{\Escr}(y|\thetabf))$ defined in \eqref{eq:relation_1}. All these properties can be easily verified using the definition \eqref{eq:relation_1} and hence we omit the detailed derivations here.
\begin{enumerate}
    \item The inequality $\nu(p^{\Escr}(y|\thetabf)) \geq (1-\gamma) \nu(p(y|\thetabf)) +\gamma \nu(\tilde{p}(y|\thetabf))$ holds for all $y \in \{1,\hdots, N_h\}$ and $\theta \in \Theta$.
    \item $\nu(p^{\Escr}(y|\thetabf))$ is a concave function of $\gamma$. It is increasing in the range $\gamma \in [0,\gamma^{*}(y,\thetabf))$, and decreasing in the range $\gamma \in [\gamma^{*}(y,\thetabf),1)$, where
    \begin{align}
        \gamma^{*}(y,\thetabf)= \min \biggl \lbrace 1, 0.5 \biggl( 1- \frac{\nu({p}(y|\thetabf))-\nu(\tilde{p}(y|\thetabf))}{(\tilde{p}(y|\thetabf)-p(y|\thetabf))^2}\biggr)\biggr \rbrace.
    \end{align}
%    {\color{red}\item The following upper bound on $\nu(p^{\Escr}(y|\thetabf))$ holds:
  %  \begin{align}
        %\nu(p^{\Escr}(y|\thetabf)) &\leq 0.25+\nu(1-\gamma) (p(y|\thetabf)-\tilde{p}(y|\thetabf))^2\non \\&:=g(\gamma).
  %  \end{align}Furthermore, $g(\gamma)$ is a concave function of $\gamma$ increasing in the range $[0,0.5]$ and decreasing in the range $(0.5,1]$. Consequently, $\nu(p^{\Escr}(y|\thetabf))$ decreases in the range $\epsilon \in [0,1-2^{-1/D}]$ and increases in the range $\epsilon \in (1-2^{-1/D},1]$. Thus, when the number of noisy gates $D$ is sufficiently large, $\nu(p^{\Escr}(y|\thetabf)$, and thus $V^{\Escr}$, increases with $\epsilon$.}
    \end{enumerate}
\section{Proof of Theorem~5.1}\label{app:variance_QEM}
In this section, we first compute the variance of the QEM-based stochastic gradient estimator \eqref{eq:QEM_gradientestimator}. With the notation $\thetabf^t_{d,\pm}=\thetabf^t\pm{e}_d \frac{\pi}{2}$, we have the equalities
\begin{align}
    \mathrm{var}(\xi^{\QEM}_t)&=\Ebb[\lVert\hat{g}^{\QEM}_t -g_t\rVert^2]\\
    &=\Ebb[\lVert \hat{g}^{\QEM}_t-g_t^{\Circ}+ g^{\Circ}_t-g_t\rVert^2] \\
    & \stackrel{(a)}{=} \underbrace{\Ebb[\lVert \hat{g}^{\QEM}_t-g_t^{\Circ}\rVert^2]}_{A}+ \underbrace{\Ebb[\lVert g^{\Circ}_t-g_t\rVert^2 ]}_{B}\label{eq:QEMapp_1},
    \end{align} where the expectation is over measurements as well as over sampled noisy circuits, and  we used the definition\begin{align}
 [g^{\Circ}_t]_d= &\frac{Z}{2 N_c}\sum_{l=1}^{N_c}\sgn(q_{\sbfbar_l}) \Bigl({\langle H \rangle}_{\rho_{\sbfbar_l}(\thetabf^t_{d,+})} \non \\&-{\langle H \rangle}_{\rho_{\sbfbar_l}(\thetabf^t_{d,-})} \Bigr) \label{eq:auxiliary_loss_app}.
 \end{align} The equality in $(a)$ follows from the equality $\Ebb[( \hat{g}^{\QEM}_t-g_t^{\Circ})^T (g^{\Circ}_t-g_t)|\sbfbar_{1:N_c}]=0 $ with the expectation taken over the quantum measurements conditioned on the sampled noisy circuit indices $\sbfbar_{1:N_c}$. We now analyze each of the terms in \eqref{eq:QEMapp_1} separately. The first term $A$ in \eqref{eq:QEMapp_1} evaluates as
\begin{align}
    A&=\frac{Z^2}{4N_c^2}\sum_{d=1}^D\Ebb\biggl[\biggl(\sum_{l=1}^{N_c} \sgn(q_{\sbfbar_l}) \Bigl(\underbrace{\widehat{\langle H \rangle}_{\rho_{\sbfbar_l}(\thetabf^t_{d,+})}-\langle H \rangle_{\rho_{\sbfbar_l}(\thetabf^t_{d,+})}}_{A_{1,l}}\non \\&+\underbrace{\langle H \rangle_{\rho_{\sbfbar_l}(\thetabf^t_{d,-})}-\widehat{\langle H \rangle}_{\rho_{\sbfbar_l}(\thetabf^t_{d,-})}}_{A_{2,l}} \Bigr) \biggr)^2 \biggr]\nonumber\\
    &\stackrel{(a)}{=}\frac{Z^2}{4N_c^2} \sum_{d=1}^D\mathrm{var}\biggl(\sum_{l=1}^{N_c} \sgn(q_{\sbfbar_l})(A_{1,l}+A_{2,l}) \biggr)\non\\
    %&=\frac{Z^2}{4N_c^2}\sum_{d=1}^D\sum_{l=1}^{N_c}\biggl(\mathrm{var}(\sgn(q_{\sbfbar_l})A_{1,l})+\mathrm{var}(\sgn(q_{\sbfbar_l})(A_{2,l}) \biggr)\non\\
    &\stackrel{(b)}{=}\frac{Z^2}{4N_c^2}\sum_{d=1}^D\sum_{l=1}^{N_c}\Ebb[A_{1,l}^2+A_{2,l}^2] \label{eq:A_intermediate},
  %  &=\frac{Z^2}{N_c^2}\Ebb_{\alphab_{1:N_c}}\biggl[\sum_{l=1}^{N_c}\Ebb_{D^{t,\alphab_l}}\biggl[\Bigl(\sgn(q_{\alphab_l})(A_1+A_2)\Bigr)^2\biggr]\biggr]\non\\
 %   &=\frac{Z^2}{N_c^2}\Ebb_{\alphab_{1:N_c}}\biggl[\sum_{l=1}^{N_c}\Ebb_{D^{t,\alphab_l}}\biggl[\Bigl(A_1+A_2\Bigr)^2\biggr]\biggr]\non\\
   % &=\frac{Z^2}{N_c^2}\Ebb_{\alphab_{1:N_c}}\biggl[\sum_{l=1}^{N_c}\Ebb_{D^{t,\alphab_l}_{+}}[A_1^2]+\Ebb_{D^{t,\alphab_l}_{-}}[A_2^2]\biggr]\biggr]\non,
\end{align}where the equality in $(a)$ follows by noting that for the random variable $C=\sum_{l=1}^{N_c} C_l$, with $C_l=\sgn(q_{\sbfbar_l})(A_{1,l}+A_{2,l})$, we have the equality  $\Ebb[C|\sbfbar_{1:N_c}]=0$, where the expectation is over quantum measurements conditioned on the indices $\sbfbar_{1:N_c}$ of the sampled circuits. The equality in $(b)$ follows since  random variables $C_l$ for $l=1,\hdots,N_c$, are independent, enabling the equality $\var(C)=\sum_{l=1}^{N_c} \var(C_l)$. Furthermore, we have $\var(C_l)=\Ebb[(A_{1,l}+A_{2,l})^2]$, which  is in turn equal to $\Ebb[A_{1,l}^2+A_{2,l}^2]$, since, conditioned on a sampled noisy circuit, the expectation $\Ebb[A_{1,l}A_{2,l}|\sbfbar_l]=0$ holds.
%$\var(\sgn(q_{\sbfbar_l})A_{1,l})=\Ebb[A^2_{1,l}]-(\Ebb[\Ebb[\sgn(q_{\sbfbar_l})A_{1,l}|\sbfbar_{1:N_c}]])^2=\Ebb[A^2_{1,l}]$.

Now, following an analysis similar to \eqref{eq:acrossproofs_1}, defining the Bernoulli random variable $W_{y,\sbfbar_l}=\{Y^{\QEM}_{\sbfbar_l}=y\}$, with $Y^{\QEM}_{\sbfbar_l}$ being the random variable that denotes the index of the measurement of the observable $H$, we obtain the inequality
\begin{align}
  %  \Ebb[A_{1,l}^2|\sbfbar_{1:N_c}]=\sum_{y=1}^{N_h}  \frac{h_y^2}{N_m/N_c} \nu\Bigl(p_{\sbfbar_l}(y|\thetabf^t+e_d \frac{\pi}{2})\Bigr)
 &\Ebb[A_{1,l}^2|\sbfbar_{l}]\non\\& = \frac{1}{N_m/N_c}\Ebb\biggl[\biggl(\sum_{y=1}^{N_h} h_y\Bigl(W_{y,\sbfbar_l}-p_{\sbfbar_l}(y|\theta^t_{d,+})\Bigr) \biggr)^2 \Bigl |\sbfbar_{l} \biggr]\label{eq:A11} \\
 & \leq \frac{ \Tr(H^2)}{N_m/N_c}\sum_{y=1}^{N_h} \nu\Bigl(p_{\sbfbar_l}(y|\theta^t_{d,+}) \Bigr). \label{eq:A11_upperbound}
\end{align}A similar bound can be obtained for the term $\Ebb[A_{2,l}^2|\sbfbar_{1:N_c}]$. 

Using \eqref{eq:A11_upperbound}, we first obtain an upper bound on the term $A$ in \eqref{eq:A_intermediate} as
\begin{align}
    A \leq &\frac{Z^2}{4N_c^2} \frac{\Tr(H^2)}{N_m/N_c} \sum_{d=1}^D \sum_{l=1}^{N_c}\biggl(\sum_{y=1}^{N_h}\Ebb_{\sbfbar_l}\biggl[\nu\Bigl(p_{\sbfbar_l}(y|\theta^t_{d,+}) \Bigr)\biggr]\non \\& + \sum_{y=1}^{N_h}\Ebb_{\sbfbar_l}\biggl[\nu\Bigl(p_{\sbfbar_l}\Bigl(y|\theta^t_{d,-}) \Bigr)\biggr]\biggr) \\
     \leq &\frac{N_h D Z^2 \Tr(H^2)}{2N_m}  \sup_{\theta \in \Real^D, y \in \{1,\hdots,N_h\}}\Ebb_{\sbfbar}\Bigl[\nu\Bigl(p_{\sbfbar}(y|\theta) \Bigr)\Bigr], \label{eq:Aupperbound}
\end{align}
%We then have that
%\begin{align}
%A&=\frac{Z^2}{4N_c^2}\sum_{d=1}^D\sum_{y=1}^{N_h}  \frac{h_y^2}{N_m/N_c}\Ebb_{\sbfbar_{1:N_c}}\biggl[ \sum_{l=1}^{N_c} \nu\Bigl(p_{\sbfbar_l}(y|\thetabf^t+e_d \frac{\pi}{2})\Bigr)\non \\&+\sum_{l=1}^{N_c} \nu\Bigl(p_{\sbfbar_l}(y|\thetabf^t-e_d \frac{\pi}{2})\Bigr)\biggr]\\
%&=\frac{Z^2}{4N_c}\sum_{d=1}^D\sum_{y=1}^{N_h}  \frac{h_y^2}{N_m/N_c}\Ebb_{\sbfbar}\biggl[ \nu\Bigl(p_{\sbfbar}(y|\thetabf^t+e_d \frac{\pi}{2})\Bigr)\non \\&+ \nu\Bigl(p_{\sbfbar}(y|\thetabf^t-e_d \frac{\pi}{2})\Bigr)\biggr], \label{eq:ref1}
%\end{align}
where we used the notation $\Ebb_{\bullet}$ to represent the expectation over the random variable in the subscript ${\bullet}$.

To get a lower bound on the term $A$, we start by observing that
\begin{align}
    \Ebb[A^2_{1,l}]&=\Ebb_{\sbfbar_{l}}[\Ebb[A^2_{1,l}|\sbfbar_{l}]]\label{eq:QEM_2} \\
    & \stackrel{(a)}{\geq} p_{\sbfbar=\{0\}^D} \Ebb[A^2_{1,l}|\sbfbar=\{0\}^D]\\
    &=p_{\sbfbar=\{0\}^D}N_c \var (\widehat{\langle H \rangle}_{\rho^{\Escr}(\theta^t_{d,+})}),
\end{align} with the inequality in $(a)$ obtained by lower bounding the outer expectation in \eqref{eq:QEM_2} with the choice of the $\sbfbar=\{0\}^D$th circuit. This determines the noisy operation $\Oscr^{\theta}_{\sbfbar}(\cdot)$ with Pauli  operator $\Pscr_0$ acting on all gates times the probability $p_{\sbfbar=\{0\}^D}$. With this choice of sampled circuit, we then have that $p_{\sbfbar_l}(y|\thetabf^t_{d,\pm} )=p^{\Escr}(y|\thetabf^t_{d,\pm})$ and $W_{y,\sbfbar_l}=\Ibb\{Y^{\Escr}=y\}$. This results in the last equality with $\var (\widehat{\langle H \rangle}_{\rho^{\Escr}(\theta^t_{d,+})})$ defined as in \eqref{eq:QEM_5}. Similar bound holds for $\Ebb[A^2_{2,l}]$. Subsequently, one obtains the lower bound
\begin{align}
    A \geq &\frac{Z^2}{4N_c}p_{\sbfbar=\{0\}^D}\sum_{d=1}^D \biggl[N_c \var (\widehat{\langle H \rangle}_{\rho^{\Escr}(\theta^t_{d,+})})\non \\&+N_c \var (\widehat{\langle H \rangle}_{\rho^{\Escr}(\theta^t_{d,-})}) \biggr]\\
    = &Z^2 p_{\sbfbar=\{0\}^D} \mathrm{\var}(\xi^{\Escr}_t)=c_1(\gamma)\mathrm{\var}(\xi^{\Escr}_t), \label{eq:Alowerbound}
\end{align}with $\mathrm{\var}(\xi^{\Escr}_t)$ defined as in \eqref{eq:2}.

%Furthermore, an upper bound on $A$ can be obtained easily %as
%\begin{align}
%    A &\leq \sum_{d=1}^D\sum_{y=1}^{N_h}  \frac{Z^2h_y^2}{2N_m}\sup_{\theta \in \Real^D, y \in \{1,\hdots,N_h\}}\Ebb_{\sbfbar}[\nu(p_{\sbfbar}(y|\thetabf))]\\
 %   &=\frac{D Z^2\Tr(H^2)}{2N_m}\sup_{\theta \in \Real^D, y \in \{1,\hdots,N_h\}}\Ebb_{\sbfbar}[\nu(p_{\sbfbar}(y|\thetabf))]. \label{eq:Aupperbound}
%\end{align}

%$D^{t,\alphab_l}=(D^{t,\alphab_l}_{+},D^{t,\alphab_l}_{-})$ denotes the set of quantum measurements taken on the phase-shifted circuits and $\Ebb_{\bullet}$ denote the expectation taken with respect to $\bullet$. 
%It can be verified from previous lemmas that
%\begin{align}
%    \Ebb_{D^{t,\alphab_l}_{+}}[A_1^2] \leq  \frac{\Tr(H^2)}{N^{\QEM}} \sup_{\thetabf\in \Theta,y \in \{1,\hdots,N_h\}} \nu(p^{\alphab_l}(y|\thetabf))
%\end{align}where $\nu(a)=a(1-a)$. Using this, we have that
%\begin{align}
 %   \Ebb[A^2] \leq \frac{2Z^2 \Tr(H^2)}{N_c N^{\QEM}}\Ebb_{\alphab}[\sup_{\thetabf,y}\nu(p^{\alphab_l}(y|\thetabf)) ],\label{eq:4}
%\end{align} where note that $N_c N^{\QEM}=N_m$.
We now evaluate the second term $B$ in \eqref{eq:QEMapp_1}. Towards this, we note the inequality
\begin{align}
    B&=\sum_{d=1}^D \Ebb[([g^{\Circ}_t]_d-[g_t]_d)^2]\non \\&= \frac{Z^2}{4N_c^2} \sum_{d=1}^D\sum_{l=1}^{N_c} \mathrm{var}(\sgn(q_{\sbfbar_l})[\langle H \rangle_{\rho_{\sbfbar_l}(\thetabf^t_{d,+})}-\langle H \rangle_{\rho_{\sbfbar_l}(\thetabf^t_{d,-})}])\non\\
    & \stackrel{(a)}{\leq }\frac{Z^2}{4N_c^2} \sum_{d=1}^D\sum_{l=1}^{N_c} \Ebb\biggl[\biggl(\sgn(q_{\sbfbar_l})[\langle H \rangle_{\rho_{\sbfbar_l}(\thetabf^t_{d,+})}\\ &\hspace{0.5cm}-\langle H \rangle_{\rho_{\sbfbar_l}(\thetabf^t_{d,-})} ]\biggr)^2\biggr]\non
    \end{align} \begin{align}
    &=\frac{Z^2}{4N_c} \sum_{d=1}^D \Ebb_{\sbfbar}\biggl[\Bigl(\langle H \rangle_{\rho_{\sbfbar}(\thetabf^t_{d,+})}-\langle H \rangle_{\rho_{\sbfbar}(\thetabf^t_{d,-})} \Bigr)^2\biggr]\non \\
    &=\frac{Z^2}{4N_c} \sum_{d=1}^D \Ebb_{\sbfbar}\biggl[\Bigl(\Tr\Bigl(H \Bigl(\rho_{\sbfbar}(\thetabf^t_{d,+})-\rho_{\sbfbar}(\thetabf^t_{d,-})\Bigr) \Bigr)\Bigr)^2\biggr]\non \\
    & \stackrel{(b)}{\leq} \frac{Z^2 \lVert H \rVert_{\infty}^2}{4N_c}\sum_{d=1}^D  \Ebb_{\sbfbar}\biggl[\lVert \rho_{\sbfbar}(\thetabf^t_{d,+}) -\rho_{\sbfbar}(\thetabf^t_{d,-})\rVert^2_1\biggr]\non\\
    & \stackrel{(c)}{\leq} \frac{D Z^2 \lVert H \rVert_{\infty}^2}{N_c} 
    %||\Uscr_{j}^{\theta^t_{j,+}}-\Uscr_j^{\theta^t_{j,-}}||_{\diamond}^2
    \label{eq:Bupperbound},
\end{align} where in $(a)$ we used $\var(X) \leq \Ebb[X^2]$. The inequality in $(b)$ follows by the application of the tracial matrix H{\"o}lder's inequality \cite{baumgartner2011inequality}, and the inequality $(c)$ follows since trace norm between density matrices is upper bounded by 2 \cite{wilde2013quantum}. 

Noting the inequality $B\geq 0$ and using \eqref{eq:Alowerbound}, we get the following lower bound
\begin{align*}
    \var(\xi^{\QEM}_t) \geq c_1(\gamma) \var(\xi^{\Escr}_t).
\end{align*} Furthermore, using \eqref{eq:Aupperbound} and \eqref{eq:Bupperbound} in \eqref{eq:QEMapp_1} yields the upper bound $V^{\QEM}$ of \eqref{eq:VQEM}.

To show the relation in \eqref{eq:relation_3}, we start with the upper bound on term $A$ obtained in \eqref{eq:Aupperbound}. One can lower bound this term as
\begin{align}
   & \frac{N_hD Z^2\Tr(H^2)}{2N_m}\sup_{\theta \in \Real^D, y \in \{1,\hdots,N_h\}}\Ebb_{\sbfbar}[\nu(p_{\sbfbar}(y|\thetabf))] \non \\
   & \geq \frac{N_h D Z^2\Tr(H^2)p_{\sbfbar=\{0\}^D}}{2N_m}\sup_{\theta \in \Real^D, y \in \{1,\hdots,N_h\}}\nu(p^{\Escr}(y|\thetabf)) \non \\
   &= Z^2p_{\sbfbar=\{0\}^D} V^{\Escr} =c_1(\gamma)V^{\Escr},
\end{align} where $V^{\Escr}$ is as defined in \eqref{eq:variance_noisy}. Together with the bound in \eqref{eq:Bupperbound}, we then get the inequality
\begin{align}
    V^{\QEM} \geq c_1(\gamma)V^{\Escr}+c_2(\gamma) \frac{D \lVert H \rVert_{\infty}^2 }{N_c}
\end{align}with $c_2(\gamma)=Z^2$.

Finally, we analyze the scalar functions $c_1(\gamma)$ and $c_2(\gamma)$ when the quantum noise channel in \eqref{eq:noisemodels} is depolarizing. In this case,  \cite[Thm. 2]{takagi2021optimal} gives the equalities
\begin{align}
    Z_d = \frac{1+\bigl( 1-2^{1-2n}\bigr)\Bigl( 1-(1-\gamma)^{1/D}\Bigr)}{(1-\gamma)^{1/D}}
\end{align} and
\begin{align}
    p_d(0) = \frac{2^{2n}-1+(1-\gamma)^{1/D}}{2^{2n}\Bigl(1+(1-2^{1-2n})(1-(1-\gamma)^{1/D}) \Bigr)}
\end{align} for $d=1,\hdots,D$. Note that the inequalities $Z_d\geq 1$ and $\frac{dZ_d}{d \gamma} \geq 0$ holds, whereby $Z_d$ is a non-decreasing function of $\gamma$. Subsequently, we have the equality $c_2(\gamma)=Z^2=(Z_d)^{2D} \geq 1$ and $c_2(\gamma)$ is a non-decreasing function of $\gamma$.

%\section{Proof of Proposition~\ref{prop:1}}\label{app:prop1}
%We start by noting that
%\begin{align}
%V^{\QEM}_1&=\frac{DZ^2 \Tr(H^2)}{2 N_m}\sup_{\thetabf \in \Theta ,y \in \{1,\hdots,N_h\}}\sum_{\vec{i}}p_{\vec{i}}\nu(p^{\vec{i}}(y|\thetabf)) \\
%& \geq \frac{DZ^2 \Tr(H^2)}{2 N_m}\sup_{\thetabf \in \Theta ,y \in \{1,\hdots,N_h\}} p_{\vec{i}_I}\nu(p^{\vec{i}_I}(y|\thetabf))\\
%&=\frac{DZ^2 \Tr(H^2)}{2 N_m}\sup_{\thetabf \in \Theta ,y \in \{1,\hdots,N_h\}} p_{\vec{i}_I}\nu(p^{\Escr}(y|\thetabf)) \\
%& = Z^2 p_{\vec{i}_I} V^{\Escr},
%\end{align} where $V^{\Escr}$ is as defined in \eqref{eq:variance_noisy}; and $\vec{i_I}=(i_1,\hdots,i_D)$ denotes the quantum circuit with the $d$th gate operation as $\Oscr^{\theta_d}_{i_d}=\Iscr \circ \widetilde{\Uscr}^{\theta_d}_d$ for $d=1,\hdots,D$.

We now analyze the term $c_1(\gamma)=Z^2p_{\sbfbar=\{0\}^D}=(Z_d^2p_{d,s=0})^D$, where we have
$
 Z_d^2p_{d,s=0}= \frac{\mathrm{num}}{\mathrm{den}}$ with $\mathrm{num}= (2^{2n+1}-4+2^{1-2n})+(1-\gamma)^{1/D}(5-4/2^{2n}-2^{2n})-(1-\gamma)^{2/D}(1-2^{1-2n})$ and $\mathrm{den}=2^{2n}(1-\gamma)^{2/D}$. The following two properties can be easily verified.
 Firstly, we show that
 \begin{align*}
&\mathrm{num}-\mathrm{den}\non \\&= (2^{2n+1}-4+2^{1-2n})+(1-\gamma)^{1/D}(5-4/2^{2n}-2^{2n})\non \\&\quad-(1-\gamma)^{2/D}(1-2^{1-2n}+2^{2n}) \\
& \stackrel{(a)}\geq (2^{2n+1}-4+2^{1-2n})+(1-\gamma)^{1/D}(5-4/2^{2n}-2^{2n}\non \\&\quad-1+2^{1-2n}-2^{2n})\\
&=(2^{2n+1}-4+2^{1-2n})-(1-\gamma)^{1/D}(-4+2/2^{2n}+2^{2n+1})\\
& \geq 0,
 \end{align*}where inequality $(a)$ follows since $(1-\gamma)^2 \leq (1-\gamma)$ and the last inequality follows since $(1-\gamma)^{1/D}\leq 1$. As a consequence, we have the equality
 $Z_d^2p_{d,s=0}=\mathrm{num}/\mathrm{den} \geq 1$, and thus the inequality $c_1(\gamma)\geq 1$. 
 
 Lastly, it can be verified that
 \begin{align}\frac{d (Z_d^2p_{d,s=0})}{d\gamma}&=\frac{d}{d\gamma} \frac{\mathrm{num}}{\mathrm{den}}=\frac{1}{D(1-\gamma)^{2/D+1}}\Bigl(4-\frac{8}{2^{2n}}+\frac{4}{2^{4n}}\Bigr) \non \\&+\frac{1}{D(1-\gamma)^{1/D+1}}\Bigl(\frac{5}{2^{2n}}-\frac{4}{2^{4n}}-1 \Bigr)\non \\
 & \geq \frac{1}{D(1-\gamma)^{1/D+1}}\Bigl(3-\frac{3}{2^{2n}}\Bigr) \geq 0 \end{align} holds, where the last inequality follows by the inequality $(1-\gamma)^2 \leq (1-\gamma)$.
 Consequently, $Z_d^2p_{d,s=0}$, and thus $c_1(\gamma)$, are non-decreasing functions of $\gamma$.
%Now, assuming that the $n$-qubit Pauli noise model in \eqref{eq:noisemodels} is depolarizing noise, we have from \cite[Theorem 2]{takagi2021optimal} that
%\begin{align}
 %   Z_d = \frac{1+(1-\frac{2}{2^{2n}})\epsilon}{1-\epsilon}, \quad p_{d}(i_d)= \frac{2^{2n}-\epsilon}{2^{2n}(1+(1-2^{1-2n})\epsilon)}
%\end{align}whereby
%\begin{align}
 %    Z^2 p_{\vec{i}_I} =\biggl( \frac{(2^{2n}+(2^{2n}-2)\epsilon)(2^{2n}-\epsilon)}{(2^{2n}(1-\epsilon))^2}\biggr)^D \label{eq:new2}.
%\end{align}It is easy to see that when $\epsilon=0$, $Z^2 p_{\vec{i}_I}=1$.
%from the peeling property \cite{pirandola2019fundamental} of the diamond norm.
%Using \eqref{eq:4} and \eqref{eq:5} in \eqref{eq:3}, we get that
%\begin{align}
 %   & \Ebb[||\hat{g}^{\QEM}_t -g_t||^2] \non \\& \leq \frac{DZ^2 \Tr(H^2)}{2N_c N_m}\Ebb_{\alphab}[\sup_{\thetabf,y}\nu(p^{\alphab}(y|\thetabf))] \non \\& + \frac{Z^2 ||H||_{\infty}^2 ||}{4N_c} \sum_{j=1}^{D}||\Uscr_{j}^{\theta^t_{j,+}}-\Uscr_j^{\theta^t_{j,-}}||_{\diamond}^2
%\end{align}
\section{Additional Experiments}
In Figure~\ref{fig:low_error} of Section~\ref{sec:experiments}, we have evaluated the performance of the parameter iterate $\thetabf^t$ of the SGD  in terms of the loss function $L(\thetabf)$ in \eqref{eq:problem1} which can be estimated in practice given an infinite  measurements on a gate-noise free PQC  during testing. In this section, we assume that during testing, a gate-noise free PQC with the same measurement budget as that during training is available. 
As can be seen from Fig.~\ref{fig:low_error_sameshot}, obtained under the same conditions as Fig.~\ref{fig:low_error}, the finite number of measurements of the PQC made during testing induces a larger variance on the evaluated loss function for all settings.
%Importantly, even when exact (noise-free) gradients are available during training, the finite measurements done during testing to evaluate the cost function result in higher variance. This is in particular prominent when the noise strength is small.
\begin{figure*}[h!]
     \begin{minipage}[c]{0.45 \linewidth}
     \centering
      \includegraphics[scale=0.5,clip=true, trim=0in 0in 0in 0.45in]{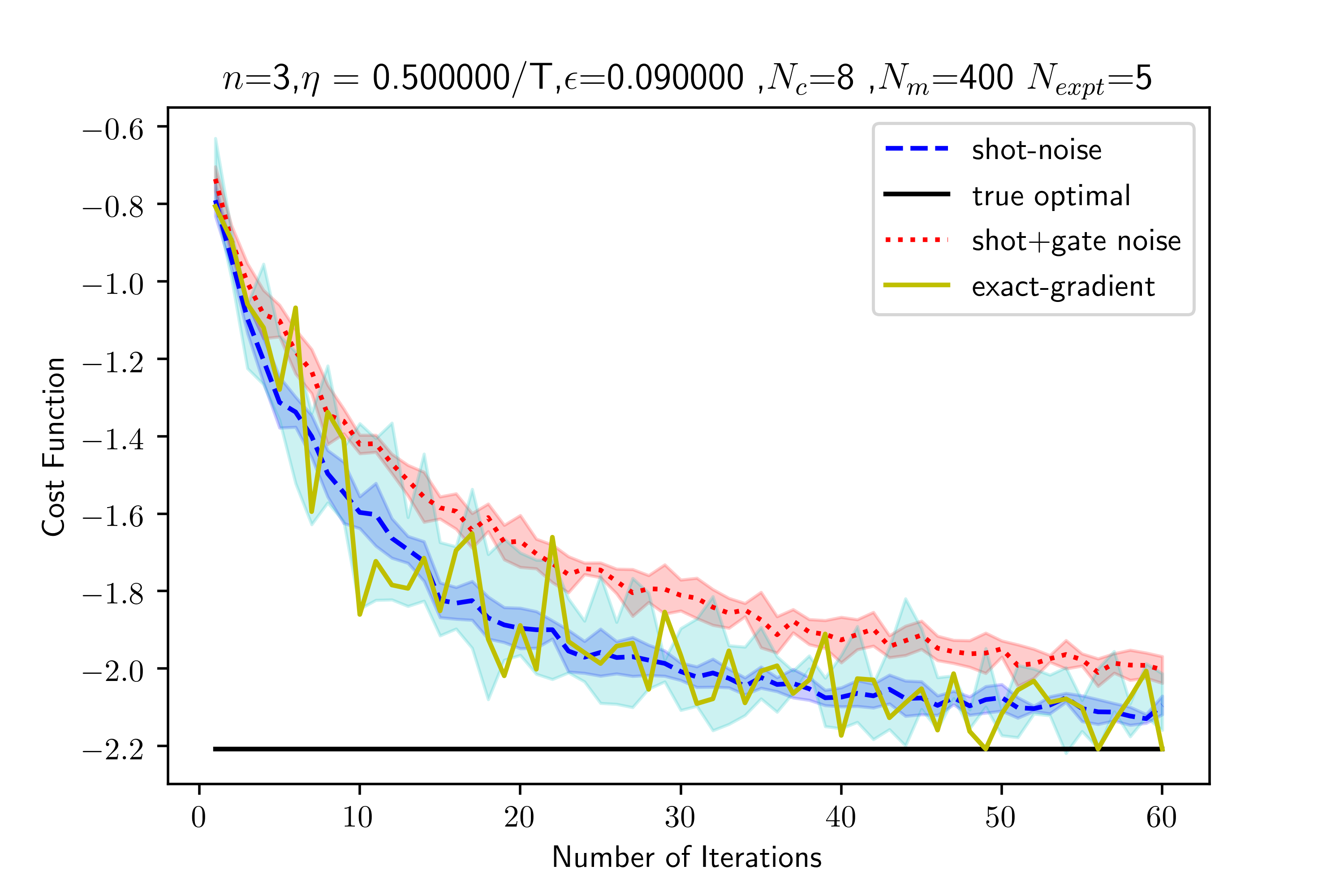}
     % \caption{$\epsilon=0.15$}
      \end{minipage} \hfill
      \begin{minipage}[c]{0.45 \linewidth}
      \centering
     \includegraphics[scale=0.5,clip=true, trim=0in 0in 0in 0.45in]{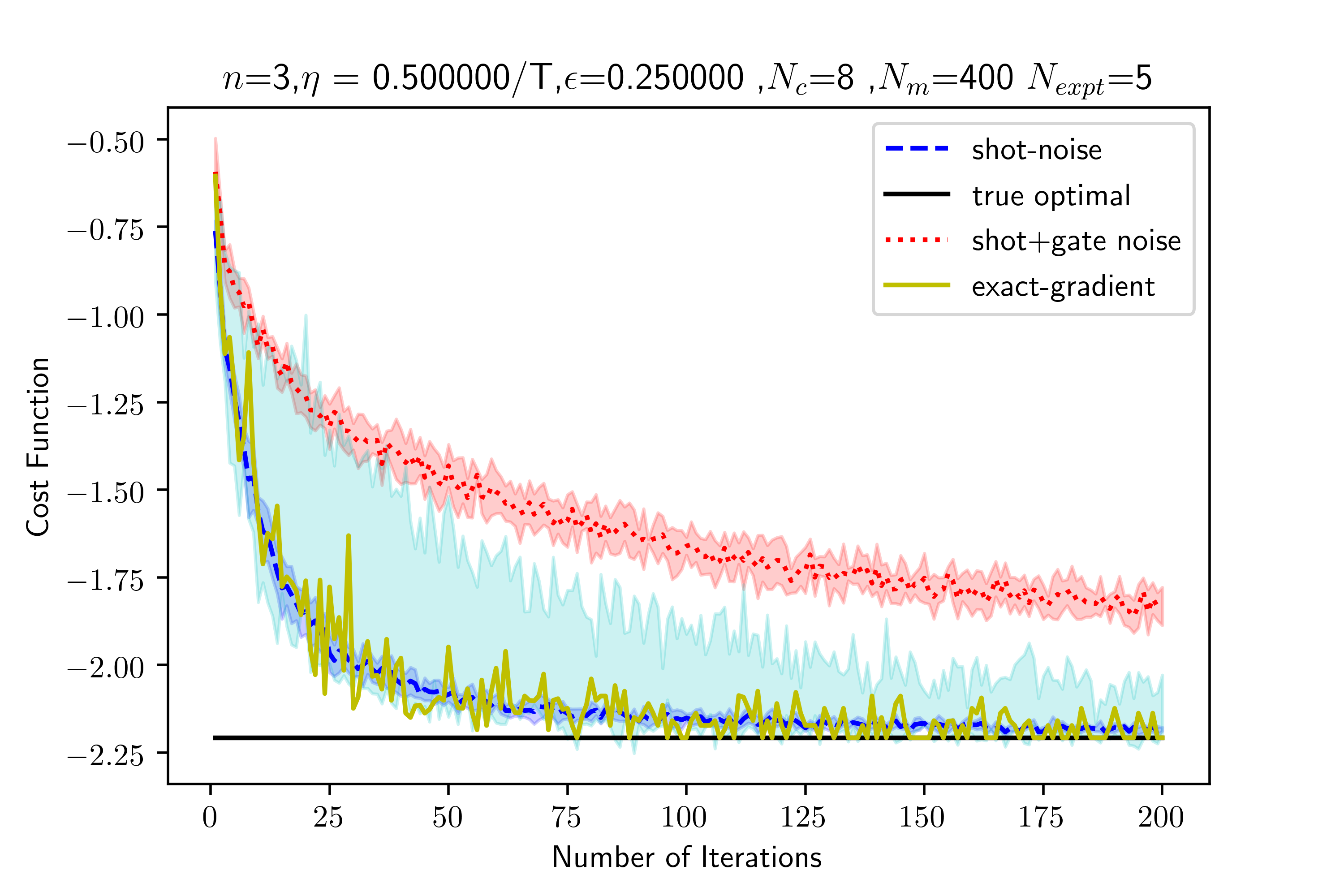}
     %\caption{$\epsilon =0.45$}
     \end{minipage}
         \caption{Convergence analysis of the SGD as a function of the number of iterations when exact gradients can be computed; when only shot-noise is present; when shot and gate noise are present; and when QEM is employed - ($n=3$ qubit system subjected to depolarizing noise on the CNOT gates). The learning rate is set as $\eta_t = 0.5/t$; $N_c=8$; and $N_m=400$. (Left) low noise with $\epsilon=0.09$; (right) large noise with $\epsilon=0.25$. The loss function $L(\thetabf^t)$ is evaluated during testing by taking $N_m$ measurements of a gate-noise free PQC.}
         \label{fig:low_error_sameshot}
\end{figure*}
\bibliography{ref}
\bibliographystyle{IEEEtran}
\end{document}